\DeclareRobustCommand\mos[1]{\mathrel{|}\joinrel
\stackrel{#1}{\mathrel{=}}}
\newcommand{\mosn}[1]{\not \mos{#1}}
\DeclareRobustCommand\vds[1]{\,\mathrel{|}\joinrel\joinrel\joinrel
\frac{#1}{ \ \ \ }}
\DeclareRobustCommand\vss[1]{\,\mathrel{|}\joinrel
\stackrel{#1}{\sim}}
\newcommand{\mb}{\scriptscriptstyle{\Box}}
\newcommand{\lra}{\leftrightarrow}
\DeclareBoldMathCommand\boldlangle{\left\langle}
\DeclareBoldMathCommand\boldrangle{\right\rangle}
\newcommand{\cf}[1]{\boldlangle #1 \boldrangle}
\newcommand{\Sset}{\textit{set}}
\newcommand{\Smem}{\textit{mem}}
\newcommand{\keyword}[1]{\mathrel{\textbf{#1}}}
\newcommand{\Swhile}{\keyword{while}}
\newcommand{\Sdo}{\keyword{do}}
\newcommand{\Sif}{\keyword{if}}
\newcommand{\Sthen}{\keyword{then}}
\newcommand{\Selse}{\keyword{else}}
\newcommand{\Sskip}{\keyword{skip}}
\newcommand{\Sleq}{\keyword{leq}}
\newcommand{\Sasgn}{\keyword{asgn}}
\newcommand{\Splus}{\keyword{plus}}
\newcommand{\Strue}{\keyword{true}}
\newcommand{\Sfalse}{\keyword{false}}
\newcommand{\nom}{N}
\newcommand{\ora}{\mathbf}
\newcommand{\Spgm}{{\it pgm}}
\newcommand{\SB}{{\it Cnd}}
\newcommand{\SC}{{\it Body}}
\theoremstyle{plain}
\newtheorem{theorem}{Theorem}[section]
\newtheorem{lemma}[theorem]{Lemma}
\newtheorem{remark}[theorem]{Remark}
\newtheorem{proposition}[theorem]{Proposition}
\newtheorem{definition}[theorem]{Definition}
\newtheorem{claim}{Claim}
\begin{document}
\normalem

\title{Operational semantics and program verification using many-sorted hybrid modal logic}

\author{Ioana Leu\c stean \and
Natalia Moang\u a \and
Traian Florin \c Serb\u anu\c t\u a\\ 
{\small Faculty of Mathematics and Computer Science, University of Bucharest,}\\
{\small Academiei nr.14, sector 1, C.P. 010014,  Bucharest, Romania} \\
{\small ioana@fmi.unibuc.ro}  \hspace*{0.3cm}  {\small natalia.moanga@drd.unibuc.ro}\\
{\small traian.serbanuta@fmi.unibuc.ro}
}  
\date{}
  
\maketitle              
\begin{abstract}

We propose a general framework to allow: (a) specifying the operational semantics of a programming language; and (b) stating and proving properties about program correctness.  
Our framework is based on a many-sorted system of hybrid modal logic, for which we prove completeness results.
We believe that our approach to program verification improves over the existing approaches within modal logic as (1) it is based on operational semantics which allows for a more natural description of the execution than Hoare's style weakest precondition used by dynamic logic; (2) being multi-sorted, it allows for a clearer encoding of semantics, with a smaller representational distance to its intended meaning.
\medskip

\noindent \textbf{Keywords}: Operational semantics, Program verification, Hybrid modal logic, Many sorted logic 
\end{abstract}

\section{Introduction}
\label{sec:intro}

Program verification within \textit{modal logic}, as showcased by \textit{dynamic logic} \cite{dynamic}, is following the mainstream axiomatic approach proposed by Hoare/Floyd \cite{floyd,hoare}. In this paper, we continue our work from \cite{noi} in exploring the amenability of dynamic logic in particular, and of modal logic in general, to express operational semantics of languages (as axioms), and to make use of such semantics in program verification. 
Consequently, we consider the SMC Machine described by Plotkin \cite{plotkin}, we derive a dynamic logic set of axioms from its proposed transition semantics, and we argue that this set of axioms can be used to derive Hoare-like assertions regarding functional correctness of programs written in the SMC language.

The main idea is to define a general logical system that is powerful enough to represent both the programs and their semantics in a uniform way. With respect to this, we follow the line of \cite{goguen} and the recent work from \cite{rosu}.
 
The logical system that we developed as support for our approach is a {\em many-sorted hybrid polyadic modal logic}, built upon our general many-sorted polyadic modal logic defined in \cite{noi}. We chose a modal setting since, as argued above, through dynamic logic and Hoare logic, modal logic has a long-standing tradition in program verification (see also \cite{popl} for a  modal logic approach to separation logic \cite{separation}) and it is successfully used in specifying and verifying hybrid systems \cite{platzerbook}.   
   
In \cite{noi} we defined a general many-sorted modal logic, generalizing some of the already existing approaches, e.g. \cite{manys6,manys2} (see \cite{noi} for more references on many-sorted modal logic). This system allows us to specify a language and its operational semantics and one can use it to  certify executions as well. However, both its expressivity and its capability  are limited: we were not able to perform symbolic execution and, in particular, we were not able to prove Hoare-style invariant properties for loops. In Remark \ref{rem1}, we point out some theoretical aspects related to these issues.   

In the present paper we employ the procedure of {\em hybridization} on top of our many-sorted modal logic previously defined. We drew our inspiration from 
\cite{platzerhybrid,rosu} for practical aspects, and from the extensive research on {\em hybrid modal logic} \cite{hand,mod} on the theoretical side. 

Our aim was to develop a system that is strong enough to perform all the addressed issues (specification, semantics, verification), but also to keep it as simple as possible from a theoretical point of view. To conclude:  in our setting we are able to associate a sound and complete many-sorted hybrid modal logic to a given language such that both operational semantics and program verification can be performed through logical inference.

We have to make a methodological comment: sometimes nominals are presented as another {\em sort} of atoms (see, e.g.\cite{mod}). Our sorts come from a many-sorted signature $(S,\Sigma)$, as in \cite{goguen}, so  all the formulas (in particular the propositional variables, the state variables, the nominals) are $S$-sorted sets. When we say that the hybrid logic is {\em mono-sorted} we use sorted according to our context, i.e. the sets of propositional variables, nominals and state variables are regular sets and not $S$-sets. 

We recall our many-sorted modal logic \cite{noi} in Section \ref{sec:prel}.  The hybridization is performed in Section \ref{sec:hybridgen}. A concrete language and its operational semantics are defined in Section \ref{sec:app}; we also show how to perform Hoare-style verification. A section on related and future work concludes our paper.

\section{Preliminaries: a many-sorted  modal logic}
\label{sec:prel}


Our language is determined by a fixed, but arbitrary,  many-sorted signature ${\bf \Sigma}=(S, \Sigma)$ and an $S$-sorted set of  propositional variables $P=\{P_s\}_{s\in S}$ such that  $P_s\neq \emptyset$ for any $s\in S$ and  $P_{s_1} \cap P_{s_2} = \emptyset $ for any  $s_1 \neq s_2$ in $S$.
For any $n\in {\mathbb N}$ and $s,s_1,\ldots, s_n\in S$ we denote 
$\Sigma_{s_1\ldots s_n,s}=\{\sigma\in \Sigma\mid \sigma:s_1 \cdots s_n\to s\}$.

The set of formulas of $\mathcal{ML}_{\bf \Sigma}$ is an $S$-indexed family  inductively defined by:
 \vspace*{-0.2cm}
 \begin{center}
 $\phi_s ::=  p\,|\,\neg\phi_s\,|\,{\phi}_s\vee {\phi}_s\,|\, \sigma ({\phi}_{s_1}, \ldots , {\phi}_{s_n} )$
\end{center}
\vspace*{-0.2cm}
where $s\in S$, $p\in P_s$ and  $\sigma \in \Sigma_{s_1 \cdots s_n,s}$.

We use the classical definitions of the derived logical connectors: for any $\sigma\in \Sigma_{s_1 \ldots s_n,s}$ the {\em dual operation} is
$\sigma^{\mb} (\phi_1, \ldots , \phi_n ) := \neg\sigma (\neg\phi_1, \ldots , \neg\phi_n ).$

In the sequel,  by $\phi_s$ we mean that $\phi$ is a formula of sort $s\in S$. Similarly, $\Gamma_s$ means that $\Gamma$ is a set of formulas of sort $s$. When the context uniquely determines  the sort of a state symbol, we shall omit the subscript. 

In order to define the semantics we introduce the $(S,\Sigma)$\textit{-frames} and the $(S,\Sigma)$\textit{-models}.
 An $(S,\Sigma)${\em -frame} is a tuple 
 $\mathcal{F} =({W},(R_\sigma)_{\sigma\in\Sigma})$
 such that:
\begin{itemize}
\item  ${W} =\{ W_s \}_{s\in S}$ is an  $S$-sorted set of worlds and $W_s\neq \emptyset$ for any $s\in S$,
\item  ${R}_{\sigma} \subseteq  W_s \times W_{s_1} \times \ldots \times W_{s_n}$  for any $\sigma \in \Sigma_{s_1 \cdots s_n,s}$.
\end{itemize}

An $(S,\Sigma)$-{\em model based on} $\mathcal{F}$ is a pair  ${\mathcal M}= ({\mathcal F},V)$ where $V =\{V_s\}_{s\in S}$ such that $V_s : P_s \to \mathcal{P}(W_s)$ for any $s\in S$. The model $\mathcal{M}= (\mathcal{F},V)$ will be simply denoted as  $\mathcal{M}= ({W}, (R_\sigma)_{\sigma\in\Sigma},V)$. 

In the sequel we introduce a many-sorted  \textit{satisfaction relation}.
If $\mathcal{M}= (W,(R_\sigma)_{\sigma\in\Sigma},V)$ is an $(S,\Sigma)$-model, $s\in S$, $w\in W_s$ and  $\phi$ is a formula of sort $s$, then the  many-sorted  \textit{satisfaction relation} $\mathcal{M},w\mos{s} \phi$
is inductively defined:
\begin{itemize}
\item $\mathcal{M},w \mos{s} p$ iff $w\in V_s(p)$
\item $\mathcal{M},w \mos{s} \neg \psi$ iff $\mathcal{M},w \not\mos{s}\psi$
\item $\mathcal{M},w \mos{s} \psi_1 \vee \psi_2$ iff $\mathcal{M},w \mos{s} \psi_1$ or $\mathcal{M},w \mos{s} \psi_2$ 
\item if $\sigma\in \Sigma_{s_1 \ldots s_n,s} $,  then $\mathcal{M},w \mos{s} \sigma(\phi_1, \ldots , \phi_n )$ iff there exists  $(w_1,\ldots,w_n) \in W_{s_1}\times\cdots\times W_{s_n}$  such that $\mathcal{R}_{\sigma} ww_1\ldots w_n$ and
 $\mathcal{M},w_i  \mos{s_i} \phi_i$ for any $i \in  [n] $.
\end{itemize}

\begin{definition}[Validity and satisfiability] Let $s\in S$ and assume $\phi$ is a formula of sort $s$. Then $\phi$ is {\em satisfiable} if  ${\mathcal M},w\mos{s}\phi$ for some model $\mathcal M$ and some $w\in W_s$.  The formula $\phi$ is {\em valid} in a model $\mathcal M$ if ${\mathcal M},w\mos{s}\phi$ for any $w\in W_s$; in this case we write ${\mathcal M}\mos{s}\phi$. The formula $\phi$ is valid in a  frame $\mathcal F$  if $\phi$ is valid in all the models based on $\mathcal F$; in this case we write ${\mathcal F}\mos{s}\phi$. Finally, the formula $\phi$ is valid if $\phi$ is valid in all frames; in this case we write $\mos{s}\phi$. 
\end{definition}

\begin{figure}[h]
\centering
{\small
{\bf The system   ${\mathbf K}_{\bf\Sigma}$}

\begin{itemize}
\item For any $s\in S$, if $\alpha$ is a formula of sort $s$ which is a theorem in propositional logic, then $\alpha$ is an axiom. 
\item Axiom schemes: for any $\sigma\in \Sigma_{s_1\cdots s_n,s}$ and for any formulas $\phi_1,\ldots, \phi_n,\phi,\chi$ of appropriate sorts, the following formulas are axioms:

\begin{tabular}{rl}
$(K_\sigma)$ & $\sigma^{\mb}(\ldots,\phi_{i-1},\phi\rightarrow\chi,\phi_{i+1}, \ldots)\to$\\ &\hspace*{0.5cm}$( \sigma^{\mb}(\ldots ,\phi_{i-1}, \phi, \phi_{i+1},\ldots) \to \sigma^{\mb}(\ldots ,\phi_{i-1}, \chi, \phi_{i+1},\ldots))$\\
$(Dual_\sigma)$& $\sigma (\psi_1,\ldots ,\psi_n )\leftrightarrow \neg \sigma^{\mb} (\neg \psi_1,\ldots ,\neg \psi_n )$
\end{tabular}

\item Deduction rules: {\em Modus Ponens} and {\em Universal Generalization}
\medskip

\begin{tabular}{rl}
$(MP)$ & if $\vds{s}\phi$ and $\vds{s}\phi\to \psi$ then 
$\vds{s}\psi$\\
$(UG)$ &  if $\vds{s_i}{\phi}$ then $\vds{s}\sigma^{\mb} (\phi_1, .. ,\phi, ..\phi_n)$
\end{tabular}
\end{itemize} }
\caption{$(S,\Sigma)$ modal logic}\label{fig:k}
\end{figure}

The {\em set of theorems} of ${\mathbf K}_{\bf \Sigma}$ is the least set of formulas that contains all the axioms and it is closed under deduction rules. Note that the set of theorems is obviously closed under {\em $S$-sorted uniform substitution} (i.e. propositional variables of sort $s$ are uniformly replaced by formulas of the same sort). If $\phi$ is a theorem of sort $s$ write \mbox{$\vds{s}_{{\mathbf K}_{\bf\Sigma}}\phi$.} Obviously, ${\mathbf K}_{\bf \Sigma}$ is a  generalization of the modal system $\mathbf K$ (see \cite{mod} for the mono-sorted version).  

The distinction between local and global deduction from the mono-sorted setting (see \cite{mod}) is deepened in our version: \textit{locally}, the conclusion and the hypotheses have the same sort, while \textit{globally}, the set of hypotheses is a many-sorted set. In the sequel we only consider the local setting. 

\begin{definition}[Local deduction]\cite{noi} If $s\in S$ and $\Gamma_s\ \cup\ \{\phi\}$  is a set of formulas of sort $s$, then we say that $\phi$ is ({\em locally}) {\em provable from} $\Gamma_s$  if there are $\gamma_1, \ldots, \gamma_n\in \Gamma_s$ such that   
\mbox{$\vds{s}_{{\mathbf K}_{\bf\Sigma}}(\gamma_1\wedge \ldots \wedge \gamma_n) \to \phi$.} In this case we write \mbox{$\Gamma_s \vds{s}_{{\mathbf K}_{\bf\Sigma}}\phi$.} 
\end{definition}

The  construction of the canonical model is a straightforward generalization of the mono-sorted setting. For more details, we refer to \cite{noi}. The last result we recall is the (strong) completeness theorem with respect to the class of all frames. 

\begin{theorem}\cite{noi} Let $\Gamma_s$ be a set of formulas of set $s$. If $\Gamma_s$ is a consistent set in ${\mathbf K}_{\bf \Sigma}$ then 
$\Gamma_s$ has a model. Moreover, if $\phi$ is a formula of sort $s$, then $\Gamma_s\models_{{\mathbf K}_{\bf\Sigma}}\phi$ {\mbox iff } $\Gamma_s\vdash_{{\mathbf K}_{\bf\Sigma}}\phi$,
where $\Gamma_s\models_{{\mathbf K}_{\bf\Sigma}}\phi$ denotes the fact that any model of $\Gamma$ is also a model of $\phi$. 
\end{theorem}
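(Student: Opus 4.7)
The plan is to mirror the Henkin completeness proof for mono-sorted modal logic (see \cite{mod}), working sort-by-sort and being careful with the polyadic operators. I would first build the canonical model $\mathcal{M}^c = (W^c, (R^c_\sigma)_{\sigma \in \Sigma}, V^c)$, where for each $s' \in S$ the set $W^c_{s'}$ consists of all maximal $\mathbf{K}_{\bf\Sigma}$-consistent sets of formulas of sort $s'$; for $\sigma \in \Sigma_{s_1 \cdots s_n, s}$ the canonical relation $R^c_\sigma \Delta \Delta_1 \cdots \Delta_n$ holds iff $\sigma(\psi_1, \ldots, \psi_n) \in \Delta$ whenever $\psi_i \in \Delta_i$ for each $i$; and $V^c_{s'}(p) = \{\Delta \in W^c_{s'} \mid p \in \Delta\}$ for $p \in P_{s'}$. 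A Lindenbaum-style extension lemma, that any consistent set of sort $s$ extends to a maximal consistent set of the same sort, goes through exactly as in the mono-sorted case since in the local setting the deduction stays inside a single sort and the relevant Boolean closure is purely propositional.

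The crux is the Truth Lemma $\mathcal{M}^c, \Delta \mos{s} \phi$ iff $\phi \in \Delta$, proved by induction on $\phi$. The propositional and negation cases are immediate from maximal consistency. For the modal case $\sigma(\phi_1, \ldots, \phi_n)$, the direction $(\Leftarrow)$ follows from the definition of $R^c_\sigma$ together with the inductive hypothesis on each $\phi_i$, but the direction $(\Rightarrow)$ requires an existence lemma: if $\sigma(\phi_1, \ldots, \phi_n) \in \Delta$, then there exist $\Delta_i \in W^c_{s_i}$ with $\phi_i \in \Delta_i$ such that $R^c_\sigma \Delta \Delta_1 \cdots \Delta_n$. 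This is the main obstacle beyond the mono-sorted unary proof: the $n$ witnesses must be constructed simultaneously and must jointly satisfy the $n$-ary canonical relation.

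I would handle this coordinate-by-coordinate. Having already selected maximal consistent $\Delta_1, \ldots, \Delta_{i-1}$ with $\phi_j \in \Delta_j$ and satisfying the partial relational condition $\sigma(\chi_1, \ldots, \chi_{i-1}, \phi_i, \ldots, \phi_n) \in \Delta$ for all $\chi_j \in \Delta_j$, I would enumerate the formulas of sort $s_i$ and build $\Delta_i$ by the usual Lindenbaum step starting from $\{\phi_i\}$. Preservation of the partial relational condition at each addition of a new formula reduces, via $(K_\sigma)$ applied in the $i$-th coordinate (together with $(Dual_\sigma)$ to pass between $\sigma$ and $\sigma^{\mb}$), to the consistency of the set currently being extended against the finitely many $\chi_j$'s that have already been committed. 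This bookkeeping, although somewhat heavy, is completely parallel to the standard polyadic argument and is where most of the verification effort goes.

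Finally, both conclusions follow. For the first, a consistent $\Gamma_s$ extends to a maximal consistent $\Gamma^*_s$, and by the Truth Lemma $\mathcal{M}^c, \Gamma^*_s \mos{s} \gamma$ for every $\gamma \in \Gamma_s$, so $\mathcal{M}^c$ is a model of $\Gamma_s$ at the world $\Gamma^*_s$. For the equivalence $\Gamma_s \models \phi$ iff $\Gamma_s \vdash \phi$, soundness is routine by induction on proofs, checking that each axiom scheme in Figure~\ref{fig:k} is valid and that $(MP)$ and $(UG)$ preserve validity; completeness is the contrapositive of the first part, since $\Gamma_s \not\vdash \phi$ makes $\Gamma_s \cup \{\neg \phi\}$ consistent, hence satisfiable, contradicting $\Gamma_s \models \phi$.
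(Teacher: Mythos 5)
Your proof follows exactly the route the paper intends: the theorem is recalled from \cite{noi} without proof, the text noting only that the canonical model construction is a straightforward generalization of the mono-sorted setting, and your sort-indexed canonical model, per-sort Lindenbaum lemma, and coordinate-by-coordinate Existence Lemma for the polyadic modalities (via $(K_\sigma)$ and $(Dual_\sigma)$) are precisely that generalization. The argument is correct and matches the standard approach the paper relies on.
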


\begin{remark}[Problems]\label{rem1}
The many-sorted modal logic allows us to define both the syntax and the semantics of a programming language (see \cite{noi} for a complex example). However, there are few issues, both theoretical and operational, that we could not overcome:
\vspace*{-0.2cm} 
 \begin{itemize}
 \item[(i1)] the logic can be used to certify executions, but not to perform symbolic verification; in particular, in order to prove the invariant properties for loops, the existential binder is required; 
\item[(i2)] the completeness theorem for extensions of 
${\mathbf K}_{\bf\Sigma}$ from \cite{noi} only refers to model completeness, but says nothing about frame completeness (see \cite{goranko} for a general discussion  on this distinction); 
\item[(i3)] the sorts are completely isolated formally, but in our example elements of different sorts have a rich interaction.
 \end{itemize}
These issues will be adressed in the following sections.
 \end{remark}

\section{Many-sorted hybrid modal logic}
\label{sec:hybridgen}

The hybridization of our many-sorted modal logic is developed  using a combination of ideas and techniques from \cite{hand,pureax,hyb,mod,goranko,goranko2}. Hybrid logic is defined on top of modal logic by  adding {\em nominals}, {\em states variables} and specific operators and binders. 

Nominals  allow us to directly refer the worlds (states) of a model, since they are evaluated in singletons in any model. However, a nominal may refer different worlds in different models. In the sequel we  introduce the {\em constant nominals}, which are evaluated to singletons, but  they refer to the same world (state) in all models.  Our example for constant nominals are \texttt{true} and \texttt{false} from  Section \ref{sec:app}. 

\begin{definition}[Signature with constant nominals]
A {\sf signature with constant nominals} is a triple $(S,\Sigma, {\nom})$ where $(S,\Sigma)$ is a many-sorted signature and ${\nom}=({\nom}_s)_{s\in S}$ is an $S$-sorted set of constant nominal symbols. In the sequel, we denote 
${\bf \Sigma}=(S,\Sigma, {\nom})$. 
\end{definition}

As before, the sorts will be denoted by $s$, $t$, $\ldots$ and by ${\rm PROP}=\{{\rm PROP}_s\}_{s\in S}$, ${\rm NOM}=\{{\rm NOM}_s\}_{s\in S}$ and ${\rm SVAR}=\{{\rm SVAR}_s\}_{s\in S}$ we will denote some countable $S$-sorted sets. The elements of ${\rm PROP}$ are ordinary propositional variables and they will be denoted $p$, $q$,$\ldots$; the elements of ${\rm NOM}$ are called {\em nominals} and they will be denoted by $j$, $k$, $\ldots$; the elements of ${\rm SVAR}$ are called {\em state variables} and they are denoted $x$, $y$, $\ldots$.  We shall assume  that for any distinct sorts $s\neq t\in S$, the corresponding sets of propositional variables, nominals and state variables are distinct.    A {\em state symbol} is a nominal, a constant nominal or a state variable.. 

As in the mono-sorted case, nominals and state variables will be semantically constrained: they are  evaluated in singleton, which means they will  always refer to a unique world of our model. In addition, the constant nominals will refer to the same world(state) in any evaluation, so they will be defined at the frames' level.

In the mono-sorted setting, starting with a modal logic,  the  simplest hybrid system  is obtained by adding nominals alone. However, the {\em basic hybrid system} is obtained by adding the {\em satisfaction modality} $@_j\phi$ (which states that $\phi$ is true at the world denoted by the nominal $j$). The most powerful hybrid systems are obtained by further adding the binders $\forall$ and $\exists$ that bind state variables to worlds, with the expected semantics \cite{hand,blsel,hyb}. The subsequently defined systems ${\mathcal H}_{\bf \Sigma}(@)$ and ${\mathcal H}_{\bf \Sigma}(@,\forall)$ develop the hybrid modal logic in our many-sorted setting. 

\medskip

Note that, whenever the context is clear, we'll simply write  $\mos{s}$ instead of $\mos{s}_{{\mathcal H}_{\bf \Sigma}(@)}$  or $\mos{s}_{{\mathcal H}_{\bf \Sigma}(@,\forall)}$, and $\vds{s}$  instead of $\vds{s}_{{\mathcal H}_{\bf \Sigma}(@)}$  or $\vds{s}_{{\mathcal H}_{\bf \Sigma}(@,\forall)}$. We will further assume that the sort of a formula (set of formulas) is implied by a concrete context but, whenever  necessary, we will use subscripts to fix the sort of a symbol: $x_s$ means that $x$ is a state variable of sort $s$, $\Gamma_s$ means that $\Gamma$ is a set of formulas of sort $s$, etc. 

\newpage
\begin{definition}[Formulas] For any $s\in S$ we define the formulas of sort $s$:\\
\begin{tabular}{ll}
- for ${\mathcal H}_{\bf \Sigma}(@)$: & 
$\phi_s :=  p\mid j\mid \neg \phi_s \mid  \phi_s \vee \phi_s \mid  \sigma(\phi_{s_1}, \ldots, \phi_{s_n})_s \mid  @_k^s\phi_t$\\
- for ${\mathcal H}_{\bf \Sigma}(@,\forall)$: & 
$\phi_s :=  p\mid j\mid y_s\mid \neg \phi_s \mid\phi_s \vee \phi_s \mid \sigma(\phi_{s_1}, \ldots, \phi_{s_n})_s
 \mid  @_k^s\phi_t \mid  \forall x_t\, \phi_s$
\end{tabular} 
\medskip

\noindent Here, $p\in {\rm PROP}_s$, $j\in {\rm NOM}_s\cup {\nom}_s$, $t\in S$, $k\in {\rm NOM}_t\cup{\nom}_t$,$x\in {\rm SVAR}_t$, $y\in {\rm SVAR}_s$ and  $\sigma\in \Sigma_{s_1\cdots s_n,s}$. For any $\sigma\in \Sigma_{s_1\ldots s_,s}$, the dual formula $\sigma^{\mb}(\phi_1,\ldots, \phi_n)$ is defined as in Section \ref{sec:prel}. We also define the  {dual binder} $\exists$: for any $s,t\in S$, if $\phi$ is a formula of sort  $s$ and $x$ is a state variable of sort $t$, then  $\exists x\, \phi := \neg\forall x\, \neg\phi$ is a formula of sort $s$.  The notions of {\sf free state variables} and {\sf bound state variables} are defined as usual.  
\end{definition}

\begin{remark}[Expressivity]\label{rem3}
As a departure from our sources of inspiration, we only defined the satisfaction operators $@_j$ for nominals, and not for state variables. Hence, $@_x$ is not a valid formula in our logic. Our reason was to keep the system as "simple" as possible, but strong enough to overcome the problems encountered in the non-hybrid setting (see Remarks \ref{rem1}). More issues concerning expressivity are analyzed in  Section \ref{sec:final}.   
\end{remark}

One important remark is the definition of the satisfaction modalities: 
{\em if $k$ and $\phi$ are a nominal and respectively, a formula of the sort $t\in S$, then we define a family of satisfaction operators 
$\{@_k^s\phi\}_{s\in S}$} such that $@_k^s\phi$ is a formula of sort $s$ for any $s\in S$. This means that $\phi$ is true at the world denoted by $k$ on the sort $t$ and is acknowledged on any sort $s\in S$. So, our sorted worlds are not isolated anymore, both from a syntactic and a semantic point of view.

\begin{definition}
If  ${\bf \Sigma}=(S,\Sigma, {\nom})$ then a ${\bf \Sigma}$-{\sf frame} is ${\mathcal F}=(W, (R_\sigma)_{\sigma \in \Sigma}, {\nom}^{\cal F})$  where $(W, (R_\sigma)_{\sigma \in \Sigma})$ is an $(S,\Sigma)$-frame and ${\nom}^{\cal F}=({\nom}_s^{\cal F})_{s\in S}$ and ${\nom}_{s}^{\cal F}=(w^c)_{c\in {\nom}_s}$ $\subseteq W_s$ for any $s\in S$. We will further assume that distinct constant nominals have distinct sorts, so we shall simply write ${\nom}^{\cal F}=(w^c)_{c\in {\nom}}$.
\end{definition}

\begin{definition}[The satisfaction relation in  ${\mathcal H}_{\bf \Sigma}(@)$]
A {\em (hybrid) model} in ${\mathcal H}_{\bf \Sigma}(@)$ is a triple
\vspace*{-0.2cm}
\begin{center}
${\mathcal M}=(W, (R_\sigma)_{\sigma\in\Sigma}, (w^c)_{c\in {\nom}}, V)$ 
\end{center}

\noindent where 
$V:{\rm PROP}\cup{\rm NOM}\to {\mathcal P}(W)$ is an $S$-sorted valuation such that $V_s(k)$ is a singleton for any $s\in S$ and $k\in {\rm NOM}_s$.  If $V$ is an $S$-sorted evaluation, we define $V^{\nom}:{\rm PROP}\cup {\rm NOM}\cup \nom\to {\cal P}(W)$ by $V^{\nom}_s(c)=\{w^c\}$ for any $s\in S, c\in{\nom}_s$ and $V^{\nom}_s(v)=V_s(v)$ otherwise. 

 The satisfaction relation  for nominals, constant nominals and satisfaction operators is defined as follows:
\begin{itemize}
\item $\mathcal{M},w \mos{t} k$ if and only if $V^{\nom}_t(k)=\{w\}$,
\item $\mathcal{M},w' \mos{s} @_k^s\phi$ if and only if $\mathcal{M},w \mos{t} \phi$ where $V^{\nom}_t(k)=\{w\}$.
\end{itemize}
Here $s,t\in S$,  $w\in W_t$, $w'\in W_s$,  $k\in {\rm NOM}_t\cup {\rm {\nom}}_t$ and $\phi$ is a formula of sort $t$. 
\end{definition}
\noindent Satisfiability and validity in ${\mathcal H}(@)$ are defined  as in  Section \ref{sec:prel}.

\medskip

In order to define the semantics for ${\mathcal H}_{\bf \Sigma}(@,\forall)$ more is needed. Given a model ${\mathcal M}=(W, (R_\sigma)_{\sigma\in\Sigma}, (w^c)_{c\in\nom}, V)$, an {\em assignment} is an $S$-sorted  map $g : {\rm SVAR} \rightarrow W$.  If $g$ and $g'$ are assignment functions $s\in S$ and $x\in SVAR_s$ then we say that $g'$ is an {\em $x$-variant} of $g$ (and we write $g'\stackrel{x}{\sim} g$)  if $g_t=g'_t$ for 
$t\neq s\in S$ and  $g_s(y)=g'_s(y)$ for any $y\in SVAR_s$, $y \neq x$. 

\begin{definition}[The satisfaction relation in  ${\mathcal H}_{\bf \Sigma}(@,\forall)$] In the sequel
\vspace*{-0.2cm}
\begin{center}
 ${\mathcal M}=(W, (R_\sigma)_{\sigma\in\Sigma}, (w^c)_{c\in\nom}, V)$
\end{center}
\vspace*{-0.2cm} 
 is a model and $g:{\rm SVAR}\to W$ an $S$-sorted assignment. The satisfaction relation is defined as follows:
\begin{itemize}
\item $\mathcal{M},g,w \mos{s} a$, if and only if $w\in V^{\nom}_s(a)$, where $a\in {\rm PROP_s}\cup {\rm NOM_s}\cup {\nom}_s$,
\item $\mathcal{M},g,w \mos{s} x$, if and only if $w=g_s(x)$, where $x\in {\rm SVAR}_s$,
\item $\mathcal{M},g,w \mos{s} \neg \phi$, if and only if $\mathcal{M},g,w \mosn{s}\phi$
\item $\mathcal{M},g,w \mos{s} \phi \vee \psi$, if and only if $\mathcal{M},g,w \mos{s} \phi$ or $\mathcal{M},g,w \mos{s} \psi$ 
\item if $\sigma\in\Sigma_{s_1\ldots s_n,S}$ then $\mathcal{M},g,w \mos{s} \sigma(\phi_1, \ldots , \phi_n )$, if and only if there is \\$(w_1,\ldots,w_n) \in W_{s_1}\times\cdots\times W_{s_n}$ such that  $R_{\sigma} ww_1\ldots w_n$ and $\mathcal{M},g,w_i  \mos{s_i} \phi_i$ for any $i \in [n]$,
\item $\mathcal{M},g,w \mos{s} @_k^s\phi$ if and only if $\mathcal{M},g,u\mos{t} \phi$ where $k\in {\rm NOM}_t\cup {\nom}_t$,  $\phi$ has the sort $t$ and $V^{\nom}_t(k)=\{u\}$,
 \item $\mathcal{M},g,w \mos{s} \forall x\,\phi$, if and only if  $\mathcal{M},g',w \mos{s} \phi$ for all $g'\stackrel{x}{\sim} g$.

Consequently, 
\item $\mathcal{M},g,w \mos{s} \exists x\, \phi$, if and only if $\exists g'( g' \stackrel{x}{\sim} g \ and \  \mathcal{M},g',w \mos{s} \phi)$.
\end{itemize}
 \end{definition}

Following the mono-sorted setting, satisfiability in ${\mathcal H}(@,\forall)$ is defined as follows:  a formula $\phi$ of sort $s\in S$ is {\em satisfiable} if  ${\mathcal M},g, w\mos{s}\phi$ for some model $\mathcal M$, some assignment $g$  and some $w\in W_s$. Consequently,   the formula $\phi$ is {\em valid} in a model $\mathcal M$ if ${\mathcal M},g, w\mos{s}\phi$ for any assignment $g$ and any  $w\in W_s$. One can speak about validity in a  frame as in  Section \ref{sec:prel}.

In the presence of nominals, we can speak about  {\em named models} and {\em pure formulas}, as in \cite{mod}[Section 7.3].

\begin{definition}[Named models and pure formulas]  A formula is {\sf pure} if it does not contain propositional variables.  A {\sf pure instance}  of a formula is obtained by  is obtained by uniformly substituting nominals for nominals of the same sort. A model ${\mathcal M}=(W, (R_\sigma)_{\sigma\in\Sigma},(w^c)_{c\in{\nom}} V)$ is {\sf named} if for any sort $s\in S$ and  world $w\in W_s$ there exists $k\in {\rm NOM}_s\cup{\rm {\nom}}_s$ such that $w=V^{\nom}_s(k)$.
\end{definition}

As in the mono-sorted case,  pure formulas and named models are important since they give rise to strong completeness results with respect to the class of frames they define.  

\begin{proposition}[Pure formulas in ${\mathcal H}_{\bf \Sigma}(@)$]\label{prop:pure1}
Let ${\mathcal M}=(W, (R_\sigma)_{\sigma\in\Sigma}, (w^c)_{c\in {\nom}}, V)$  be a named model, ${\mathcal F}=(W, (R_\sigma)_{\sigma\in\Sigma}, (w^c)_{c\in{\nom}})$ the corresponding frame and $\phi$ a pure formula of sort $s$. Then
${\mathcal F}\mos{s}\phi$ if and only if ${\mathcal M}\mos{s}\psi$ for  any $\psi$ that is a pure instance of $\phi$.    
\end{proposition}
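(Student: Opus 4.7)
The plan is to prove this via the standard hybrid-logic correspondence between pure instances and alternative valuations. The key technical ingredient is a substitution lemma: for any pure formula $\phi$ of sort $s$, any model $\mathcal{M}=(W,(R_\sigma),(w^c)_{c\in\nom},V)$ based on $\mathcal{F}$, and any map $\theta$ sending each ordinary nominal occurring in $\phi$ to a state symbol (i.e. nominal or constant nominal) of the same sort, we have
\[ \mathcal{M},w \mos{s} \theta(\phi) \quad \text{iff} \quad \mathcal{M}_\theta, w \mos{s} \phi, \]
where $\mathcal{M}_\theta$ has the same frame as $\mathcal{M}$ but the nominal valuation modified so that each $k$ in $\phi$ is reinterpreted as $V^{\nom}(\theta(k))$. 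This is proved by straightforward induction on $\phi$: the atomic cases reduce to $V^{\nom}_t(\theta(k))=V^{\nom}_t(\theta(k))$; the Boolean and modal cases are immediate (the frame is unchanged, so the accessibility relations coincide); and for $@_k^s\psi$ one observes that both sides unfold by looking at the (unique) world denoted by $\theta(k)$ and then recursing on $\psi$.

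For the $(\Rightarrow)$ direction, suppose $\mathcal{F}\mos{s}\phi$ and let $\psi=\theta(\phi)$ be a pure instance. Pick an arbitrary $w\in W_s$. Since $\mathcal{M}_\theta$ is a model based on $\mathcal{F}$, frame validity gives $\mathcal{M}_\theta,w\mos{s}\phi$, and the substitution lemma yields $\mathcal{M},w\mos{s}\psi$. As $w$ was arbitrary, $\mathcal{M}\mos{s}\psi$.

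For the $(\Leftarrow)$ direction, assume every pure instance of $\phi$ is valid in $\mathcal{M}$. Let $\mathcal{N}=(W,(R_\sigma),(w^c)_{c\in\nom},V^\mathcal{N})$ be any model based on $\mathcal{F}$ and $w\in W_s$. Enumerate the ordinary nominals $k_1,\ldots,k_n$ appearing in $\phi$ and let $\{u_i\}=V^{\mathcal{N},\nom}_{t_i}(k_i)$, where $t_i$ is the sort of $k_i$. Because $\mathcal{M}$ is named, for each $i$ there is a state symbol $k'_i\in \mathrm{NOM}_{t_i}\cup\nom_{t_i}$ with $V^{\mathcal{M},\nom}_{t_i}(k'_i)=\{u_i\}$. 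Define $\theta(k_i)=k'_i$ and let $\psi=\theta(\phi)$. By construction, the models $\mathcal{M}_\theta$ and $\mathcal{N}$ share the frame $\mathcal{F}$ and assign the same singletons to each nominal occurring in $\phi$, so a straightforward induction on $\phi$ (using that $\phi$ has no propositional variables and that constant nominals are fixed by the frame) gives $\mathcal{M}_\theta,w\mos{s}\phi$ iff $\mathcal{N},w\mos{s}\phi$. The hypothesis gives $\mathcal{M},w\mos{s}\psi$, the substitution lemma gives $\mathcal{M}_\theta,w\mos{s}\phi$, hence $\mathcal{N},w\mos{s}\phi$; as $\mathcal{N}$ and $w$ were arbitrary, $\mathcal{F}\mos{s}\phi$.

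The only real subtlety is bookkeeping for the many-sorted setting and the role of constant nominals: the substitution $\theta$ must respect sorts, and constant nominals are immune to substitution since they are part of the frame, which is exactly what makes the agreement argument between $\mathcal{M}_\theta$ and $\mathcal{N}$ work uniformly over all sorts. Beyond that, both directions are routine consequences of the substitution lemma, so I expect no serious obstacle.
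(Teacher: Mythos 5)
Your proof is correct and follows essentially the same route as the paper's: exploit the fact that $\mathcal{M}$ is named to replace, sort by sort, each nominal occurring in $\phi$ by a state symbol that $\mathcal{M}$ interprets as the world an arbitrary valuation assigns to it, and pass between the resulting pure instance and the reinterpreted model via a substitution lemma. Yours is in fact the more complete rendering: the paper argues only (by contradiction) that validity of all pure instances in $\mathcal{M}$ forces frame validity of $\phi$, leaving the converse direction and the underlying substitution lemma implicit, both of which you spell out (including the point, glossed over in the paper, that the substituting symbol may have to be a constant nominal when a world is named only by one).
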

\begin{proof}
Let $\phi$ be a pure formula of sort $s$ and suppose $\mathcal{F}  \mosn{s} \phi$. Then there exist a valuation $V'$  and some state $w \in W_s$ in the model $\mathcal{M}'=(\mathcal{F}, V')$ such that $\mathcal{M}',w  \mosn{s} \phi$.

On each sort $s \in S$ we will notate $j^s_1, \ldots, j^s_t$ all the nominals occurring in $\phi$. But because we are working in a named model, $V$ labels every state of any sort in $\mathcal{F}$ with a nominal of the same sort. Hence, on each sort $s \in S$ there exist $k^s_1, \ldots, k^s_t$ nominals such that $V_s^{\nom}(j^s_1)=V'_s(k^s_1)$, $\ldots$ ,$V_s^{\nom}(j^s_t)=V'_s(k^s_t)$. Therefore, if $\mathcal{M}', w \mosn{s} \phi$ and $\psi$ is obtained by substituting on each sort each nominal $j^s_i$ with the corresponding one $k^s_i$, then $\mathcal{M}, w \mosn{s} \psi$.

But $\phi$ is a pure formula, and by substituting the nominals contained in the formula with other nominals of the same sort, the new instance it is also a pure formulas like $\psi$. Therefore, by hypothesis, we have $\mathcal{M},v \mos{s} \psi$ for any $v \in W_s$. But also $w\in W_s$, hence $\mathcal{M},w \mos{s} \psi$, and we have a contradiction.
\end{proof}

Can we prove a similar result for the system ${\mathcal H}_{\bf \Sigma}(@,\forall)$? We give a positive answer to this question, inspired by the discussion on existential saturation rules from \cite{pureax}[Lemma 1]. In order to do this, we define the {\em $\forall$-pure} formulas and we characterize frame satisfiability for such formulas. 
As consequence,  Propositions \ref{prop:pure1} and \ref{prop:pure2} will lead to completeness results with respect to frame validity.

\begin{definition}
In ${\mathcal H}_{\bf \Sigma}(@,\forall)$, we say that a formula is {\em $\forall$-pure} if it is pure or it has the form $\forall x_1\ldots \forall x_n\psi$, where $\psi$ contains no propositional variables and the only state symbols from $\psi$ are constant nominals or state variables  from  $\{x_1,\ldots, x_n\}$. 
\end{definition}

\begin{proposition}[Pure formulas in ${\mathcal H}_{\bf \Sigma}(@,\forall)$]\label{prop:pure2}
Let ${\mathcal M}$ be a named model where ${\mathcal M}=(W, (R_\sigma)_{\sigma\in\Sigma}, (w^c)_{c\in {\nom}}, V)$, ${\mathcal F}=(W, (R_\sigma)_{\sigma\in\Sigma}, (w^c)_{c\in  {\nom}})$ the corresponding frame and $\phi$ a $\forall$-pure formula of sort $s$. Then ${\mathcal F}\mos{s}\phi$ if and only if ${\mathcal M}\mos{s}\phi$.
\end{proposition}

\begin{proof}
Assume that  ${\mathcal M}\mos{s}\forall x_1 \ldots \forall x_n  \psi$ and ${\mathcal F}\not \mos{s}\forall x_1 \ldots \forall x_n  \psi$  Hence, for some model ${\mathcal M}'$, assignment $g'$ and some $w\in W_{s}$ of sort $s$, ${\mathcal M}', g', w\not \mos{s}\psi$. Since the only state symbols from $\psi$ are constant nominals or state variables  from  $\{x_1,\ldots, x_n\}$, we get ${\mathcal M}, g', w\not \mos{s}\psi$.  But this contradicts the hypothesis  ${\mathcal M}\mos{s}\forall x_1 \ldots \forall x_n  \psi$. In conclusion, ${\mathcal F}\not \mos{s}\forall x_1 \ldots \forall x_n  \psi$.

%
\end{proof}

We are ready now to define the deductive systems of our logics.
The deductive systems for  ${\mathcal H}_{\bf \Sigma}(@)$ and ${\mathcal H}_{\bf \Sigma}(@,\forall)$ are presented in Figure \ref{fig:unu}.

\begin{figure}[!ht]
\centering

{\bf The system} ${\mathcal H}_{\bf\Sigma}(@,\forall)$
\begin{itemize}
\item The axioms and the deduction rules of ${\mathcal K}_{\bf\Sigma}$

\item Axiom schemes: any formula of the following form is an axiom, where $s,s',t$ are sorts,  $\sigma\in\Sigma_{s_1\cdots s_n,s}$,  $\phi,\psi, \phi_1,\ldots,\phi_n$ are formulas (when necessary, their sort is marked as a subscript), $j,k$ are nominals or constant nominals, and $x$, $y$ are state variables:

$\begin{array}{rlrl}
(K@) & @_j^{s} (\phi_t \to \psi_t) \to (@_j^s \phi \to @_j^s \psi) &
(Agree) &  @_k^{t}@_j^{t'} \phi_s \leftrightarrow @^t_j \phi_s\\
(SelfDual) & @^s_j \phi_t \leftrightarrow \neg @_j^s \neg \phi_t &
(Intro)  & j \to (\phi_s \leftrightarrow @_j^s \phi_s)\\
(Back) & \sigma(\ldots,\phi_{i-1}, @_j^{s_i} {\psi}_t,\phi_{i+1},\ldots)_s\to @_j^s {\psi}_t & 
(Ref) & @_j^sj_t 
\end{array}$\\
\medskip

$\begin{array}{rl}
(Q1) & \forall x\,(\phi \to \psi) \to (\phi \to \forall x\, \psi)
  \mbox{ where $\phi$ contains no free occurrences of x}\\
(Q2) & \forall x\,\phi \to \phi[y\slash x] 
  \mbox{ where $y$ is substitutable for $x$ in $\phi$}\\
(Name) & \exists x \, x \\ 
(Barcan) & \forall x \,\sigma^{\mb}(\phi_1, \ldots, \phi_n) \to \sigma^{\mb}(\phi_1, \ldots,\forall x\phi_{i},\ldots, \phi_n) \mbox{ if $x$ does not}\\
& \hfill \mbox{appear free in $\phi_k$ for any $k\neq i$}\\
(Barcan@) & \forall x \, @_j\phi \to @_j \forall x\,\phi\\
 (Nom\, x) & @_k x\wedge @_j x \to @_k j 
\end{array}$

\medskip

\item Deduction rules:

\begin{tabular}{rl}
$(BroadcastS)$ & if $\vds{s}@_j^s\phi_t$ then $\vds{s'}@_j^{s'}\phi_t$ \\
 $(Gen@)$& if $\vds{s'} \phi$ then $\vds{s} @_j \phi$, where $j $ and $\phi$ have the same sort $s'$\\
$(Name@)$ & if $\vds{s} @_j \phi$ then $\vds{s'} \phi$, where $j$ does not occur in $\phi$ and $j\not\in N$\\
$(Paste)$ & if $\vds{s} @_j \sigma(\ldots, k,\ldots) \wedge @_k \phi \to \psi$ then $\vds{s} @_j \sigma(\ldots, \phi, \ldots) \to \psi$\\ & where $k\not\in N$ is distinct from $j$ that does not occur in any\\ 
& other formula that appears in the hypothesis of the deduction rule\\
$(Gen)$ & if $\vds{s}\phi$ then $\vds{s}\forall x\phi$\\
&  where $\phi\in Form_s$ and $x\in SVAR_t$ for some $t\in S$.
\end{tabular}

Here, $j$ and $k$ are nominals or constant nominals having the appropriate sort. Note that in $(Name@)$ and $(Paste)$ we require that some nominals are not constant. 
\end{itemize}

\medskip

{\bf The system ${\mathcal H}_{\bf \Sigma}(@)$}
\begin{itemize}
\item The axioms and the deduction rules of ${\mathcal K}_{\bf\Sigma}$
\item Axiom schemes: $(K@),(SelfDual),(Back),(Agree),(Intro),(Ref)$
\item Deduction rules: $(BroadcastS)$, $(Gen@)$, $(Subst)$, $(Name@)$, $(Paste)$  
\end{itemize}

\caption{$(S,\Sigma)$ hybrid logic}\label{fig:unu}
\end{figure}

In the sequel, our main focus is on the more expressive system ${\mathcal H}(@,\forall)$. The properties and the proofs for 
 ${\mathcal H}(@)$ follow easily from their equivalent in the richer setting. 
 
Theorems and (local) deduction from hypothesis are defined as in Section \ref{sec:prel}. In order to further develop our framework, we need to analyze the {\em uniform substitutions}. Apart for being $S$-sorted, in the  hybrid setting, more restrictions are required: state variables are uniformly replaced by state symbols that are {\em substitutable} for them (as in the mono-sorted setting \cite{hyb}). Nominals and constant nominals are always substitutable for state variables of the same sort.  If $x$ and $z$ are state variables of the sort $s$, then we define:
\begin{itemize}
\item  if $\phi \in {\rm PROP}_s\cup {\rm SVAR}_{s}\cup {\rm NOM}_s \cup \nom_s$, then $z$ is substitutable for $x$ in $\phi$,
\item $z$ is substitutable for $x$ in $\neg \phi$ iff $z$ is substitutable for $x$ in $ \phi$,
\item $z$ is substitutable for $x$ in $\phi \vee \psi$ iff $z$ is substitutable for $x$ in $ \phi$ and $\psi$,
\item $z$ is substitutable for $x$ in $\sigma(\phi_1, \ldots, \phi_n)$ iff $z$ is substitutable for $x$ in $ \phi_i$ for all $i\in [n]$,
\item $z$ is substitutable for $x$ in $@_j^s \phi$ iff $z$ is substitutable for $x$ in $ \phi$,
\item $z$ is substitutable for $x$ in $\forall y\, \phi$ iff $x$ does not occur free in $\phi$, or $y \neq z$ and  $z$ is substitutable for $x$ in $\phi$.
\end{itemize}
\noindent In the sequel, we will say that a substitution  is {\em legal} if it perform only allowed replacements.  If $\phi$ is a formula and $x$ is a state variable we denote by $\phi[z/x]$ the formula obtained by substituting $z$ for all free occurrences of 
$x$ in $\phi$ ($z$ must be a nominal, a constant nominal or a state variable substitutable for $x$).  

\begin{lemma}[Agreement Lemma]
\label{lem:agree} 
Let $\mathcal{M}$ be a standard model. For all standard
$\mathcal{M}$-assignments $g$ and $h$, all states $w$ in $\mathcal{M}$ and all formulas $\phi$ of sort $s \in S$, if $g$ and $h$ agree on all state variables occurring freely in $\phi$, then:
$$\mathcal{M},g,w \mos{s} \phi \ \mbox{iff} \ \mathcal{M},h,w \mos{s} \phi$$ 
\end{lemma}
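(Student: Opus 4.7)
The plan is to proceed by induction on the structure of $\phi$, with the inductive hypothesis formulated uniformly across all sorts: for every proper subformula $\psi$ of sort $t$ and every pair of assignments agreeing on the free state variables of $\psi$, satisfaction of $\psi$ at a given world in $W_t$ is preserved. This uniformity across sorts is essential, since the clauses for $\sigma(\phi_1,\ldots,\phi_n)$ and $@_k^s\psi$ relate formulas of potentially different sorts and the induction must be able to cross these sort boundaries freely.

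The base cases are immediate from the semantic clauses. If $\phi = p$ with $p\in{\rm PROP}_s$, or $\phi = j$ with $j\in{\rm NOM}_s\cup\nom_s$, then $\mathcal{M},g,w\mos{s}\phi$ reduces to $w\in V^{\nom}_s(\phi)$, a condition that does not mention the assignment. If $\phi = y$ with $y\in{\rm SVAR}_s$, then $y$ is itself free in $\phi$, so by hypothesis $g_s(y)=h_s(y)$, and both sides reduce to the single equation $w=g_s(y)$.

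The propositional, modal and satisfaction-operator inductive cases are routine. For $\neg\psi$, $\psi_1\vee\psi_2$, $\sigma(\phi_1,\ldots,\phi_n)$, and $@_k^t\psi$, the free state variables of the compound formula contain those of every subformula, so $g$ and $h$ continue to agree on each subformula's free variables and the IH applies --- at the chosen witness worlds $w_i$ related by $R_\sigma$ in the modal case, and at the unique $u$ with $V^{\nom}_t(k)=\{u\}$ in the satisfaction-operator case (where it matters that $u$ is determined by $V^{\nom}$, not by $g$ or $h$).

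The one genuinely delicate case is $\phi=\forall x\,\psi$ with $x\in{\rm SVAR}_t$, where the free variables of $\phi$ are those of $\psi$ minus $\{x\}$. I fix an arbitrary $x$-variant $g'\stackrel{x}{\sim}g$ and define $h'$ to agree with $h$ on every state variable other than $x$ and to satisfy $h'_t(x):=g'_t(x)$; then $h'\stackrel{x}{\sim}h$. Now $g'$ and $h'$ agree on every free variable of $\psi$: on $x$ by construction, and on any other free variable $y\neq x$ of $\psi$ (which is then also free in $\phi$) by inheritance from the agreement of $g$ and $h$. The IH applied to $\psi$ yields $\mathcal{M},g',w\mos{s}\psi$ iff $\mathcal{M},h',w\mos{s}\psi$, and since $g'\mapsto h'$ is a bijection between $x$-variants of $g$ and $x$-variants of $h$, the required biconditional for $\forall x\,\psi$ follows (with the $\exists$ case obtained by the usual duality). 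The main --- though ultimately minor --- obstacle is exactly this bookkeeping on variants; the rest is symbol pushing.
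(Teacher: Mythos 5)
Your proof is correct and follows essentially the same route as the paper's: structural induction on $\phi$, with the atomic, Boolean, modal, and satisfaction-operator cases handled by direct appeal to the semantic clauses and the inductive hypothesis, and the $\forall x\,\psi$ case handled by pairing each $x$-variant $g'$ of $g$ with a corresponding $x$-variant $h'$ of $h$. Your treatment of the quantifier case is in fact somewhat more carefully spelled out (the explicit bijection $g'\mapsto h'$ with $h'_t(x):=g'_t(x)$) than the paper's, but the underlying idea is identical.
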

\begin{proof}
We suppose that $g$ and $h$ agree on all state variables occurring freely in $\phi$ on each sort. We prove this lemma by induction on the complexity of $\phi$:
\begin{itemize}
\item $\mathcal{M},g,w \mos{s} a$ iff $a\in {\rm PROP_s}\cup {\rm NOM_s}\cup  {\nom}_s$ we have $w\in V_s^{\nom}(a)$ iff $\mathcal{M},h,w \mos{s} a$.

\item $\mathcal{M},g,w \mos{s} x$ iff $x \in {\rm SVAR_s}$ we have $ w=g_s(x)$, but $g_s(x)=h_s(x)$, therefore $\mathcal{M},h,w \mos{s} x$.

\item $\mathcal{M},g,w \mos{s} \neg \phi$ iff $\mathcal{M},g,w \mosn{s}\phi$. But, if $g$ and $h$ agree on all state variables occurring freely in $\neg \phi$, then same for $\phi$. Therefore, from the induction hypothesis, $\mathcal{M},g,w \mos{s} \phi$ iff $\mathcal{M},h,w \mos{s} \phi$. Then $\mathcal{M},g,w \mosn{s}  \phi$ iff $\mathcal{M},h,w \mosn{s} \phi$. Then $\mathcal{M},h,w \mos{s} \neg \phi$.

\item $\mathcal{M},g,w \mos{s} \phi \vee \psi$, iff $\mathcal{M},g,w \mos{s} \phi$ or $\mathcal{M},g,w \mos{s} \psi$. But, $g$ and $h$ agree on all state variables occurring freely in $ \phi$ or $\psi$, then from induction hypothesis, we have ($\mathcal{M},g,w \mos{s} \phi$ iff $\mathcal{M},h,w \mos{s} \phi$) or ($\mathcal{M},g,w \mos{s} \psi$ iff $\mathcal{M},h,w \mos{s} \psi$). Then, ($\mathcal{M},h,w \mos{s} \psi$ or $\mathcal{M},h,w \mos{s} \psi$) iff $\mathcal{M},h,w \mos{s} \phi \vee \psi$.

\item $\mathcal{M},g,w \mos{s} \sigma (\phi_1, \ldots , \phi_n )$ iff there is   $(w_1,\ldots,w_n) \in W_{s_1}\times\cdots\times W_{s_n}$  such that $R_{\sigma} ww_1\ldots w_n$ and $\mathcal{M},g,w_i  \mos{s_i} \phi_i$ for each $i \in [n]$, then, by induction hypothesis $\mathcal{M},h,w_i  \mos{s_i} \phi_i$ for each $i \in [n]$. Hence, we have that there is   $(w_1,\ldots,w_n) \in W_{s_1}\times\cdots\times W_{s_n}$  such that $R_{\sigma} ww_1\ldots w_n$ and  $\mathcal{M},h,w_i  \mos{s_i} \phi_i$ for each $i \in [n]$ iff $\mathcal{M},h,w \mos{s} \sigma (\phi_1, \ldots , \phi_n )$.

\item $\mathcal{M},g,w \mos{s} @_j^s \phi$ iff $\mathcal{M},g,v \mos{s'} \phi$ where $V_{s'}^{\nom}(j)=\{ v \}$ iff $\mathcal{M},h,v \mos{s'} \phi$ where $V_{s'}^{\nom}(j)=\{ v \}$ (induction hypothesis) iff $\mathcal{M},h,w \mos{s} @_j^s \phi$.

\item $\mathcal{M},g,w \mos{s} \forall x\phi$ iff $\forall g'( g' \stackrel{x}{\sim} g $ implies $ \mathcal{M},g',w \mos{s} \phi)$. But $ g$ and $h$ agree on all state variables occurring freely in $  \forall x\phi$ and because $x$ is bounded, then $h_s(y)=g_s(y)$ for any $y \neq x$. Therefore, $\forall g'( g_s'(y)=g_s(y)=h_s(y)$ for any $y\neq x $ implies  $  \mathcal{M},g',w \mos{s} \phi)$ equivalent with $\forall g'( g' \stackrel{x}{\sim} h$ implies $  \mathcal{M},h',w \mos{s} \phi)$ iff $\mathcal{M},h,w \mos{s} \forall x\phi$.
\hfill$\Box$
\end{itemize}
\end{proof}

\begin{lemma}[Substitution Lemma]\label{lem:subst}
Let $\mathcal{M}$ be a standard model. For all standard
$\mathcal{M}$-assignments $g$, all states $w$ in $\mathcal{M}$ and all formulas $\phi$, if $y$ is a state variable that is substitutable for $x$ in $\phi$ and $j$ is a nominal then:
\begin{itemize}
\item \label{hunu}$\mathcal{M},g,w \mos{s} \phi[y/x]$ iff $\mathcal{M},g',w \mos{s} \phi$ where $g' \stackrel{x}{\sim} g$ and $ g'_s(x) =g_s(y)$
\item \label{hdoi} $\mathcal{M},g,w \mos{s} \phi[j/x]$ iff $\mathcal{M},g',w \mos{s} \phi$ where $g' \stackrel{x}{\sim} g$ and $ g'_s(x) =V_s^{\nom}(j)$
\end{itemize}
\end{lemma}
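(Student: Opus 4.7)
The plan is to prove both parts (i) and (ii) by simultaneous induction on the complexity of $\phi$, since the two statements differ only in whether the value placed at $x$ in $g'$ is $g_s(y)$ or $V_s^{\nom}(j)$. The induction closely mirrors the proof of the Agreement Lemma (Lemma \ref{lem:agree}); in fact, for most connectives the Agreement Lemma is essentially all that is needed once the atomic situation has been handled correctly.

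First I would dispatch the base cases. If $\phi$ is a propositional variable, a nominal, or a constant nominal, the substitution is vacuous and $g$, $g'$ agree on every free state variable occurring in $\phi$, so the Agreement Lemma gives the equivalence directly. If $\phi = x$, then $\phi[y/x] = y$ and $\mathcal{M},g,w \mos{s} y$ iff $w = g_s(y) = g'_s(x)$ iff $\mathcal{M},g',w \mos{s} x$; for part (ii), $\phi[j/x] = j$ and $\mathcal{M},g,w \mos{s} j$ iff $w \in V_s^{\nom}(j) = \{g'_s(x)\}$. If $\phi = z$ is a state variable other than $x$, the substitution leaves $\phi$ unchanged and $g_s(z) = g'_s(z)$ because $g' \stackrel{x}{\sim} g$. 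The Boolean cases, the case of a modal operator $\sigma(\phi_1,\ldots,\phi_n)$, and the satisfaction modality $@_k^s\psi$ all follow immediately by applying the induction hypothesis to each subformula; the $@$-case is only slightly special in that the current world is replaced by the world named by $k$, but this substitution does not interact with the substitution of $y$ (or $j$) for $x$.

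The main obstacle, and the reason the substitutability hypothesis is required in the statement, is the quantifier case $\phi = \forall z\,\psi$. I would split it into two subcases. If $z = x$, then $x$ has no free occurrence in $\phi$, so $\phi[y/x] = \phi$ and $\phi[j/x] = \phi$; since $g \stackrel{x}{\sim} g'$, the collections of $x$-variants of $g$ and of $g'$ coincide as sets, so satisfaction of $\forall x\,\psi$ at $(g,w)$ and at $(g',w)$ agree (again by the Agreement Lemma applied to each variant). If $z \neq x$, the substitutability hypothesis forces $y \neq z$ in part (i), with $y$ (respectively $j$) substitutable for $x$ in $\psi$; thus $\phi[y/x] = \forall z\,(\psi[y/x])$, and analogously for $j$. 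The crucial observation is that, since $y \neq z$, we have $h_t(y) = g_t(y)$ for every $z$-variant $h$ of $g$, where $t$ is the sort of $x$. Consequently, as $h$ ranges over the $z$-variants of $g$, the unique assignment $h'$ with $h' \stackrel{x}{\sim} h$ and $h'_t(x) = h_t(y) = g_t(y)$ ranges bijectively over the $z$-variants of $g'$. Applying the induction hypothesis to $\psi$ via this correspondence finishes the step, and the nominal subcase is handled in the same way with $V_t^{\nom}(j)$ playing the role of $g_t(y)$.
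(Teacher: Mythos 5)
Your proposal is correct and follows essentially the same route as the paper's proof: induction on the complexity of $\phi$, with the Agreement Lemma dispatching all vacuous-substitution cases, a direct computation at the atom $x$, and, for $\forall z\,\psi$ with $z\neq x$, the observation that updating $x$ to $g_t(y)$ (resp.\ $V_t^{\nom}(j)$) commutes with passing to $z$-variants — which is exactly the content of the Claim in the paper's argument. The only cosmetic difference is that you phrase this commutation as a bijection between the $z$-variants of $g$ and of $g'$, and you handle the nominal case uniformly rather than by analogy; just note that in the subcase $z\neq x$ the substitutability hypothesis gives the disjunction ``$x$ not free in $\psi$, or $y\neq z$ and $y$ substitutable for $x$ in $\psi$,'' so the first disjunct should be folded into your vacuous-substitution remark before assuming $y\neq z$.
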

\begin{proof}
By induction on the complexity of $\phi$.
\begin{itemize}
\item $\phi = a$, $a\in {\rm PROP}_s\cup {\rm NOM}_s\cup {\nom}_s$. Then $a[y/x]=a$ and $\mathcal{M}, g, w \mos{s} a[y/x] $ if and only if $\mathcal{M}, g, w \mos{s} a  $ if and only if $w \in V_s^{\nom}(a)$. But $ g' \stackrel{x}{\sim} g $ and by Agreement Lemma $\mathcal{M}, g', w \mos{s} a  $.
\item $\phi= z$, where $z \in {\rm SVAR}_s$. We have two cases:
\begin{enumerate}
\item If $z\neq x$, then $\mathcal{M}, g, w \mos{s} z[y/x] $ if and only if $\mathcal{M}, g, w \mos{s} z  $ if and only if $\mathcal{M}, g', w \mos{s} z  $ (Agreement Lemma).
\item If $z=x$, then $\mathcal{M}, g, w \mos{s} z[y/x] $ if and only if $\mathcal{M}, g, w \mos{s} y $ if and only if $w \in g_s(y)$ if and only if $w \in g'_s(x)$ if and only if $w \in g'_s(z)$ if and only if $\mathcal{M}, g', w \mos{s} z$.
\end{enumerate}
\item $\phi= \neg \phi$, then $\mathcal{M}, g, w \mos{s} \neg \phi$ if and only if $\mathcal{M}, g, w  \mosn{s} \phi$ if and only if $\mathcal{M}, g', w  \mosn{s} \phi$ (inductive hypothesis) if and only if $\mathcal{M}, g', w \mos{s} \neg \phi$. 
\item $\phi = \phi \vee \psi$, then $\mathcal{M}, g, w \mos{s} (\phi \vee \psi)[y/x] $ if and only if $\mathcal{M}, g, w \mos{s} \phi [y/x] $ or $\mathcal{M}, g, w \mos{s}  \psi[y/x] $ if and only if $\mathcal{M}, g', w \mos{s} \phi $ or $\mathcal{M}, g', w \mos{s}  \psi $ (inductive hypothesis) if and only if $\mathcal{M}, g', w \mos{s}  \phi \vee \psi $.
\item $\phi = \sigma(\phi_1, \ldots, \phi_n)$, then $\mathcal{M}, g, w \mos{s} \sigma(\phi_1, \ldots, \phi_n)[y/x]$ if and only if $\mathcal{M}, g, w \mos{s} \sigma(\phi_1[y/x], \ldots, \phi_n[y/x])$ if and only if exists $(u_1, \ldots, u_n) \in W_{s_1}\times \ldots \times W_{s_n}$ such that $R_{\sigma}wu_1 \ldots u_n$ and  $\mathcal{M}, g, u_i \mos{s_i} \phi_i[y/x]$ for any $i \in [n]$ if and only if there exists $(u_1, \ldots, u_n) \in W_{s_1}\times \ldots \times W_{s_n}$ such that $R_{\sigma}wu_1 \ldots u_n$ and $\mathcal{M}, g', u_i \mos{s_i} \phi_i $ for any $i \in [n]$ (inductive hypothesis) if and only if $\mathcal{M}, g', w \mos{s} \sigma(\phi_1, \ldots, \phi_n)$.
\item $\phi = @_j^s \phi$, then $\mathcal{M}, g, w \mos{s} @_j^s \phi[y/x]$ if and only if $\mathcal{M}, g, v \mos{s} \phi[y/x]$ where $ V^{\nom}_{s'}=\{v\}$ if and only if $\mathcal{M}, g', v \mos{s'} \phi$ where $ V^{\nom}_{s'}=\{v\}$ (inductive hypothesis) if and only if $\mathcal{M}, g', w \mos{s}  @_j^s \phi$.
\item $\phi = \forall x \phi$, then $\mathcal{M}, g, w \mos{s} (\forall x \phi)[y/z]$ if and only if $\mathcal{M}, g, w \mos{s} (\forall x \phi)[y/z]$ if and only if $\mathcal{M}, g, w \mos{s} \forall x \phi$ if and only if $\mathcal{M}, g', w \mos{s} \forall x \phi$ (Agreement Lemma).

For the next case we will use the notation $g^{x \leftarrow y}$ to specify that $x$ is substituted by $y$, therefore, if $x$ if free in a formula, after substitution we will not have any more $x$.

\begin{claim}
The following two statements are equivalent:
\begin{itemize}
\item For all $g'$, if $g'\stackrel{z}{\sim}g$ then $\mathcal{M}, g'^{x\leftarrow y}, w \mos{s} \phi$.
\item For all $g'$, if $g' \stackrel{z}{\sim} g^{x\leftarrow y}$ then $\mathcal{M}, g', w \mos{s} \phi$.
\end{itemize}
\end{claim}
\begin{proof}
Suppose for all $g'$, if $g'\stackrel{z}{\sim}g$ then $\mathcal{M}, g'^{x\leftarrow y}, w \mos{s} \phi$ and $g' \stackrel{z}{\sim} g^{x\leftarrow y}$. Since $g_s'(o)=g_s^{x\leftarrow y}(o)$ for any $o \neq z$ and $x\neq z$, then $g_s'(x)=g_s^{x\leftarrow y}(x)=g_s(y)$. Therefore, $g_s'={g_s'}^{x\leftarrow y}$ and $g'={g'}^{x\leftarrow y}$. Hence, $\mathcal{M}, g', w \mos{s} \phi$. Next, suppose for all $g'$, if $g' \stackrel{z}{\sim} g^{x\leftarrow y}$ then $\mathcal{M}, g', w \mos{s} \phi$ and $g'\stackrel{z}{\sim}g$. Therefore, $g_s'^{x\leftarrow y} \stackrel{z}{\sim} g_s^{x\leftarrow y}$, so $g'^{x\leftarrow y} \stackrel{z}{\sim} g^{x\leftarrow y}$. From second case, we have that $\mathcal{M}, g'^{x\leftarrow y}, w \mos{s} \phi$. 
\end{proof}
\item $\phi = \forall z \phi$, where $z\neq x$. Suppose $\mathcal{M}, g, w \mos{s}(\forall z \phi)[y/x]$ iff $\mathcal{M}, g, w \mos{s}\forall z (\phi[y/x])$ iff for all $g'$, if $g'\stackrel{z}{\sim}g$ then $\mathcal{M}, g', w \mos{s} \phi[y/x]$ iff for all $g'$, if $g'\stackrel{z}{\sim}g$ then $\mathcal{M}, g'^{x\leftarrow y}, w \mos{s} \phi$ (induction hypothesis) iff or all $g'$, if $g' \stackrel{z}{\sim} g^{x\leftarrow y}$ then $\mathcal{M}, g', w \mos{s} \phi$ (Claim 1) iff $\mathcal{M}, g^{x\leftarrow y},w \mos{s} \forall z \phi$ where $g_s'(x)=g(y)$ and  $g' \stackrel{z}{\sim} g$ iff $\mathcal{M}, g',w \mos{s} \forall z \phi$ where $g_s'(x)=g_s(y)$ and  $g' \stackrel{z}{\sim} g$ (Agreement Lemma).
\end{itemize}
\end{proof}

\begin{lemma}[Generalization on nominals]\label{lem:gennom}
Assume \mbox{$\vds{s}\phi[i/x]$} where $i\in {\rm NOM}_t$ and $x\in {\rm SVAR}_t$ for some $t\in S$. Then there is a state variable $y\in {\rm SVAR}_t$ that does not appear in $\phi$ such that $\vds{s}\phi[y/x]$
\end{lemma}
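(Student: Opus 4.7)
The strategy is to transform a given derivation $\pi$ of $\phi[i/x]$ in ${\mathcal H}_{\bf\Sigma}(@,\forall)$ into a derivation of $\phi[y/x]$ by uniformly renaming the nominal $i$ to a fresh state variable $y$ of sort $t$. Since $\pi$ is finite, I can pick $y\in{\rm SVAR}_t$ occurring neither in $\phi$ nor anywhere in $\pi$. By an initial renaming step using $(Paste)$ and $(Name@)$ if necessary, I may also assume that $i$ does not occur in $\phi$ originally, so that $(\phi[i/x])[y/i]$ coincides with $\phi[y/x]$ and the transformation will yield exactly the desired target formula.

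I then proceed by induction on the length of $\pi$, arguing that the uniform substitution $[y/i]$ applied to each formula in $\pi$ yields a valid derivation $\pi'$. The propositional tautologies, the $(K_\sigma)$ and $(Dual_\sigma)$ axioms of ${\mathcal K}_{\bf\Sigma}$, the quantifier axioms $(Q1),(Q2),(Name),(Barcan)$, and the classical rules $(MP),(UG),(Gen)$ all commute with the substitution, since neither the schematic form of these axioms nor the substitutability condition of $(Q2)$ is disturbed when a fresh nominal is replaced by a fresh state variable of the same sort; in particular the freshness of $y$ guarantees that no capture occurs with the binders introduced by Barcan-style axioms or by $(Gen)$.

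The hard part is handling the hybrid axioms and rules in which $i$ may appear as the subscript of a satisfaction operator, namely $(K@),(Agree),(SelfDual),(Intro),(Back),(Ref),(Barcan@),(Nom\,x)$ and the hybrid rules $(BroadcastS),(Gen@),(Name@),(Paste)$. By Remark \ref{rem3} the form $@_y$ is not well-formed, so the naive substitution fails at these positions. The remedy, for each such instance, is to re-derive the substituted statement by introducing a fresh auxiliary nominal $k$ of sort $t$ that serves as a syntactic ``name'' for $y$: one pairs $y$ with $k$ via $(Nom\,x)$, exploiting that $@_k y \wedge @_j y \to @_k j$ relates a state-variable occurrence to a nominal denoting the same world, derives the required axiom instance in terms of $k$, and then discharges $k$ using $(Name@)$ or $(Paste)$ by virtue of its freshness. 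This delicate case analysis, covering every $@$-subscript occurrence of $i$, is the main technical burden of the proof. Once $\pi'$ is verified as a valid derivation of ${\mathcal H}_{\bf\Sigma}(@,\forall)$ ending in $(\phi[i/x])[y/i]=\phi[y/x]$, we obtain $\vds{s}\phi[y/x]$ as required.
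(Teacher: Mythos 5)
Your core strategy is the same as the paper's: take a derivation of $\phi[i/x]$, replace the nominal $i$ throughout by a state variable $y$ fresh for $\phi$ and for the derivation, and argue by induction on the length of the derivation that the result is again a derivation, now ending in $\phi[y/x]$. The paper's own proof is exactly this, stated in three lines with no case analysis. Where you genuinely add something is in noticing the obstruction that the paper silently passes over: since by Remark \ref{rem3} the satisfaction operator is defined only for nominals and constant nominals, any occurrence of $i$ in subscript position $@_i$ inside the derivation (coming, e.g., from $(Ref)$, $(Intro)$, $(Gen@)$, $(Name@)$ or $(Paste)$) turns into the ill-formed $@_y$ under the naive replacement. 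You are also right that the \emph{final} formula is safe: $@_x$ is not well-formed, so $x$ never sits under $@$ in $\phi$ and $\phi[y/x]$ is a genuine formula; the problem lives only in the intermediate lines.

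However, your resolution of that obstruction is announced rather than carried out, and as described it does not go through. $(Nom\,x)$ is the implication $@_k x\wedge @_j x\to @_k j$; in a Hilbert-style derivation of a theorem there is no premise of the form $@_k y$ available with which to ``pair'' the fresh nominal $k$ with $y$, and you do not say what formula the transformed derivation is supposed to contain at such a step, nor why the surrounding inferences remain instances of the rules after the repair. The preliminary reduction to the case where $i$ does not occur in $\phi$ (which is indeed needed, since otherwise $(\phi[i/x])[y/i]\neq\phi[y/x]$) is likewise asserted ``using $(Paste)$ and $(Name@)$ if necessary'' without an argument; nominal-for-nominal renaming of theorems is itself a proof-transformation lemma of the same kind, and note that $(Subst)$ is listed as a rule only for ${\mathcal H}_{\bf\Sigma}(@)$ in Figure \ref{fig:unu}, not for ${\mathcal H}_{\bf\Sigma}(@,\forall)$. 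So the honest summary is: you follow the paper's route and correctly locate a gap that the paper's own proof shares, but the ``delicate case analysis'' that would close it is exactly the part left undone.
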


\begin{proof}
There are two cases. First, let us suppose that $x$ does not occur free in $\phi$, therefore $\phi[j/x]$ is identical to $\phi[y/x]$, hence as $\phi[j/x]$ is provable, so is $\forall y \phi [y/x]$ for any choice of $y$. 

Secondly, suppose that $x$ occur free in $\phi$. Suppose  $\phi[j/x]$. Hence we have a proof of $\phi[j/x]$ and we choose any variable $y$ that does not occur in the proof, or in $\phi$. We replace every occurrence of $j$ in the proof of $\phi[j/x]$ with $y$. It follows by induction on the length of proofs that this new sequence is a proof of $\phi[y/x]$. By generalization we extend the proof with $\forall y(\phi[y/x])$ and we can conclude that $\forall y(\phi[y/x])$ is provable.
\end{proof}

The systems ${\mathcal H}_{\bf \Sigma}(@)$ and ${\mathcal H}_{\bf \Sigma}(@,\forall)$ are sound with respect to the intended semantics.

\begin{proposition}[Soundness]\label{prop:sound}
The deductive systems for ${\mathcal H}_{\bf \Sigma}(@)$ and ${\mathcal H}_{\bf \Sigma}(@,\forall)$ from Figure 2 are sound.
\end{proposition}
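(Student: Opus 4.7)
The plan is to proceed by induction on the length of derivations, showing that every axiom is valid in all frames and every deduction rule preserves validity. Since the $\mathbf{K}_{\bf\Sigma}$-fragment is already sound by \cite{noi}, it suffices to verify the hybrid-specific axioms and rules; the arguments for $\mathcal{H}_{\bf \Sigma}(@)$ are a straightforward restriction of those for $\mathcal{H}_{\bf \Sigma}(@,\forall)$, so I would concentrate on the richer system.

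For the $@$-axioms I would unfold the clause $\mathcal{M},g,w' \mos{s} @_k^s\phi$ iff $\mathcal{M},g,u\mos{t}\phi$, where $V^{\nom}_t(k)=\{u\}$. Since $V^{\nom}_t(k)$ is a singleton, $@_k^s$ behaves as a deterministic jump whose target is independent of $w'$; this immediately yields $(SelfDual)$, $(Agree)$, $(Intro)$, $(Ref)$, and $(K@)$ (the last by propositional reasoning at $u$). $(Back)$ holds because the truth value of $@_j^{s_i}\psi_t$ at an intermediate world $w_i$ does not depend on $w_i$, making the existential witness clause trivial. For the first-order axioms, $(Q1)$ follows from the Agreement Lemma (Lemma \ref{lem:agree}); $(Q2)$ is exactly the semantic content of the Substitution Lemma (Lemma \ref{lem:subst}); $(Name)$ is witnessed by the $x$-variant with $g'(x)=w$; $(Barcan)$ and $(Barcan@)$ express the commutation of $\forall x$ with $\sigma^{\mb}$ and $@_j$ respectively, which holds because $x$ is bound away from the accessed world; and $(Nom\,x)$ is immediate from the singleton interpretation of state variables.

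For the rules, $(MP)$, $(UG)$, and $(Gen)$ are routine: the first two are inherited from $\mathbf{K}_{\bf\Sigma}$ and the last follows directly from the semantic clause for $\forall$. $(BroadcastS)$ and $(Gen@)$ exploit the fact that the truth value of $@_j^s\phi$ does not depend on the sort $s$ of the ambient state. The rule $(Subst)$ (used in $\mathcal{H}_{\bf \Sigma}(@)$) is a routine induction on formulas combined with Lemma \ref{lem:subst}, where the legality constraints guarantee that no free variable is accidentally captured.

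The main obstacle will be the non-orthodox rules $(Name@)$ and $(Paste)$, whose soundness relies on a \emph{coincidence} observation: if a nominal $k$ does not occur in a formula $\phi$, then the truth value of $\phi$ is independent of $V^{\nom}(k)$. I would first establish this by a straightforward structural induction parallel to Lemma \ref{lem:agree}. For $(Name@)$, given any model $\mathcal{M}$ and world $w$ of the sort of $\phi$, I would reinterpret $V(j)$ as $\{w\}$ to obtain a variant $\mathcal{M}'$; the validity of the premise then forces $\mathcal{M}',w \mos{s'}\phi$, and coincidence (since $j$ is fresh in $\phi$) transfers this back to $\mathcal{M}$. For $(Paste)$, I would assume $\mathcal{M},w \mos{s} @_j\sigma(\ldots,\phi,\ldots)$ and unpack the semantics of $\sigma$ to obtain a witness $u$ at the position of the argument $\phi$ with $\mathcal{M},u\mos{t}\phi$; redefining $V(k)=\{u\}$ gives a variant $\mathcal{M}'$ in which both conjuncts $@_j\sigma(\ldots,k,\ldots)$ and $@_k\phi$ hold at $w$ (using coincidence for the second, since $k$ is fresh in $\phi$); applying the premise yields $\mathcal{M}',w \mos{s}\psi$, which transfers back to $\mathcal{M}$ by coincidence since $k$ is also fresh in $\psi$. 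The delicate point is keeping track of which formulas a freshly chosen nominal may or may not appear in, and ensuring the coincidence lemma is applied at the correct sort.
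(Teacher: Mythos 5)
Your verification of the axioms coincides with the paper's: both proceed by unfolding the satisfaction clauses directly, using the Agreement Lemma for $(Q1)$ and the Substitution Lemma for $(Q2)$, and both defer the $\mathcal{H}_{\bf \Sigma}(@)$ case to the richer system. The genuine divergence is in the non-orthodox rules $(Name@)$ and $(Paste)$. You treat them as validity-preserving rules over \emph{arbitrary} models: you introduce a coincidence lemma (truth of $\phi$ is insensitive to $V^{\nom}(k)$ when $k$ does not occur in $\phi$), re-interpret the fresh nominal to point at the relevant world, apply the validity of the premise in the modified model, and transfer back. The paper instead switches midway to the assumption that $\mathcal{M}$ is a \emph{named} model and argues within that restricted class, which lets it avoid re-interpreting valuations but means those two rules are only checked against named models rather than all models. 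Your route is the more standard and more general one (it is the argument needed if one wants soundness stated uniformly over the whole model class), at the cost of having to prove the auxiliary coincidence lemma, which the paper never states; the paper's route is shorter but leans implicitly on the fact that the completeness argument only ever produces named models. One caveat you already flag yourself: for $(Paste)$ the side condition as stated in Figure 2 only forbids $k$ from occurring in $\phi$ and $\psi$, so when you redefine $V(k)$ you must also ensure $k$ does not occur in the remaining arguments of $\sigma$ (or strengthen the freshness condition accordingly); this bookkeeping is needed in either treatment and does not invalidate your argument.
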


\begin{proof}
We will only prove the soundness of the more complex system  ${\mathcal H}(@,\forall)$, since this proof is similar for the ${\mathcal H}_{\bf \Sigma}(@)$ system.

Let $\mathcal{M}$ be an arbitrary model and $w$ any state of sort $s$.

$(K_@)$
Suppose $\mathcal{M},g,w \mos{s} @_j^{s} (\phi_t \to \psi_t) $ if and only if $\mathcal{M},g,v \mos{t} \phi_t \to \psi_t$ where $V^{\nom}_t(j)=\{ v\}$ iff  $\mathcal{M},g, v \mos{t} \phi_t$ implies $\mathcal{M},g, v \mos{t} \psi_t$ where $V^{\nom}_t(j)=\{ v\}$. Suppose $\mathcal{M},g, w  \mos{s} @_j^s \phi_t$ and $V^{\nom}_t(j)=\{ v\}$. Then $\mathcal{M},g, v  \mos{t} \phi_t$ where $V^{\nom}_t(j)=\{ v\}$ , but this implies that $\mathcal{M},g, v \mos{t} \psi_t$ where $V^{\nom}_t(j)=\{ v\}$ iff $\mathcal{M},g, w  \mos{s} @_j^s \psi_t$.

$(Agree)$
Suppose $\mathcal{M},g,w  \mos{t'} @_k^{t'}@_j^{t} \phi_s $ iff $\mathcal{M},g,v  \mos{t} @_j^{t} \phi_s $ where $V^{\nom}_t(k)=\{ v\}$ iff $\mathcal{M},g,u  \mos{s}  \phi_s $ where $V^{\nom}_t(k)=\{ v\}$ and $V^{\nom}_s(j)=\{ u\}$. Then $\mathcal{M},g,u  \mos{s}  \phi_s $ where $V^{\nom}_s(j)=\{ u\}$ which implies that $\mathcal{M},g,w  \mos{t'}  @^{t'}_j \phi_s $.

$(SelfDual)$ 
Suppose $\mathcal{M},g,w  \mos{s} \neg @^s_j  \neg \phi_t $ iff $\mathcal{M},g,w  \mosn{s} @^s_j \neg  \phi_t$ iff $\mathcal{M},g,v  \mosn{t}   \neg \phi_t$ where $V^{\nom}_t(j)=\{ v\}$ iff  $\mathcal{M},g, v \mos{t} \phi_t$ where $V^{\nom}_t(j)=\{ v\}$ iff $\mathcal{M},g,w  \mos{s}  @^s_j  \phi_t$.

$(Back)$
Suppose $\mathcal{M},g,w \mos{s}  \sigma(\ldots,\phi_{i-1}, @_j^{s_i} {\psi}_t,\phi_{i+1},\ldots)_s$ if and only if there is $(w_1,\ldots,w_n) \in W_{s_1}\times\cdots\times W_{s_n}$ such that  $R_{\sigma} ww_1\ldots w_n$ and $\mathcal{M},g,w_i  \mos{s_i} \phi_i$ for any $i \in [n]$. This implies that there is $w_i \in W_{s_i}$ such that  $\mathcal{M},g,w_i  \mos{s_i} @_j^{s_i} {\psi}_t $, then $ \mathcal{M},g,v  \mos{t} \psi_t $ where $V^{\nom}_t(j)=\{ v\}$. Hence, $\mathcal{M},g,w \mos{s}  @_j^{s} {\psi}_t $

$(Ref)$ 
Suppose $\mathcal{M},g, w \mosn{s}@_j^sj_t$. Then $\mathcal{M},g, v \mosn{t} j$ where $V^{\nom}_t(j)=\{ v\}$, contradiction.

$(Intro)$
Suppose $\mathcal{M},g,w \mos{s} j$  and  $\mathcal{M},g,w \mos{s}  \phi_s$. Then $V^{\nom}_s(j)=\{ w\}$ and $\mathcal{M},g,w \mos{s} \phi_s$ implies that $\mathcal{M},g,w \mos{s} @_j^s \phi_s$. Now, suppose $\mathcal{M},g,w \mos{s} j$ and $\mathcal{M},g,w \mos{s} @_j^s \phi_s$. Because, from the first assumption, we have $V^{\nom}_s(j)=\{ w\}$, then, form the second one, we can conclude that $\mathcal{M},g,w\mos{s} \phi_s $.

$(Q1)$
Suppose that $\mathcal{M},g,w \mos{s} \forall x(\phi \to \psi)$ iff $\mathcal{M},g',w \mos{s} \phi\to \psi$ for all $g' \stackrel{x}{\sim} g$. Results that for all $g' \stackrel{x}{\sim} g$ we have $\mathcal{M},g',w \mos{s} \phi$ implies $\mathcal{M},g',w \mos{s} \psi$. But $\phi$ contains no free occurrences of $x$, then for all $g' \stackrel{x}{\sim} g$ we have ($\mathcal{M},g,w \mos{s} \phi$ implies $\mathcal{M},g',w \mos{s} \psi$). Hence,  $\mathcal{M},g,w \mos{s} \phi$ implies that, for all $g' \stackrel{x}{\sim} g$, $\mathcal{M},g',w \mos{s} \psi$. Then, $\mathcal{M},g,w \mos{s} \phi$ implies that  $\mathcal{M},g,w \mos{s} \forall \psi$ iff $\mathcal{M},g,w \mos{s} \phi \to \forall x \psi$ .

$(Q2)$ 
Suppose that $\mathcal{M},g,w \mos{s} \forall x\phi$. We need to prove that $\mathcal{M},g',w \mos{s}  \phi[y/x]$. But this is equivalent, by Substitution Lemma, with proving that  $\mathcal{M},g',w \mos{s} \phi$ where $g' \stackrel{x}{\sim} g$ and $ g'_s(x) =g_s(y)$. But $\mathcal{M},g,w \mos{s} \forall x\phi$ iff $ \mathcal{M},g',w \mos{s} \phi$ for all $g' \stackrel{x}{\sim} g$. Let $g'_s(z)=g(y)$, if $z=x$, and $g'_s(z)=g(z)$, otherwise. Therefore, we have  $g' \stackrel{x}{\sim} g$ , $ g'_s(x) =g_s(y)$ and $ \mathcal{M},g',w \mos{s} \phi$. For the case of substituting with a nominal is similar. We define  $ g'_s(x)=V^{\nom}_s(j)$, if $z=x$, and $g'_s(z)=g(z)$, otherwise. 

$(Name)$ Suppose that $\mathcal{M},g,w \mos{s} \exists x x$ iff exists $g' \stackrel{x}{\sim} g$ and $ \mathcal{M},g',w \mos{s} x$. We choose $g'$ an $x$-variant of $g$ such that $g'_s(x)= \lbrace w\rbrace$.

$(Barcan)$ Assume  $g$ is an assignment and $R_{\sigma} ww_1\ldots w_n$. We have to prove that $\mathcal{M},g,w_i \mos{s_i}\forall x \phi_i$ or $\mathcal{M},g,w_k \mos{s_k}\phi_k$ for some $k\neq i$.
By hypothesis,  $\mathcal{M},g,w \mos{s} \forall x \, \sigma^{\mb}(\phi_1, \ldots, \phi_n)$, so for all $g' \stackrel{x}{\sim} g$,
$\mathcal{M},g',w \mos{s} \sigma^{\mb}(\phi_1, \ldots, \phi_n)$.
This means that, for all $g' \stackrel{x}{\sim} g$, we have $\mathcal{M},g',w_i \mos{s_i} \phi_i$ or $\mathcal{M},g',w_k \mos{s_k}\phi_k$ for some $k\neq i$. We consider two cases:

- there is an  assignment $g' \stackrel{x}{\sim} g$ such that  $\mathcal{M},g',w_k \mos{s_k}\phi_k$ for some $k\neq i$; since $x$ does not appear in $\phi_k$, we infer $\mathcal{M},g,w_k \mos{s_k}\phi_k$;

-$\mathcal{M},g',w_i \mos{s_i} \phi_i$ for any assignment $g' \stackrel{x}{\sim} g$; but this implies $\mathcal{M},g,w_i \mos{s_i} \forall x\phi_i$.
 
\noindent We proved that $\mathcal{M},g,w_i \mos{s_i}\forall x \phi_i$ or $\mathcal{M},g,w_k \mos{s_k}\phi_k$ for some $k\neq i$, which is the desired conclusion. 


$(Barcan@)$
Suppose $\mathcal{M},g,w \mos{s} \forall x @_j^s \phi$ iff $\mathcal{M},g',w \mos{s}  @_j^s \phi$ for all $g' \stackrel{x}{\sim} g$. Then, $\mathcal{M},g',v \mos{t}  \phi$ for all $g' \stackrel{x}{\sim} g$ where $V^{\nom}_t(j)=\{ v\}$ and so $\mathcal{M},g,v \mos{t} \forall x \phi$ where $V^{\nom}_t(j)=\{ v\}$. Hence, $\mathcal{M},g,w \mos{s}  @_j^s \forall x \phi$.

$(Nom\ x)$ 
Suppose $\mathcal{M},g,w \mos{s}  @_j^s  x$ and $\mathcal{M},g,w \mos{s}  @_k^s  x$. Then $\mathcal{M},g,v \mos{t}  x$ where $V^{\nom}_t(j)=\{ v\}$ and  $\mathcal{M},g,u \mos{t}  x$ where $V^{\nom}_t(k)=\{ u\}$. This implies that $u=v$, so  $V^{\nom}_t(j)=V^{\nom}_t(k)$. Then  $\mathcal{M},g,w \mos{s} @_j^s k$ for any model $\mathcal{M}$ and any world $w$.

$(BroadcastS)$ 
Suppose $\mathcal{M},g,w \mos{s} @_j^s  \phi_t$  if and only if $\mathcal{M},g,v \mos{t} \phi_t$ where $V^{\nom}_t(j)=\{ v\}$. Hence, for any $s' \in S$ we have $\mathcal{M},g,w \mos{s'} @_j^{s'}  \phi_t$.

Now, let $\mathcal{M}$ be an arbitrary named model.

$(Name @)$ Let $\mathcal{M}$ be a model, $g$ an assignment and $v\in W_{s'}$ and assume that $\mathcal{M},g,v \not\mos{s'}  \phi$, where $s'$ is the sort of $\phi$. Since  $j\in {\rm NOM}_{s'}$ and $j$ does not appear in $\phi$, we can safely assume that $g_{s'}(j)=\{v\}$. Hence  $\mathcal{M},g,w \not\mos{s} @_j^s  \phi$ for any $w\in W_s$, which contradicts the hypothesis.  


$(Paste)$ Suppose $\mathcal{M},g,w \mos{s} @_j^s  \sigma(\psi_1, \ldots, \psi_{i-1}, k, \psi_{i+1}, \ldots, \psi_n) \wedge @_k^s \phi \to \psi$ and $\mathcal{M},g,w \mos{s} @_j^s  \sigma(\psi_1, \ldots, \psi_{i-1}, \phi, \psi_{i+1}, \ldots, \psi_n$. Hence there exists $(v_1, \ldots,v_n) \in (W_{s_1}\times \ldots\times W_{s_n})$ such that $R_{\sigma}v v_1 \ldots v_i \ldots v_n$,  $V^{\nom}_{s'}(j)=\{v \}$ and  $\mathcal{M},g,v_e \mos{s'}  \psi_e$ for any $e \in [n], e\neq i$ and $  \mathcal{M}, g, v_i \mos{s_i} \phi$. Let $k\in {\rm{NOM}_{s_i}}$ such that $k$ does not appear in other formulas as required. Then we can assume  consider that $g(k)=\{v_i\}$ so the hypothesis of the deduction rule is satisfied and we can infer that $\mathcal{M},g,w \mos{s}\psi$.

%

In conclusion, $\mathcal{M},g,w \mos{s'}  @_j^s\sigma(\psi_1, \ldots, \psi_{i-1}, \phi, \psi_{i+1}, \ldots, \psi_n) \to \psi$.

\end{proof}

\begin{lemma}\label{lemma:bridge}
\begin{itemize}

\item[1.] The following formulas are theorems:

\begin{tabular}{ll}
$(Nom)$ & $@_k^sj\to (@_k^s\phi\leftrightarrow @_j^s\phi)$\\
& for any $t\in S$,  $k,j\in{\rm NOM}_t\cup{\nom}_t$ and $\phi$ a formula of sort $t$. \\ 
$(Sym)$ &  $@_k^sj\to @_j^sk$ \\
  & where $s\in S$ and $j,k\in {\rm NOM}_t\cup{\nom}_t$ for some $t\in S$,\\
$(Bridge)$ & $\sigma(\ldots \phi_{i_1}, j ,\phi_{i+1} \ldots) \wedge @_j^s \phi \to \sigma(\ldots\phi_{i-1}, \phi ,\phi_{i+1}, \ldots)$ \\
 & if $\sigma\in \Sigma_{s_1\ldots s_n,s}$, 
  $j\in {\rm NOM}_{s_i}\cup{\nom}_{s_i}$ and $\phi$ is a formula of sort $s_i$.
 \end{tabular}
 \medskip
\item[2.] if $\vds{s}\phi\to j$ then 
$\vds{t}\sigma(\ldots,\phi,\ldots)\to \sigma(\ldots,j,\ldots)\wedge @_j^t\phi$

for any $s,t\in S$, $\sigma\in\Sigma_{t_1\cdots t_n,t}$, $j\in {\rm NOM}_s\cup{\nom}_s$ and $\phi$ a formula of sort $s$. 
\end{itemize}
\end{lemma}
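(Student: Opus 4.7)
For (Nom), the plan is to start from the instance of $(Intro)$ at sort $t$, $j\to(\phi\leftrightarrow @_j^t\phi)$, apply $(Gen@)$ with the nominal $k$ to wrap this into $@_k^s\bigl(j\to(\phi\leftrightarrow @_j^t\phi)\bigr)$, then distribute $@_k^s$ through the connectives using two instances of $(K@)$ to reach $@_k^sj\to(@_k^s\phi\leftrightarrow @_k^s@_j^t\phi)$, and finally rewrite $@_k^s@_j^t\phi$ as $@_j^s\phi$ via $(Agree)$. I would then derive (Sym) as the instance $\phi:=k$ of (Nom) combined with $(Ref)$, which supplies $@_k^sk$.

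For (Bridge), I would work through a double application of $(K_\sigma)$ at position $i$, with the other positions held fixed at $\neg\phi_k$. The propositional content is $@_j^{s_i}\phi \to (\neg\phi \to \neg j)$, a rephrasing of $(Intro)$. Applying $(UG)$ would inject this into $\sigma^{\mb}$, and two uses of $(K_\sigma)$ would strip the implications, yielding
\[
\sigma^{\mb}(\neg\phi_1,\ldots,@_j^{s_i}\phi,\ldots,\neg\phi_n)\wedge \sigma^{\mb}(\neg\phi_1,\ldots,\neg\phi,\ldots,\neg\phi_n) \to \sigma^{\mb}(\neg\phi_1,\ldots,\neg j,\ldots,\neg\phi_n).
\]
By $(Dual_\sigma)$ the $\sigma^{\mb}$-formulas with $\neg\phi$ and $\neg j$ collapse to $\neg\sigma(\ldots,\phi,\ldots)$ and $\neg\sigma(\ldots,j,\ldots)$, so a contraposition turns this into $\sigma(\ldots,j,\ldots)\wedge\sigma^{\mb}(\ldots,@_j^{s_i}\phi,\ldots) \to \sigma(\ldots,\phi,\ldots)$. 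The last ingredient I need is the ``globality of $@$'' fact $@_j^s\phi \to \sigma^{\mb}(\alpha_1,\ldots,@_j^{s_i}\phi,\ldots,\alpha_n)$ for arbitrary $\alpha_k$'s, which I would extract from $(Back)$ by contraposing and pushing negations past $@$ via $(SelfDual)$ and $(Dual_\sigma)$; instantiated with $\alpha_k:=\neg\phi_k$ it supplies the missing antecedent.

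For part 2, I would assume $\vds{s}\phi\to j$. Monotonicity of $\sigma$ in the $i$-th argument --- a routine consequence of $(UG)$, $(K_\sigma)$, and $(Dual_\sigma)$, proved in the same style as above --- yields $\sigma(\ldots,\phi,\ldots)\to\sigma(\ldots,j,\ldots)$. For the second conjunct, $(Intro)$ gives $j\wedge\phi\to @_j^{s}\phi$; chaining with the hypothesis $\phi\to j$ produces $\phi\to @_j^{s}\phi$, which monotonicity lifts to $\sigma(\ldots,\phi,\ldots)\to\sigma(\ldots,@_j^{s}\phi,\ldots)$, and $(Back)$ closes with $@_j^t\phi$. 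The main obstacle throughout is the bookkeeping in (Bridge): two successive $(K_\sigma)$-applications at a fixed position, with the other positions held at $\neg\phi_k$, followed by recognition of the resulting $\sigma^{\mb}$-formulas as negations of $\sigma$-formulas via $(Dual_\sigma)$. Once the pattern is set, everything else is mechanical.
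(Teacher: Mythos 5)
Your proposal is correct and establishes all four items. For $(Nom)$ and for part~2 you follow essentially the paper's own path: $(Intro)$ at the inner sort, $(Gen@)$, two uses of $(K@)$, then $(Agree)$ for $(Nom)$; and monotonicity of $\sigma$, the auxiliary fact $\phi\to @_j^s\phi$, and $(Back)$ for part~2. The differences are in $(Sym)$ and $(Bridge)$. For $(Sym)$ the paper attempts a direct propositional derivation whose listed steps are in fact garbled (it invokes $(@_j^sk\to @_j^sk)\to @_j^sk$ as a propositional law); your derivation as the instance $\phi:=k$ of $(Nom)$ discharged by $(Ref)$ is shorter and actually sound, so this is an improvement. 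For $(Bridge)$ the paper first proves $\sigma(\ldots,j,\ldots)\wedge\sigma^{\mb}(\ldots,\neg\phi_{i-1},\psi,\neg\phi_{i+1},\ldots)\to @_j^s\psi$ by passing through $\sigma(\ldots,j\wedge\psi,\ldots)$ (the polyadic form of $\Diamond p\wedge\Box q\to\Diamond(p\wedge q)$, then $(Intro)$, monotonicity and $(Back)$), and only afterwards instantiates $\psi:=\neg\phi$ and contraposes using $(SelfDual)$ and $(Dual_\sigma)$. You instead distribute the $(Intro)$-consequence $@_j^{s_i}\phi\to(\neg\phi\to\neg j)$ through $\sigma^{\mb}$ by $(UG)$ and two applications of $(K_\sigma)$, collapse the $\neg\phi$- and $\neg j$-slots via $(Dual_\sigma)$, and discharge the remaining antecedent $\sigma^{\mb}(\ldots,@_j^{s_i}\phi,\ldots)$ with the dualized form of $(Back)$. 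Both routes rest on the same axioms ($(Intro)$, $(Back)$, duality) and are of comparable length; yours makes explicit the $(K_\sigma)$-bookkeeping that the paper hides under ``ML,'' at the cost of the auxiliary fact $@_j^s\phi\to\sigma^{\mb}(\ldots,@_j^{s_i}\phi,\ldots)$, which does follow from $(Back)$, $(SelfDual)$ and $(Dual_\sigma)$ as you indicate, together with one routine replacement of $\neg\neg\phi$ by $\phi$ inside $\sigma^{\mb}$ (again $(UG)$ plus $(K_\sigma)$).
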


\begin{proof}
In the sequel, by PL we mean classical propositional logic and by ML we mean the basic modal logic.

1. $(Nom)$ 
 
\noindent$(1)$ $ \vds{t} j \to (\phi \leftrightarrow @_j^t \phi) $\hfill $(Intro)$

\noindent$(2)$ $ \vds{s} @_k^s (j \to (\phi \leftrightarrow @_j^t \phi))$ \hfill  $(Gen@)$

\noindent$(3)$ $ \vds{s} @_k^s (j \to (\phi \leftrightarrow @_j^t \phi)) \to (  @_k^s j \to  @_k^s (\phi \leftrightarrow @_j^t \phi)) $ \hfill  $(K@)$

\noindent$(4)$  $\vds{s}  @_k^s j \to  @_k^s (\phi \leftrightarrow @_j^t \phi) $ \hfill  $(MP):(2),(3)$

\noindent$(5)$  $\vds{s}  @_k^s (\phi \leftrightarrow @_j^t \phi) \leftrightarrow (@_k^s \phi \leftrightarrow @_k^s @_j^t \phi)$  \hfill ML

\noindent$(6)$  $\vds{s}  @_k^s j \to (@_k^s \phi \leftrightarrow @_k^s @_j^t \phi)$\hfill PL:$(4),(5)$

\noindent$(7)$  $\vds{s}  @_k^s @_j^t \phi  \leftrightarrow  @_j^s \phi$ \hfill $(Agree)$

\noindent$(8)$  $\vds{s}  @_k^s j \to (@_k^s \phi \leftrightarrow @_j^s \phi)$\hfill PL:$(6),(7)$

\medskip

$(Sym)$

\noindent$(1)$ $ \vds{s} @_k^s j \wedge @_j^s k\to @_j^s k$ \hfill $Taut$\\
\noindent$(2)$ $ \vds{s} @_k^s j \wedge @_j^s k \to @_j^s k) \to (@_k^s j \to( @_j^s k\to @_j^s k))$ \hfill $Taut$\\
\noindent$(3)$ $ \vds{s} @_k^s j \to( @_j^sk\to @_j^s k)$ \hfill $(MP):(1),(2)$\\
\noindent$(4)$ $ \vds{s}( @_j^sk\to @_j^s k)\to @_j^s k$ \hfill  PL\\
\noindent$(5)$ $ \vds{s} @_k^s j \to @_j^s k$  \hfill PL\\
\noindent$(6)$ $ \vds{s} @_j^s k \to @_k^s j$  \hfill Analogue\\
\noindent$(7)$ $ \vds{s} @_j^s k \leftrightarrow @_k^s j$\hfill PL:$(5),(6)$
 
   \medskip

$(Bridge)$ 

\noindent$(1)$  $\  \vds{s} \sigma(\ldots \phi_{i_1}, j ,\phi_{i+1} \ldots) \wedge \sigma^{\mb}(\ldots,\neg\phi_{i-1}, \phi ,\neg\phi_{i+1}, \ldots)\to \hfill $

\noindent\hfill$\sigma (\ldots\phi_{i-1},j \wedge \phi ,\phi_{i+1}, \ldots)$ ML

\noindent$(2)$  $\vds{s}j \wedge\phi \to @_j^s \phi $ \hfill $(Intro)$\\
$(3)$  $\vds{s}\sigma (\ldots\phi_{i-1},j \wedge \phi ,\phi_{i+1}, \ldots) \to \sigma (\ldots\phi_{i-1}, @_j^s \phi ,\phi_{i+1}, \ldots)$ \hfill ML

\noindent$(4)$  $\vds{s} \sigma (\ldots\phi_{i-1}, @_j^s \phi ,\phi_{i+1}, \ldots) \to @_j^s \phi$ \hfill  $(Back)$

\noindent$(5)$   $ \vds{s} \sigma(\ldots \phi_{i_1}, j ,\phi_{i+1} \ldots) \wedge \sigma^{\mb}(\ldots,\neg\phi_{i-1}, \phi ,\neg\phi_{i+1}, \ldots)\to  @_j^s \phi$\hfill PL

\noindent$(6)$   $ \vds{s} \sigma(\ldots \phi_{i_1}, j ,\phi_{i+1} \ldots) \wedge \sigma^{\mb}(\ldots,\neg\phi_{i-1}, \neg\phi ,\neg\phi_{i+1}, \ldots)\to  @_j^s \neg\phi$ \hfill $(5)$

\noindent$(7)$   $ \vds{s} \neg @_j^s \neg\phi \to \neg (\sigma(\ldots \phi_{i_1}, j ,\phi_{i+1} \ldots) \wedge \sigma^{\mb}(\ldots,\neg\phi_{i-1},\neg\phi ,\neg\phi_{i+1}, \ldots) )$ \hfill PL

\noindent$(8)$   $ \vds{s}  @_j^s \phi \to (\neg \sigma(\ldots \phi_{i_1}, j ,\phi_{i+1} \ldots) \vee \neg \sigma^{\mb}(\ldots,\neg \phi_{i-1}, \neg\phi ,\neg \phi_{i+1}, \ldots) )$ \hfill PL

\noindent$(9)$   $ \vds{s}  @_j^s \phi \to (\neg \sigma(\ldots \phi_{i_1}, j ,\phi_{i+1} \ldots) \vee \sigma(\ldots,\phi_{i-1}, \phi , \phi_{i+1}, \ldots) )$ \hfill$(Dual)$

\noindent$(9)$   $ \vds{s}  @_j^s \phi \to ( \sigma(\ldots \phi_{i_1}, j ,\phi_{i+1} \ldots) \to \sigma(\ldots,\phi_{i-1}, \phi , \phi_{i+1}, \ldots) )$ \hfill PL

\noindent$(10)$   $ \vds{s}  @_j^s \phi \wedge  \sigma(\ldots \phi_{i_1}, j ,\phi_{i+1} \ldots) \to \sigma(\ldots,\phi_{i-1}, \phi , \phi_{i+1}, \ldots) $ \hfill PL

 \medskip
2.

\noindent$(1)$  $ \vds{s} j \to (\neg \phi \leftrightarrow @^s_j \neg \phi)$ \hfill  $(Intro)$\\
$(2)$   $ \vds{s} j \to (\neg \phi \leftrightarrow @^s_j \neg \phi) \to (j \to ( @_j^s \neg \phi \to \neg \phi))$\hfill  PL\\
$(3)$   $ \vds{s} j \to ( @_j^s \neg \phi \to \neg \phi)$\hfill $(MP):(1),(2)$\\
$(4)$   $ \vds{s} (j \to ( @_j^s \neg \phi \to \neg \phi))\to (j \wedge @_j^s \neg \phi\to \neg \phi) $ \hfill PL\\
$(5)$   $ \vds{s} j \wedge @_j^s \neg \phi\to \neg \phi $\hfill $(MP):(3),(4)$\\
$(6)$   $ \vds{s} \phi\to (\neg j \vee @_j^s \phi)$\hfill PL,$(SelfDual)$\\
$(7)$ $\vds{s}\phi \to j$ \hfill hypothesis\\
$(8)$   $ \vds{s} \phi\to (\neg j \vee @_j^s \phi)\wedge j $\hfill PL\\
$(9)$   $ \vds{s} \phi\to @_j^s \phi\wedge j $\hfill PL\\
$(10)$   $ \vds{s} (\phi\to @_j^s \phi)\wedge(\phi \to j) $\hfill PL\\
$(11)$   $ \vds{s} \phi\to @_j^s \phi $\hfill PL\\

Therefore, if $\vds{s}\phi \to j$ then  $ \vds{s} \phi\to @_j^s \phi $.

\noindent$(1)$ $\vds{s}\phi \to j$ \hfill hypothesis

\noindent$(2)$ $\vds{t}\sigma(\ldots,\psi_{i-1}, \phi, \psi_{i+1},\ldots) \to \sigma(\ldots,\psi_{i-1}, j,\psi_{i+1},\ldots)$\hfill ML$(1)$

\noindent$(3)$   $ \vds{s} \phi\to @_j^s \phi $ \hfill $ (1)$

\noindent$(4)$ $\vds{t}\sigma(\ldots,\psi_{i-1}, \phi, \psi_{i+1},\ldots) \to \sigma(\ldots,\psi_{i-1},  @_j^s \phi,\psi_{i+1},\ldots)$ \hfill ML$(3)$

\noindent$(5)$ $\vds{t}\sigma(\ldots,\psi_{i-1}, \phi, \psi_{i+1},\ldots) \to  @_j^t \phi$ \hfill $(Back)$,PL$(4)$

\noindent$(6)$ $\vds{t}\sigma(\ldots,\psi_{i-1}, \phi, \psi_{i+1},\ldots) \to  (\sigma(\ldots,\psi_{i-1}, j,\psi_{i+1},\ldots) \wedge @_j^t \phi)$\hfill PL:$(2),(5)$

Therefore, if $\vds{s}\phi\to j$ then 
$\vds{t}\sigma(\ldots,\phi,\ldots)\to \sigma(\ldots,j,\ldots)\wedge @_j^t\phi$.
\end{proof}

Let $\bot_s$ denote a formula of sort $s$ that is \textit{nowhere true}. If $s\in S$ and  $\Gamma_s$ is a set of formulas of sort $s$, then $\Gamma_s$ is  \textit{consistent} if $\Gamma_s\ \not\!\!\!\!\vds{s} \bot_s$. An \textit{inconsistent} set of formulas is a set of formulas  of the same sort that is not consistent.  Maximal consistent sets are defined as usual.

\medskip

In the rest of the section we develop the proof of the strong completeness theorem for our hybrid logical systems, possibly extended with additional axioms.  If $\Lambda$ is a set of formulas, we denote by  ${\mathcal H}(@)+\Lambda$ and ${\mathcal H}(@,\forall)+\Lambda$ the systems obtained when the formulas of $\Lambda$ are seen as additional axiom schemes. 

The main steps are: the extended Lindenbaum Lemma, the construction of the Henkin model and the Truth Lemma (all of them extending the similar results in the mono-sorted case). In order to state our extended Lindenbaum Lemma, we need to define the \textit{named, pasted and $@$-witnessed} sets of formulas.

\begin{definition}[Named, pasted and $@$-witnessed sets]
Let $s\in S$ and $\Gamma_s$ be a set of formulas of sort $s$ from 
 ${\mathcal H}_{\bf \Sigma}(@)$. We say that 
 \vspace*{-0.2cm}
 \begin{itemize}
 \item $\Gamma_s$ is {\sf named} if one of its elements is a nominal or a constant nominal,
 \item $\Gamma_s$ is {\sf pasted} if, for any $t\in S$,  $\sigma\in\Sigma_{s_1\cdots s_n,t}$,  $k\in {\rm NOM}_{t}\cup{\nom}_t$, and $\phi$ a formula of sort $s_i$, whenever $@_k^s\sigma(\ldots, \phi_{i-1},\phi,\phi_{i+1},\ldots)\in \Gamma_s$ there exists a nominal $j\in {\rm NOM}_{s_i}$ such that $@_k^s\sigma(\ldots, \phi_{i-1},j,\phi_{i+1},\ldots)\in \Gamma_s$ and  $@_j^s\phi\in \Gamma_s$. 
\end{itemize} 
If $\Gamma_s$ be a set of formulas of sort $s$ from ${\mathcal H}_{\bf \Sigma}(@,\forall)$ then we say that 
 \vspace*{-0.2cm}
 \begin{itemize}
 \item $\Gamma_s$ is {\sf $@$-witnessed} if the following two conditions are satisfied:
\begin{itemize}
 \item[{\rm (-)}] for $s',t\in S$ , $x\in {\rm SVAR}_t$, $k\in {\rm NOM}_{s'}\cup{\nom}_{s'}$ and any formula $\phi$ of sort $s'$,  whenever $@_k^s\exists x\, \phi\in \Gamma_s$ there exists $j\in {\rm NOM}_t$ such that $@_k^s\phi[j/x]\in\Gamma_s$,
 \item[{\rm (-)}] for any $t\in S$ and $x\in {\rm SVAR}_t$ there is $j_s\in {\rm NOM}_t$ such that $@_{j_x}^s x\in \Gamma_s$. 
 \end{itemize}
 \end{itemize}
 \end{definition}

\begin{lemma}[Extended Lindenbaum Lemma]\label{lemma:lind}
Let $\Lambda$ be a set of  formulas in the language of ${\mathcal H}_{\bf\Sigma}(@)$ {\rm (}in the language of ${\mathcal H}_{\bf\Sigma}(@,\forall)${\rm )} and $s\in S$. Then any consistent set $\Gamma_s$ of formulas of sort $s$ from  ${\mathcal H}_{\bf\Sigma}(@)+\Lambda$ {\rm (}from ${\mathcal H}_{\bf\Sigma}(@,\forall)+\Lambda${\rm )} can be extended to a named and pasted {\rm (}named, pasted and $@$-witnessed {\rm )} maximal consistent set by adding countably many nominals to the language.   
\end{lemma}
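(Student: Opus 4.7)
The plan is to mirror the classical Henkin-style Lindenbaum construction, adapted to the many-sorted hybrid setting. First I would enlarge the signature by adjoining countably many fresh nominals of every sort $t\in S$ (and, for ${\mathcal H}_{\bf\Sigma}(@,\forall)$, countably many fresh state variables of every sort as well); since none of these new symbols appear in any derivation from $\Gamma_s$, the set $\Gamma_s$ stays consistent in the enlarged language. Next I would ``name'' the set: pick a fresh nominal $k\in {\rm NOM}_s$ and set $\Gamma_s^0 := \Gamma_s\cup\{k\}$; consistency is preserved because an inconsistency involving $k$ could, by replacing $k$ with a fresh state variable and invoking Lemma~\ref{lem:gennom} together with the axiom $(Name)$, be reflected back into a derivation of $\bot_s$ from $\Gamma_s$ alone.

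I would then enumerate all formulas of sort $s$ in the enlarged language as $\phi_0,\phi_1,\ldots$, interleaved with an enumeration of the state variables of every sort, and construct an increasing chain $\Gamma_s^0\subseteq\Gamma_s^1\subseteq\cdots$ of consistent sets. At stage $n{+}1$, if $\Gamma_s^n\cup\{\phi_n\}$ is inconsistent, set $\Gamma_s^{n+1} := \Gamma_s^n\cup\{\neg\phi_n\}$; otherwise take $\Gamma_s^{n+1}$ to be $\Gamma_s^n\cup\{\phi_n\}$ together with one witness, chosen according to the shape of $\phi_n$. If $\phi_n = @_k^s\sigma(\ldots,\psi,\ldots)$ with $\psi$ of sort $s_i$, adjoin both $@_k^s\sigma(\ldots,j,\ldots)$ and $@_j^s\psi$ for a fresh $j\in{\rm NOM}_{s_i}$; if $\phi_n = @_k^s\exists x\,\psi$ with $x\in{\rm SVAR}_t$, adjoin $@_k^s\psi[j/x]$ for a fresh $j\in{\rm NOM}_t$; and at the stages indexed by state variables $x\in{\rm SVAR}_t$, adjoin $@_{j_x}^s x$ for a fresh $j_x\in{\rm NOM}_t$. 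Setting $\Gamma_s^{\omega} := \bigcup_n \Gamma_s^n$ and extending to a maximal consistent superset (possible because maximality is preserved under unions of consistency chains) yields a named, pasted, and (in the second system) $@$-witnessed maximal consistent set.

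The main obstacle will be the consistency-preservation step at each witnessing stage. For the pasting clause this is exactly what $(Paste)$ gives: an inconsistency after adjoining $@_k^s\sigma(\ldots,j,\ldots)$ and $@_j^s\psi$ with $j$ fresh can be rearranged into a derivation of $@_k^s\sigma(\ldots,j,\ldots)\wedge @_j^s\psi\to\chi$ for a suitable $\chi$ already refutable in $\Gamma_s^n\cup\{\phi_n\}$, so $(Paste)$ yields $@_k^s\sigma(\ldots,\psi,\ldots)\to\chi$, contradicting the consistency of $\Gamma_s^n\cup\{\phi_n\}$. The existential witnessing step is dual: an inconsistency after adjoining $@_k^s\psi[j/x]$ with a fresh nominal $j$ can, via Lemma~\ref{lem:gennom} (replace $j$ by a fresh variable and universally generalise) together with $(Q2)$ and $(Barcan@)$, be turned into $\Gamma_s^n\vds{s} @_k^s\forall x\,\neg\psi$, i.e.\ $\Gamma_s^n\vds{s}\neg\phi_n$. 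The state-variable clause follows from $(Name)$ and the existential clause already established. The one genuinely many-sorted subtlety is bookkeeping: one must ensure that each witness has the sort demanded by the argument position of $\sigma$, respectively the sort of the bound variable $x$, and that infinitely many fresh nominals of every sort remain available throughout the enumeration; this is handled by fixing, before the construction starts, a partition of the new nominals of each sort into countably many infinite reservoirs.
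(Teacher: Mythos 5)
Your overall construction is the same Henkin-style argument the paper uses -- adjoin countably many fresh nominals of each sort, name the set, enumerate formulas and adjoin witnesses, with consistency of the pasting clause discharged by $(Paste)$ and consistency of the existential clause discharged by Lemma~\ref{lem:gennom} together with the quantifier axioms, $(SelfDual)$ and $(Barcan@)$ -- and those two witnessing steps are sketched correctly. However, your justification of the naming step does not go through for half of the lemma. You propose to show consistency of $\Gamma_s\cup\{k\}$ by replacing the fresh nominal $k$ with a fresh state variable and invoking Lemma~\ref{lem:gennom} and the axiom $(Name)$; but the lemma is also claimed for ${\mathcal H}_{\bf\Sigma}(@)$, which has no state variables, no binders, no axiom $(Name)$ and no analogue of Lemma~\ref{lem:gennom}. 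The argument that works uniformly in both systems, and the one the paper gives, is purely $@$-based: from $\vds{s} k\to\neg\theta$ derive $\vds{s} @_k^s k\to @_k^s\neg\theta$ by $(Gen@)$ and $(K@)$, obtain $\vds{s} @_k^s\neg\theta$ from $(Ref)$ and $(MP)$, and then conclude $\vds{s}\neg\theta$ by $(Name@)$, since $k$ is fresh for $\theta$; this contradicts the consistency of $\Gamma_s$.

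The second gap is the clause that makes the set $@$-witnessed in its second sense, namely that for every sort $t$ and every $x\in{\rm SVAR}_t$ some $@_{j_x}^s x$ ends up in the set. You assert this ``follows from $(Name)$ and the existential clause already established,'' but it does not: witnessing the theorem $@_k^s\exists x\, x$ adjoins $@_k^s(x[j/x])=@_k^s j$, which carries no information about where $x$ is assigned. Nor can you run a Lemma~\ref{lem:gennom}-style replacement of the fresh nominal $j_x$ by a state variable here, because in this logic the satisfaction operator is defined only on nominals (Remark~\ref{rem3}), so $@_{j_x}^s x$ is not of the form $\phi[j_x/y]$ for any formula $\phi$. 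This clause therefore needs its own consistency argument; the paper supplies one by applying $(SelfDual)$ and then arguing via soundness that $@_{j_x}^s\neg x$ cannot be a theorem because it is not valid. The remaining differences from the paper (adding $\neg\phi_n$ when $\phi_n$ cannot be consistently added rather than skipping, interleaving the $@_{j_x}^s x$ formulas into the enumeration rather than adding them all at the start, and the reservoir bookkeeping for fresh nominals of each sort) are harmless variants.
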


\begin{proof}
The proof generalizes to the $S$-sorted setting well-known proofs for the mono-sorted hybrid logic, see \cite[Lemma 7.25]{mod}, \cite[Lemma 3, Lemma 4]{pureax}, \cite[Lemma 3.9]{hyb}. 

For each sort $s\in S$, we add a set of new nominals and enumerate this set. Given a set of formulas $\Gamma_s$, define $\Gamma_s^k$ to be $\Gamma_s \cup \{ k_s\} \cup \{@_{j_x}^s x| \ x \in {\rm SVAR_s} \}$, where $k_s$ is the first new nominal of sort $s$ in our enumeration and $j_x$ are such that if $x$ and $y$ are different state variables of sort $s$ then also $j_x$ and $j_y$ are different nominals of same sort $s$. Now that we know we are working on the sort $s$, we will write $
k$ instead of $k_s$.

Suppose $\Gamma_s^k$ is not consistent. Then there exists some conjunction of formulas $\theta \in \Gamma_s$ such that $\vds{s} k \to \neg \theta$. We use the $(Gen@)$ rule and the $(K@)$ axiom to prove that \mbox{$\vds{s} @_k^s k \to @_k^s \neg \theta$}. From the $(Ref)$ axiom and the $(MP)$ rule it follows $\vds{s}@_k^s \neg \theta$. Remember that $k$ is a new nominal, so it does not occur in $\theta$ and we use $(Name@)$ rule to get that $\vds{s} \neg \theta\Rightarrow \neg \theta \in \Gamma_s$. But this contradicts the consistency of $\Gamma_s$. Now, we prove the case for the additional $@_{j_x}^s x$ formulas. Suppose $\vds{s} \theta \to \neg @_{j_x}^s x$. We use the $(SelfDual)$ axiom to get $\vds{s} \neg \theta \vee @_{j_x}^s \neg x$. If \mbox{$\vds{s} \neg \theta $}, this contradicts the consistency of $\Gamma_s$. If \mbox{$\vds{s}  @_{j_x}^s \neg x$}, then $\mos{s} @_{j_x}^s \neg x$. Hence, for any model $\mathcal{M}$, any assignment function $g$ and any world $w \in W_s$, we have $\mathcal{M},g, w \mos{s} @_{j_x}^s \neg x$ if and only if $\mathcal{M}, g ,v\mos{s} \neg x$ where $V^{\nom}_s(j_x)=\{v\}$. Then for any model $\mathcal{M}$ and any assignment $g$, $g_s(x) \neq V^{\nom}_s(j_x)$, contradiction.

Now we enumerate  on each sort $s \in S$  all the formulas of the new language obtained by adding the set of new nominals and define $\Gamma^0 := \Gamma_s^k$. Suppose we have defined $\Gamma^m$, where $m \geq 0$. Let $\phi_{m+1}$ be the $m+1-th$ formula of sort $s$ in the previous enumeration. We define $\Gamma^{m+1}$ as follows. If $\Gamma^{m}\cup \{\phi_{m+1}\}$ is inconsistent, then $\Gamma^{m+1} = \Gamma^{m}$. Otherwise:
\begin{itemize}
\item[(i)] $\Gamma^{m+1} = \Gamma^{m} \cup  \{\phi_{m+1}\} $, if $\phi_{m+1}$ is neither  of the form $@_j\sigma(\ldots, \varphi, \ldots)$, nor of the form $@_j \exists x\varphi(x)$, where $j$ is any nominal of sort $s''$, $\varphi$ a formula of sort $s''$ and $x \in {\rm SVAR_{s''}}$.
\item[(ii)] $\Gamma^{m+1} = \Gamma^{m} \cup  \{\phi_{m+1}\} \cup \{@_j \sigma(\ldots, k, \ldots) \wedge @_k \varphi \} $, if $\phi_{m+1}$ is of the form $@_j \sigma(\ldots, \varphi, \ldots)$.
\item[(iii)] $\Gamma^{m+1} = \Gamma^{m} \cup  \{\phi_{m+1}\} \cup \{ @_j \varphi[k/x]\}$, where $\phi_{m+1} $ is of the form $@_j \exists x\varphi(x)$.
\end{itemize}
In clauses $(ii)$ and $(iii)$, $k$ is the first new nominal in the enumeration that does not occur neither in $\Gamma^i$ for all $i \leq m$, nor in $@_j \sigma(\ldots, \varphi, \ldots)$.

Let $\Gamma ^+= \bigcup_{n\geq 0} \Gamma^n$. Because $k \in \Gamma^0 \subseteq \Gamma^+$, this set in named, maximal, pasted and $@$-witnessed by construction. We will check if it is consistent for the expansion made in the second and third items.

Suppose $\Gamma^{m+1} = \Gamma^{m} \cup  \{\phi_{m+1}\} \cup \{@_j \sigma(\ldots, k, \ldots) \wedge @_k \varphi \} $ is an inconsistent set, where $\phi_{m+1}$ is $@_j \sigma(\ldots, \varphi, \ldots)$. Then there is a conjunction of formulas $\chi \in \Gamma^m \cup \{\phi_{m+1}\} $ such that \mbox{$\vds{s} \chi \to \neg ( @_j \sigma(\ldots, k, \ldots) \wedge @_k \varphi)$} and so \mbox{$\vds{s} @_j \sigma(\ldots, k, \ldots) \wedge @_k \varphi \to \neg \chi$.} But $k$ is the first new nominal in the enumeration that does not occur neither in $\Gamma^m$, nor in $@_j \sigma(\ldots, \varphi, \ldots)$ and by Paste rule we get $\vds{s} @_j \sigma(\ldots, \varphi, \ldots) \to \neg \chi \Rightarrow \vds{s} \chi \to \neg @_j \sigma(\ldots, \varphi, \ldots)$, which contradicts the consistency of $\Gamma^m \cup  \{\phi_{m+1}\}$. 

Suppose  $\Gamma^{m+1} = \Gamma^{m} \cup  \{\phi_{m+1}\} \cup \{ @_j \varphi[k/x]\}$ is inconsistent, where $\phi_{m+1}$ is $ @_j \exists x\varphi(x)$. Then there is a conjunction of formulas $\chi \in \Gamma^m \cup \{\phi_{m+1}\}$ such that \mbox{$\vds{s}$ $ \chi \to \neg  @_j \varphi[k/x]$}, where $k$ is the new nominal. By generalization on nominals (Lemma \ref{lem:gennom}) we can prove \mbox{$\vds{s}$} $ \forall y( \chi \to \neg   @_j \varphi[y/x])$, where $y$ is a state variable that does not occur in $\chi \to \neg   @_j \varphi[k/x]$. Using $(Q1)$ axiom, we get \mbox{$\vds{s}$} $ \chi \to   \forall y\neg   @_j \varphi[y/x]$ and by $(SelfDual)$ \mbox{$\vds{s}$} $ \chi \to   \forall y   @_j \neg\varphi[y/x] $. Next, we use $(Barcan@)$ to get \mbox{$\vds{s}$} $ \chi \to  @_j  \forall y \neg \varphi[y/x])$. Because $x$ has no free occurrences in $\varphi[y/x]$, we can prove that $  @_j  \forall y \neg \varphi[y/x]) \leftrightarrow @_j \forall x \neg \varphi$. Therefore, \mbox{$\vds{s}$ $ \chi \to  @_j \forall x \neg \varphi$ }, so \mbox{$\vds{s}$} $ \chi \to  @_j \neg \exists x \varphi $ . Use once again $(SelfDual)$ and we have \mbox{$\vds{s}$} $ \chi \to \neg @_j  \exists x \varphi $. Then $\neg @_j  \exists x \varphi $ $\in \Gamma^m \cup \{\phi_{m+1}\}$, but this contradicts the consistency of $\Gamma^m \cup \{\phi_{m+1}\}$.
\end{proof}

We are now ready to define a Henkin  model, see \cite{hand,pureax}
for the mono-sorted hybrid modal logic. 

\begin{definition}[The Henkin model]\label{def:model} Let $s\in S$ and assume  $\Gamma_s$ is a maximal consistent set of formulas of sort $s$ from ${\mathcal H}_{\bf\Sigma}(@)$ {\rm(}from ${\mathcal H}_{\bf\Sigma}(@,\forall)${\rm)}. For any $t\in S$ and $j\in {\rm NOM}_t\cup {\nom}_t$ we define 
$|j|=\{k\in NOM_t\cup{\nom}_t\,| \, @_j^sk\in \Gamma_s\}$. The {\sf Henkin model} is  ${\mathcal M}^{\Gamma_s}=(W^\Gamma, (R_\sigma^\Gamma)_{\sigma\in\Sigma},(|c|)_{c\in{\nom}},V^\Gamma)$ where
 \vspace*{-0.2cm}
\begin{center}
\begin{tabular}{rcl}
$W_t^\Gamma$ & $=$ & $\{|j|\,\,| j\in {\rm NOM}_t\cup{\nom}_t\}$ for any $t\in S$\\
$(|j|,|j_1|,\ldots,|j_n|)\in R_\sigma^\Gamma$ & iff  & $@_j^s\sigma(j_1,\ldots,j_n)\in\Gamma_s$ for any $\sigma\in \Sigma_{t_1\cdots t_n,t}$ \\
$V_t^\Gamma(p)$ & $=$ & $\{|j|\,\, | j\in {\rm NOM}_t\cup {\rm {\nom}}_t, @_j^sp\in \Gamma_s\}$\\
 & & \hfill for any $t\in S$ and $p\in {\rm PROP}_t$\\
$V_t^\Gamma(j)$ & $=$ & $\{|j|\}$  for any $t\in S$ and $j\in {\rm NOM}_t$.
\end{tabular}
\end{center}
For the system  ${\mathcal H}_{\bf\Sigma}(@,\forall)$, under the  additional assumption that $\Gamma_s$ is $@$-witnessed,  we define 
the assignment $g^\Gamma : {\rm SVAR}\to W^\Gamma$ by  
 \vspace*{-0.2cm}
\begin{center}
\begin{tabular}{rcl}
$g_t^\Gamma(x)$ & $=$ & $|j|$ where $t\in S$, $x\in {\rm SVAR}_t$ and  $j\in {\rm NOM}_t$ such that $@_j^sx\in \Gamma_s$.
\end{tabular}
\end{center}
\end{definition}

\begin{lemma}\label{lema:henkin}
The Henkin model from Definition \ref{def:model} is well-defined.
\end{lemma}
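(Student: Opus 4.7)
The plan is to verify four things: (1) the collection of equivalence classes $|j|$ is well-defined as a set of sets of nominals, and the underlying relation $j\sim k$ iff $@_j^s k\in\Gamma_s$ is an equivalence; (2) the accessibility relations $R_\sigma^\Gamma$ do not depend on the choice of representatives; (3) the valuations $V_t^\Gamma(p)$ and $V_t^\Gamma(j)$ are representative-independent and non-empty where required; and (4) for the system ${\mathcal H}_{\bf \Sigma}(@,\forall)$, the assignment $g^\Gamma$ exists and does not depend on the chosen nominal.

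First I would establish that $\sim$ is an equivalence relation on $\mathrm{NOM}_t\cup \nom_t$ for each $t$. Reflexivity is immediate from the axiom $(Ref)$, which gives $@_j^s j\in\Gamma_s$. Symmetry is the theorem $(Sym)$ from Lemma \ref{lemma:bridge}(1). For transitivity, if $@_j^s k,\,@_k^s l\in\Gamma_s$, then the theorem $(Nom)$ specialized to $\phi=l$ yields $@_j^s l\leftrightarrow @_k^s l$, hence $@_j^s l\in\Gamma_s$. As an immediate consequence, $|j|=|k|$ iff $j\sim k$, and $j\in |j|$ always. Non-emptiness of $W_t^\Gamma$ at every sort $t$ is ensured by the Lindenbaum construction, which adds countably many new nominals on each sort.

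Next, I would handle the valuation of propositional variables: if $|j|=|j'|$, then $@_j^s j'\in\Gamma_s$ (since $j'\in|j'|=|j|$), and applying $(Nom)$ with $\phi=p$ gives $@_j^s p\leftrightarrow @_{j'}^s p$, so the membership condition is representative-independent. For $R_\sigma^\Gamma$ I would iterate this idea position by position: assuming $|j|=|j'|$ and $|j_i|=|j'_i|$ for every $i$, first use $(Nom)$ with $\phi=\sigma(j_1,\ldots,j_n)$ to swap $j$ for $j'$ in $@_j^s\sigma(\bar j)$; then, for each argument slot in turn, lift the $(Bridge)$ theorem under $@_{j'}^s$ using $(Gen@)$, $(K@)$, $(Agree)$, and the standard modal distribution $@_{j'}^s(A\wedge B)\leftrightarrow @_{j'}^s A\wedge @_{j'}^s B$, yielding
\[
@_{j'}^s\sigma(\ldots,j_i,\ldots)\wedge @_{j_i}^s j'_i\ \to\ @_{j'}^s\sigma(\ldots,j'_i,\ldots),
\]
so that modus ponens with $@_{j_i}^s j'_i\in\Gamma_s$ completes the substitution. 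The clauses for the singleton $V_t^\Gamma(j)=\{|j|\}$ and for the constant nominal designations $|c|$ then pose no further problem.

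Finally, for the assignment $g^\Gamma$ in ${\mathcal H}_{\bf \Sigma}(@,\forall)$: existence of some $j\in\mathrm{NOM}_t$ with $@_j^s x\in\Gamma_s$ is precisely the second $@$-witnessing clause. Uniqueness up to $\sim$ uses the axiom $(Nom\,x)$: if $@_j^s x,\,@_{j'}^s x\in\Gamma_s$, then $@_j^s j'\in\Gamma_s$, so $|j|=|j'|$, and $g_t^\Gamma(x)$ is unambiguously defined. The main technical obstacle I anticipate is the iterative substitution step for $R_\sigma^\Gamma$, since each argument slot must be swapped separately and $(Bridge)$ has to be lifted under $@_{j'}^s$ via $(Gen@)$, $(K@)$, and $(Agree)$; the reasoning is routine but needs care to thread through the many-sorted satisfaction operators.
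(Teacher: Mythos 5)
Your proposal is correct and takes essentially the same route as the paper's own (much terser) proof, which simply cites $(Ref)$, $(Nom)$ and $(Bridge)$ for the well-definedness of $R_\sigma^\Gamma$ and the valuation, and $(Nom\,x)$ together with $@$-witnessing for $g^\Gamma$. You have merely spelled out the details the paper leaves implicit (the equivalence-relation check and the slot-by-slot lifting of $(Bridge)$ under $@_{j'}^s$), and these details check out.
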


\begin{proof}
Let $s\in S$ and assume that $\Gamma_s$ is a set of formulas of sort $s$.  Note that $R_\sigma^\Gamma$ is well-defined by $(Nom)$ and $(Bridge)$ from Lemma \ref{lemma:bridge}. For $t\in S$ and $j\in {\rm NOM}_t$, $V^\Gamma(j)$ is well-defined by axiom $(Ref)$. For the system  ${\mathcal H}_{\bf\Sigma}(@,\forall)$, we further   that $\Gamma_s$ is also  $@$-witnessed so, for any $t\in S$ and $x\in {\rm SVAR}_t$, there is a nominal $j\in {\rm NOM}_t $ such that $@_j^sx\in\Gamma$. The fact that $g^\Gamma$ is well-defined follows by $(Nom\, x)$.
\end{proof}

\begin{lemma}[Truth Lemma]\label{lemma:truth}
 \vspace*{-0.2cm}
\begin{itemize}
\item[1.] Let $s\in S$ and assume  $\Gamma_s$ is a named and pasted maximal consistent set of formulas of sort $s$ from ${\mathcal H}_{\bf\Sigma}(@)$. For any sort $t\in S$, $j\in {\rm NOM}_t\cup{\nom}_t$ and for any formula $\phi$ of sort $t$ we have ${\mathcal M}^\Gamma,|j|\mos{t}\phi \mbox { iff } @_j^s\phi\in\Gamma_s. $
\item[2.] Let $s\in S$ and assume  $\Gamma_s$ is a named, pasted and $@$-witnessed maximal consistent set of formulas of sort $s$ from ${\mathcal H}_{\bf\Sigma}(@,\forall)$. For any sort $s' \in S$, $j\in {\rm NOM}_{s'}\cup{\nom}_{s'}$ and for any formula $\phi$
of sort $s'$ we have ${\mathcal M}^\Gamma,g^\Gamma, |j|\mos{s'}\phi \mbox { iff } @_j^s\phi\in\Gamma_s$.
\end{itemize}
\end{lemma}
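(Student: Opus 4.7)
The plan is to prove both parts simultaneously by induction on the complexity of the formula $\phi$, since part 1 is essentially the restriction of part 2 to the $\forall$-free fragment. For each case, I would establish the biconditional $\mathcal{M}^\Gamma, (g^\Gamma,) |j| \mos{s'} \phi \iff @_j^s \phi \in \Gamma_s$, making heavy use of the fact that $\Gamma_s$ is maximal consistent (so closed under deduction and containing exactly one of $\phi$, $\neg\phi$), named, pasted, and $@$-witnessed.

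I would dispatch the easy cases first. For atomic cases: propositional variables and nominals are immediate from the definition of $V^\Gamma$ together with $(Ref)$, $(Nom)$, and $(Sym)$ (the latter two ensure that $|j|$ is a well-defined equivalence class and that $V^\Gamma(k) = \{|k|\}$ behaves correctly under $@$). For state variables $x$ of sort $t$, $@$-witnessedness gives some nominal $j_x$ with $@_{j_x}^s x \in \Gamma_s$, and $(Nom\, x)$ guarantees $g^\Gamma_t(x) = |j_x|$ is independent of the choice, so $\mathcal{M}^\Gamma, g^\Gamma, |j| \mos{t} x$ iff $|j| = |j_x|$ iff $@_j^s j_x \in \Gamma_s$ iff (by $(Nom)$ and the agreement with $x$) $@_j^s x \in \Gamma_s$. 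Negation and disjunction follow from $(SelfDual)$ together with maximal consistency and the $K@$-style distribution. The satisfaction-operator case $@_k^{s'}\psi$ reduces directly to the inductive hypothesis via $(Agree)$, since $@_j^s @_k^{s'}\psi \leftrightarrow @_k^{s'}\psi$ is provable.

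The main obstacle, as usual in hybrid completeness, is the modal/operator case $\sigma(\phi_1, \ldots, \phi_n)$ and the quantifier case. For the $\sigma$ case, the $(\Leftarrow)$ direction (if $@_j^s \sigma(\phi_1,\ldots,\phi_n) \in \Gamma_s$ then the model satisfies it) is where \emph{pastedness} is essential: iterating the pastedness condition produces nominals $k_1,\ldots,k_n$ of the appropriate sorts such that $@_j^s \sigma(k_1,\ldots,k_n) \in \Gamma_s$ and $@_{k_i}^s \phi_i \in \Gamma_s$ for every $i$; by definition of $R_\sigma^\Gamma$ this gives $(|j|,|k_1|,\ldots,|k_n|) \in R_\sigma^\Gamma$, and by induction $\mathcal{M}^\Gamma, g^\Gamma, |k_i| \mos{s_i} \phi_i$, yielding the existential witness. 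For the $(\Rightarrow)$ direction, successor worlds in $\mathcal{M}^\Gamma$ are of the form $|k_i|$, so one applies the inductive hypothesis to get $@_{k_i}^s \phi_i \in \Gamma_s$ and then uses the $(Bridge)$ lemma together with monotonicity (item 2 of Lemma~\ref{lemma:bridge}) to stitch these together with $@_j^s \sigma(k_1,\ldots,k_n) \in \Gamma_s$, concluding $@_j^s \sigma(\phi_1,\ldots,\phi_n) \in \Gamma_s$.

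For the $\forall x\, \phi$ case, I would handle it via the $\exists$-dual using $@$-witnessedness and $(Q2)$. For $(\Leftarrow)$: if $@_j^s \forall x\, \phi \in \Gamma_s$, then for any nominal $k$ of the sort of $x$, $(Q2)$ and $@$-manipulations give $@_j^s \phi[k/x] \in \Gamma_s$; varying over the $x$-variants $g'$ of $g^\Gamma$, each such variant assigns $g'_t(x) = |k|$ for some nominal $k$ supplied by $@$-witnessedness, and the Substitution Lemma combined with the inductive hypothesis on $\phi[k/x]$ yields $\mathcal{M}^\Gamma, g', |j| \mos{s'} \phi$, hence the model satisfies $\forall x\,\phi$ at $|j|$. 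For $(\Rightarrow)$: if $@_j^s\forall x\, \phi \notin \Gamma_s$, then $@_j^s \exists x\, \neg\phi \in \Gamma_s$ by $(SelfDual)$ and maximality; $@$-witnessedness produces a nominal $k$ with $@_j^s \neg\phi[k/x] \in \Gamma_s$, and the $x$-variant $g'$ sending $x$ to $|k|$ together with the Substitution Lemma and the inductive hypothesis refutes $\forall x\,\phi$ at $|j|$. The subtle point here is verifying that $g'$ really agrees with $g^\Gamma$ off $x$ and that the substituted formula $\phi[k/x]$ is strictly simpler than $\phi$ in the induction measure (the usual convention: induct on the number of connectives, where substitution of a state symbol does not increase complexity).
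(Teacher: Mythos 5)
Your proposal is correct and follows essentially the same route as the paper's proof: structural induction with $(Agree)$ for the satisfaction-operator case, pastedness for one direction and $(Bridge)$ for the other in the $\sigma$ case, and $@$-witnessedness, $(Q2)$ (via its contrapositive), and the Substitution Lemma for the quantifier case handled through the existential dual. Your explicit remark that the induction measure must count connectives so that $\phi[k/x]$ is covered by the inductive hypothesis is a detail the paper leaves implicit, but it does not change the argument.
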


\begin{proof}
We make the proof by structural induction on $\phi$.
\begin{itemize}
\item ${\mathcal M}^\Gamma , |j| \mos{s'} a $, where $a\in {\rm PROP}_{s'}\cup {\rm NOM}_{s'}\cup {\nom}_{s'}$ iff $|j| \in V_{s'}^{\nom}(a) $ iff $ @_j^s a \in \Gamma_s $.

\item ${\mathcal M}^\Gamma , |j| \mos{s'} x$, where $x \in {\rm SVAR}_{s'}$ iff $g^{\Gamma}_{s'}(x)=|j|$ iff $@_j^s x \in \Gamma_s.$ 

\item $ {\mathcal M}^\Gamma , |j| \mos{s'} \neg \phi $ iff ${\mathcal M}^\Gamma , |j| \  \mosn{s'}  \phi $ iff $ @_j^s \phi \not \in \Gamma_s$, but we work with consistent sets, therefore  $ @_j^s \phi \not \in \Gamma_s $ iff $\neg  @_j^s  \phi\in \Gamma_s $ iff $ @_j^s \neg \phi  \in \Gamma_s$ $(SelfDual)$.

\item ${\mathcal M}^\Gamma , |j| \mos{s'} \phi \vee \varphi   $ iff ${\mathcal M}^\Gamma , |j| \mos{s'} \phi $ or ${\mathcal M}^\Gamma , |j| \mos{s'} \varphi  $ iff (inductive hypothesis) $ @_j^s \phi \in \Gamma_s$ or $ @_j^s\varphi   \in \Gamma_s $ iff $ @_j^s  \phi \vee  @_j^s  \varphi   \in\Gamma_s $ iff $ @_j^s  (\phi \vee \varphi  ) \in \Gamma_s  $.

\item ${\mathcal M}^\Gamma ,|j| \mos{s'} \sigma (\phi_1, \ldots, \phi_n)$  iff  exists $|k_i| \in W_{s_i}$ such that $R|j||k_1| \ldots |k_n|$ and ${\mathcal M}^\Gamma , |k_i| \mos{s_i} \phi_i$ for any $i \in [n]$. Using the induction hypothesis, we get $@_{k_i}^s \phi_i \in \Gamma_s$. But $R|j||k_1| \ldots |k_n|$ iff  $ @_j^s  \sigma(k_{1}, \ldots ,k_{n}) \in \Gamma_s$. Use the Bridge axiom to prove $  @_j^s  \sigma(k_{1}, \ldots ,k_{n}) \wedge @_{k_1}^s\phi_1  \wedge \ldots \wedge @_{k_n}^s\phi_n  \to  @_j^s  \sigma(\phi_{1}, \ldots ,\phi_{n})$, so  $ @_j^s  \sigma(\phi_{1}, \ldots ,\phi_{n}) \in \Gamma_s$. Now, suppose $ @_j^s  \sigma (\phi_1, \ldots, \phi_n) \in \Gamma_s$. We work with pasted models, so there are some nominals $k_i$ such that $ @_j^s  \sigma ( k_1, \ldots,  k_n)\in \Gamma_s$ and $ @_{k_i}^s \phi_i \in \Gamma_s$ for any $i \in [n]$. Therefore, exists $k_i$ such that $R|j||k_1|\ldots|k_n|$ and, by induction hypothesis,  ${\mathcal M}^\Gamma , |k_i| \mos{s_i} \phi_i$ for any $i \in [n]$ if and only if ${\mathcal M}^\Gamma ,|j| \mos{s'} \sigma (\phi_1, \ldots, \phi_n)$.

\item ${\mathcal M}^\Gamma , |j| \mos{s'} @_k^{s'} \phi $ iff $ {\mathcal M}^\Gamma , |k| \mos{s''} \phi$ , but from induction hypothesis $@_k^{s} \phi \in \Gamma_{s}$  and by applying $(Agree)$ we get $ @_j^s @_k^{s} \phi \in \Gamma_s$.
\end{itemize}

Further, for the ${\mathcal H}_{\bf\Sigma}(@,\forall)$ system, we need to pay attention to the assignment function and it only affects the following cases.

\begin{itemize}
\item  $ @_j^s \exists x \phi \in \Gamma_s$, then there exists $l \in {\rm NOM_{s'}}$ such that $  @_j^s  \phi[l/x] \in \Gamma_s$. Let $g'\stackrel{x}{\sim} g^\Gamma$  such that $g'_{s'}(x)=\{|l|\}$. Therefore, there exists $l \in {\rm NOM_{s'}}$ such that $g'_{s'}(x)=\{|l|\}$, $g'\stackrel{x}{\sim} g^\Gamma$ and ${\mathcal M}^\Gamma , g', |j| \mos{s'} \phi$ iff ${\mathcal M}^\Gamma ,g ^\Gamma,|j| \mos{s'}\exists x \phi$.
\item ${\mathcal M}^\Gamma,g^\Gamma, |j| \mos{s'} \exists x \phi$ iff exists  $g'\stackrel{x}{\sim} g^\Gamma$ and  ${\mathcal M}^\Gamma,g', |j| \mos{s'} \phi$. Let $g'_{s'}(x)=\{|l|\}$. Hence, there exists $l \in {\rm NOM_{s'}}$ such that $g'_{s'}(x)=\{|l|\}$, $g'\stackrel{x}{\sim} g^\Gamma$ and ${\mathcal M}^\Gamma , g', |j| \mos{s'} \phi$ iff ${\mathcal M}^\Gamma ,g,|j| \mos{s'}\phi[l/x]$ and from inductive hypothesis $ @_j^s  \phi[l/x] \in \Gamma_s$. Use the contrapositive of the $(Q2)$ axiom, \mbox{$\vds{s'} \phi[l/x] \to \exists x \phi$} and the $(Gen@)$ and $(K@)$ rules to obtain $ @_j^s  \phi[l/x] \to   @_j^s \exists x \phi \in \Gamma_s$. Therefore, $ @_j^s \exists x \phi \in \Gamma_s$.
\end{itemize}
\end{proof}

We are ready now to prove the strong completeness theorem for the hybrid logics ${\mathcal H}_{\bf\Sigma}(@)$ and ${\mathcal H}_{\bf\Sigma}(@,\forall)$ extended with pure axioms from $\Lambda$. For a logic $\mathcal L$, the relation $\vds{s}_{\mathcal L}$ denotes the local deduction,  the relation $\mos{s}_{Mod({\mathcal L})}$ denotes the semantic entailment w.r.t. models satisfying all the axioms of $\mathcal L$, while $\mos{s}_{\mathcal L}$ denotes the semantic entailment w.r.t. frames satisfying all the axioms of $\mathcal L$.

\begin{theorem}[Completeness]\label{th:comp} 
\begin{itemize}
\item[1.] {\sf Strong completeness}. 
 Let $s\in S$ and assume $\Gamma_s$ is a set of formulas of sort $s$. If $\Gamma_s$ is a consistent set in ${\mathcal L}={\mathcal H}_{\bf\Sigma}(@)$ {\rm (}in ${\mathcal L}={\mathcal H}_{\bf\Sigma}(@,\forall)${\rm )} then $\Gamma_s$ is satisfiable in a named  model. Consequently, for a formula $\phi$ of sort $s$,
$\Gamma_s {\mos{s}_{Mod(\mathcal L)}}\phi \mbox{ iff } \Gamma_s\vds{s}_{\mathcal L}\phi.$

\item[2.] {\sf Strong frame-completeness for pure extensions}. Let $\Lambda$ be a set of pure formulas in the language of ${\mathcal H}_{\bf\Sigma}(@)$ {\rm (}a set of $\forall$-pure formulas in the language of ${\mathcal H}_{\bf\Sigma}(@,\forall)${\rm )} and $s\in S$ and assume $\Gamma_s$ is a set of formulas of sort $s$. If $\Gamma_s$ is a consistent set in ${\mathcal L}={\mathcal H}_{\bf\Sigma}(@)+\Lambda$ {\rm (}in ${\mathcal L}={\mathcal H}_{\bf\Sigma}(@,\forall)+\Lambda${\rm )} then $\Gamma_s$ is satisfiable in a model  based on a frame that validates every formula in $\Lambda$. Consequently, for a formula $\phi$ of sort $s$,
$\Gamma_s {\mos{s}_{Mod(\mathcal L)}}\phi \mbox{ iff } \Gamma_s\vds{s}_{\mathcal L}\phi.$
\end{itemize}
\end{theorem}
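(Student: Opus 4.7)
The plan is to run a classical Henkin-style argument, chaining the three tools already built up in the excerpt: the Extended Lindenbaum Lemma \ref{lemma:lind}, the Henkin model construction of Definition \ref{def:model} (whose well-definedness is Lemma \ref{lema:henkin}), and the Truth Lemma \ref{lemma:truth}. Soundness (Proposition \ref{prop:sound}) will supply the easy half of each entailment equivalence, while the contrapositive of the satisfiability claim will supply the other half via the usual trick of extending $\Gamma_s\cup\{\neg\phi\}$.

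For part 1, I start from a consistent $\Gamma_s$ in $\mathcal{L}$ and apply Lemma \ref{lemma:lind} (in the language enriched by countably many fresh nominals) to obtain a named, pasted, and, in the $\forall$-case, $@$-witnessed maximal consistent extension $\Gamma_s^+$. I then build $\mathcal{M}^{\Gamma_s^+}$ and, when binders are present, the assignment $g^{\Gamma_s^+}$. Because $\Gamma_s^+$ is named, it contains some nominal $k\in{\rm NOM}_s\cup\nom_s$, and axiom $(Intro)$ together with maximal consistency gives $@_k^s\phi\in\Gamma_s^+$ for every $\phi\in\Gamma_s^+$; the Truth Lemma then yields $\mathcal{M}^{\Gamma_s^+},|k|\mos{s}\phi$, so $\Gamma_s$ is satisfied at $|k|$. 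To verify that $\mathcal{M}^{\Gamma_s^+}$ is also a model of $\Lambda$, I take any $\psi\in\Lambda$ of sort $t$: as an axiom of $\mathcal{L}$, an application of $(Gen@)$ gives $\vds{s}@_j^s\psi$ for every $j\in{\rm NOM}_t\cup\nom_t$, hence $@_j^s\psi\in\Gamma_s^+$, and the Truth Lemma then gives $\mathcal{M}^{\Gamma_s^+},|j|\mos{t}\psi$. Since the Lindenbaum construction names every state, every element of $W_t^{\Gamma_s^+}$ is of the form $|j|$, so $\psi$ is globally valid. The entailment equivalence $\Gamma_s\mos{s}_{Mod(\mathcal{L})}\phi\Leftrightarrow\Gamma_s\vds{s}_{\mathcal{L}}\phi$ is then immediate, applying the satisfiability claim to $\Gamma_s\cup\{\neg\phi\}$ for the non-trivial direction.

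For part 2, the same Henkin model is used; by construction it is named, so the task reduces to upgrading ``model of $\Lambda$'' to ``frame validating $\Lambda$''. In the $@$-only setting I fix a pure $\psi\in\Lambda$ of sort $t$ and invoke Proposition \ref{prop:pure1}: frame validity of $\psi$ in $\mathcal{F}^{\Gamma_s^+}$ is equivalent to model validity in $\mathcal{M}^{\Gamma_s^+}$ of every pure instance $\psi'$ of $\psi$. Since $\Lambda$ is handled as a family of axiom schemes, each such $\psi'$ is itself an axiom of $\mathcal{L}$, so the argument of part 1 applies uniformly to $\psi'$, giving its model validity. In the $\forall$-case the corresponding step is a direct appeal to Proposition \ref{prop:pure2}: for $\forall\exists$-pure formulas in a named model, frame and model satisfaction coincide, so the model validity of $\psi\in\Lambda$ already established by part 1 immediately upgrades to frame validity.

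The main obstacle I expect is located in part 2: the schematic reading of $\Lambda$ must be leveraged in the pure case so that every pure instance of a scheme in $\Lambda$ is still an axiom, which is what allows the part 1 argument to be rerun uniformly over all instances. A second, more bookkeeping-level concern is maintaining sort discipline: $\Gamma_s$ has a fixed sort $s$ whereas $\Lambda$ may contain formulas of arbitrary sorts, so the proof must use the full $S$-sorted strength of $(Gen@)$, $(BroadcastS)$ and the Truth Lemma to certify each $\psi\in\Lambda$ on its own sort $t$ even though the Lindenbaum closure was carried out from sort $s$.
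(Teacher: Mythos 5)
Your proposal is correct and follows essentially the same route as the paper: Extended Lindenbaum Lemma, Henkin model, Truth Lemma for satisfiability and for validating $\Lambda$ in the (named) model, then Propositions \ref{prop:pure1} and \ref{prop:pure2} to upgrade model validity to frame validity in the pure case. In fact you spell out several steps the paper leaves implicit (the $(Intro)$/$(Gen@)$ arguments for satisfying $\Gamma_s$ and $\Lambda$, and the pure-instance detour for ${\mathcal H}_{\bf\Sigma}(@)$), so no changes are needed.
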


\begin{proof}
Since 1. is obvious, we only prove 2.  If $\Gamma_s$ is a consistent set in ${\mathcal H}_{\bf\Sigma}(@,\forall)+\Lambda$ then, applying the Extended Lindenbaum Lemma, then $\Gamma_s\subseteq \Theta_s$, where $\Theta_s$ is a maximal consistent named, pasted and $@$-witnessed set (in an extended language ${\mathcal L}'$). If ${\mathcal M}^\Theta$ is the Henkin model and $g^\Theta$ is the assignment from Definition \ref{def:model} then, by Truth Lemma and (Intro)
 ${\mathcal M}^\Theta,g^\Theta, |j|\mos{s}\Gamma_s$ for any  $j\in {\rm NOM}_s\cup{\nom}_s$ such that $j\in \Gamma_s$. Moreover, ${\mathcal M}^\Theta$ is a named model (in the extended language) that is also a model of $\Lambda$.  By  Proposition \ref{prop:pure2}, the underlying frame of ${\mathcal M}^\Theta$ satisfies the $\forall$-pure formulas from $\Lambda$.  Hence the logic ${\mathcal H}_{\bf\Sigma}(@,\forall)+\Lambda$ is strongly complete w.r.t to the class of frames satisfying $\Lambda$.  Assume that $\Gamma_s {\mos{s}_{\Lambda}}\phi$ and suppose that $\Gamma_s\  \not\!\!\!\!\vds{s}\phi$. It follows that 
 $\Gamma_s\cup\{\neg\phi\}$ is consistent, so  $\Gamma_s$ is satisfied in a model based on a frame satisfying $\Lambda$ that is not a model of $\phi$. We get a contradiction, so the intended completeness result is proved. 
\end{proof}

The following useful results  can be easily proved semantically:

\begin{proposition}\label{lemma:nomConj}
\begin{itemize}
\item[1.] (Nominal Conjunction) For any formulas and any nominals of appropriate sorts, the following hold:
    
    \begin{itemize}
        \item[(i1)]$\sigma(\ldots, \phi_{i-1},\phi_i,\phi_{i+1}, \ldots) \wedge @_k(\psi) \lra \sigma(\ldots, \phi_{i-1},\phi_i \wedge @_k(\psi),\phi_{i+1}, \ldots)$
        \item[(i2)]$\sigma^{\mb}(\ldots, \phi_{i-1},\phi_i,\phi_{i+1}, \ldots) \wedge @_k(\psi) \lra$
        
        \hfill $ \sigma^{\mb}(\ldots, \phi_{i-1},\phi_i \wedge @_k(\psi),\phi_{i+1}, \ldots) \wedge @_k(\psi)$
    \end{itemize}
\item[2.] If $\phi_1, \ldots \phi_n$ are formulas of appropriate sorts and $x$ is a state variable that does not occur in $\phi_j$ for any $j \neq i$ then:
\begin{itemize}
\item[(i3)] $\exists x \sigma^{\mb} (\ldots, \phi_{i-1},\phi_i,\phi_{i+1}, \ldots) \to \sigma^{\mb} (,\ldots, \phi_{i-1},\exists x \phi_i,\phi_{i+1}, \ldots) $
\end{itemize}
\end{itemize}
\end{proposition}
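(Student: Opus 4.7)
The plan is to verify each equivalence semantically; by the strong model-completeness half of Theorem~\ref{th:comp}, semantic validity automatically lifts to provability. The key observation underlying (i1) and (i2) is that the satisfaction modality is \emph{world-rigid}: since $\mathcal{M},g,w\mos{s}@_k^s\psi$ iff $\mathcal{M},g,u\mos{t}\psi$ where $V^{\nom}_t(k)=\{u\}$, the truth value of $@_k\psi$ does not depend on the evaluation world or its sort. Consequently, if $@_k\psi$ holds at any world of $\mathcal{M}$ then it holds at every world of every sort (a fact mirrored syntactically by $(Agree)$ and $(BroadcastS)$). I would use this fact repeatedly to move the conjunct $@_k\psi$ in and out of the $i$-th argument of a modality.

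For (i1), fix a model, an assignment, and a world $w$ of sort $s$. If $\sigma(\phi_1,\ldots,\phi_n)$ and $@_k\psi$ both hold at $w$, the existential semantics of $\sigma$ provides a tuple $(w_1,\ldots,w_n)$ with $R_\sigma w w_1\ldots w_n$ and $\phi_j$ true at $w_j$ for every $j$; rigidity makes $@_k\psi$ true at $w_i$ as well, so $\phi_i\wedge @_k\psi$ holds there and the same tuple witnesses the right-hand side. For the converse, a witnessing tuple for $\sigma(\ldots,\phi_i\wedge @_k\psi,\ldots)$ gives both $\phi_i$ at $w_i$ (yielding the left-hand conjunct by monotonicity) and $@_k\psi$ at $w_i$, which transfers back to $w$ by rigidity; alternatively the latter is an instance of $(Back)$ applied after monotonicity.

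For (i2), the forward direction follows the same pattern: for an arbitrary tuple $(w_1,\ldots,w_n)$ with $R_\sigma w w_1\ldots w_n$, the clause for $\sigma^{\mb}$ supplies an index $j$ with $\phi_j$ true at $w_j$; if $j=i$, rigidity promotes $\phi_i$ to $\phi_i\wedge @_k\psi$, and if $j\neq i$ the $j$-th slot is unchanged. The reason the right-hand side must retain the extra conjunct $@_k\psi$ is genuine: when $w$ has no outgoing $R_\sigma$-tuples at all, $\sigma^{\mb}(\ldots)$ is vacuously true and carries no information about $@_k\psi$. Once that conjunct is supplied, the reverse implication is immediate by monotonicity of $\sigma^{\mb}$ (from $\phi_i\wedge @_k\psi\to \phi_i$).

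For (i3), from $\mathcal{M},g,w\mos{s}\exists x\,\sigma^{\mb}(\phi_1,\ldots,\phi_n)$ pick a witness $g'\stackrel{x}{\sim}g$ and fix any tuple $(w_1,\ldots,w_n)$ with $R_\sigma w w_1\ldots w_n$. By the assumption some $\phi_j$ is true at $w_j$ under $g'$. If $j\neq i$, the hypothesis that $x$ does not occur in $\phi_j$ combined with the Agreement Lemma (Lemma~\ref{lem:agree}) transfers the satisfaction to $g$, and the $j$-th slot of the modified formula is identical to the original. If $j=i$, the $x$-variant $g'$ directly witnesses $\mathcal{M},g,w_i\mos{s_i}\exists x\,\phi_i$. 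Either way some disjunct of $\sigma^{\mb}(\phi_1,\ldots,\exists x\,\phi_i,\ldots,\phi_n)$ holds at the chosen tuple under $g$. The only real subtlety in the whole proposition is noticing, in (i2), why the extra $@_k\psi$ on the right cannot be omitted; every other step is routine semantic unfolding.
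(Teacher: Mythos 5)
Your proposal is correct and follows essentially the same route as the paper: a direct semantic verification of each item, exploiting the world-independence of $@_k\psi$ for (i1) and (i2) and the Agreement Lemma for the slots not containing $x$ in (i3), with soundness/completeness supplying the syntactic counterpart. The only additions beyond the paper's argument are your (accurate) explanatory remarks on why the extra $@_k\psi$ conjunct in (i2) cannot be dropped.
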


\begin{proof}
1. (Nominal Conjunction) 
    \begin{itemize}
        \item[(i1)] ${\cal M}, g, w \mos{s} \sigma(\ldots, \phi_{i-1},\phi_i,\phi_{i+1}, \ldots) \wedge @_k(\psi)$ iff

        ${\cal M}, g, w \mos{s} @_k(\psi)$ and ${\cal M}, g, w \mos{s} \sigma(\ldots, \phi_{i-1},\phi_i,\phi_{i+1}, \ldots)$ iff

            ${\cal M}, g, v \mos{s'}  \psi$ where $V^{\nom}_{s'}=\{v\}$ and there exist $w_1\in W_{s_1},\ldots,w_n\in W_{s_n}$ such that $R_\sigma ww_1\cdots w_n$ and ${\cal M}, g, w_j \mos{s_j}\phi_j$ for all $1\leq j \leq n$ iff
            
            there exist $w_1\in W_{s_1},\ldots,w_n\in W_{s_n}$ such that $R_\sigma ww_1\cdots w_n$ and \\${\cal M}, g, w_j \mos{s_j} \phi_j$ for all $1\leq j \leq n$, $j\neq i$, and ${\cal M}, g, w_i \mos{s_i} \phi_i\wedge @_k(\psi)$ iff

        ${\cal M}, g, w \models \sigma(\ldots, \phi_{i-1},\phi_i \wedge @_k(\psi),\phi_{i+1}, \ldots)$

        \item[(i2)] ${\cal M}, g, w \mos{s} \sigma^{\mb}(\ldots, \phi_{i-1},\phi_i,\phi_{i+1}, \ldots) \wedge @_k(\psi)$ iff

        ${\cal M}, g, w \mos{s} @_k(\psi)$ and ${\cal M}, g, w \mos{s} \neg\sigma(\ldots, \neg\phi_{i-1},\neg\phi_i,\neg\phi_{i+1}, \ldots)$ iff

            ${\cal M}, g, v \mos{s'}  \psi$ where $V^{\nom}_{s'}=\{v\}$ and for all $w_1\in W_{s_1},\ldots,w_n\in W_{s_n}$ for which $R_\sigma ww_1\cdots w_n$, there exists $1\leq j \leq n$ such that ${\cal M}, g, w_j \mos{s_j} \phi_j$ iff 

            ${\cal M}, g, v \mos{s'}\psi$ where $V^{\nom}_{s'}=\{v\}$ and for all $w_1\in W_{s_1},\ldots,w_n\in W_{s_n}$ for which $R_\sigma ww_1\cdots w_n$, there exists $1\leq j \leq n$, $j\neq i$ such that ${\cal M}, g, w_j \mos{s_j} \phi_j$ or ${\cal M}, g, w_i \mos{s_i} \phi_i$ iff 
${\cal M}, g, v \mos{s'} \psi$ and for all $w_1\in W_{s_1},\ldots,w_n\in W_{s_n}$ for which $R_\sigma ww_1\cdots w_n$, there exists $1\leq j \leq n$, $j\neq i$ such that ${\cal M}, g, w_j \mos{s_j} \phi_j$ or ${\cal M}, g, w_i \mos{s_i}\phi_i \wedge @_k(\psi)$ iff 
${\cal M}, g, w \mos{s} \sigma^{\mb}(\ldots, \phi_{i-1},\phi_i \wedge @_k(\psi),\phi_{i+1}, \ldots) \wedge @_k(\psi)$
\end{itemize}
2.
\begin{itemize}
\item[(i3)] $\mathcal{M}, g, w \mos{s} \exists x \sigma^{\mb} (\phi_1,\ldots, \phi_{i-1},\phi_i,\phi_{i+1}, \ldots, \phi_n)$ iff exists $g' \stackrel{x}{\sim} g$ such that $\mathcal{M}, g', w \mos{s}  \sigma^{\mb} (\phi_1,\ldots, \phi_{i-1},\phi_i,\phi_{i+1}, \ldots, \phi_n)$ iff exists $g' \stackrel{x}{\sim} g$ such that for all $v_j \in W_{s_j}$, $R_{\sigma}wv_1\ldots v_n$ implies $\mathcal{M}, g', v_j \mos{s_j}  \phi_j$ for some $j \in [n]$. Then, for all $v_j \in W_{s_j}$, $R_{\sigma}wv_1\ldots v_n$ implies there exists $g' \stackrel{x}{\sim} g$ such that $\mathcal{M}, g', v_j \mos{s_j}  \phi_j$ for some $j \in [n]$. But $x$ does not occur in $\phi_j$ for any $j \in [n]$ and $j \neq i$, so for all $v_j \in W_{s_j}$ and $v_i \in W_{s_i}$, $R_{\sigma}wv_1\ldots v_i \ldots v_n$ implies $\mathcal{M}, g', v_j \mos{s_j}  \phi_j$ and there exists $g' \stackrel{x}{\sim} g$ such that $\mathcal{M}, g', v_i \mos{s_i}  \phi_i$ for some $i,j \in [n]$ and $j \neq i$. We use Agreement Lemma, for all $v_j \in W_{s_j}$ and $v_i \in W_{s_i}$, $R_{\sigma}wv_1\ldots v_i \ldots v_n$ implies $\mathcal{M}, g, v_j \mos{s_j}  \phi_j$ and $\mathcal{M}, g, v_i \mos{s_i} \exists x \phi_i$ for some $i,j \in [n]$ and $j \neq i$. Therefore, $\mathcal{M}, g,w \mos{s} \sigma^{\mb} (\phi_1,\ldots, \phi_{i-1},\exists x \phi_i,\phi_{i+1}, \ldots, \phi_n) $. 
\end{itemize}
\end{proof}

In the many-sorted setting one can wonder what happens if we have an $S$-sorted set of deduction hypothesis ${\mathbf\Gamma}=\{\Gamma_s\}_{s\in S}$.  The following considerations hold for any of   ${\mathcal H}_{\bf\Sigma}(@)$ and  ${\mathcal H}_{\bf\Sigma}(@,\forall)$.  Clearly, a model $\mathcal M$ is a model of $\mathbf \Gamma$ if ${\mathcal M}\mos{s}\gamma_s$ for any $s\in S$ and $\gamma_s\in \Gamma_s$ (in this case we write ${\mathcal M}\models{\mathbf \Gamma}$). Using the "broadcasting" properties of the $@_i$ operators, we define another syntactic consequence relation:

\begin{center}
{\em ${\mathbf \Gamma}\vss{s} \phi$ iff there are $s_1,\ldots, s_n\in S$, $j_1\in {\rm NOM}_{s_1},\ldots, j_n\in {\rm NOM}_{s_n}$ and $\gamma_1\in \Gamma_{s_1},\ldots, \gamma_n\in \Gamma_{s_n}$ such that $\vds{s}@_{j_1}^s\gamma_1\wedge\cdots \wedge @_{j_n}^s\gamma_n\to \phi$}. 
\end{center}

\begin{proposition}[$\vss{s}$ soundness]\label{prop:newded}
Let $\mathbf \Gamma$ be an $S$-sorted set and $\phi$  a formula of sort $s\in S$. If  ${\mathbf \Gamma}\vss{s} \varphi$  then ${\mathcal M}\models{\mathbf \Gamma}$ implies ${\mathcal M}\mos{s}\phi$ for any model $\mathcal M$. 
\end{proposition}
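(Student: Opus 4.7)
The plan is to unpack the definition of $\vss{s}$ and combine it with the soundness of the deductive system (Proposition \ref{prop:sound}) together with a simple observation about how satisfaction operators behave on globally valid formulas.

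First, I would assume ${\mathbf\Gamma}\vss{s}\phi$. By definition, there exist sorts $s_1,\ldots,s_n$, nominals $j_i\in {\rm NOM}_{s_i}$, and formulas $\gamma_i\in\Gamma_{s_i}$ such that
$$\vds{s} @_{j_1}^s\gamma_1\wedge\cdots\wedge @_{j_n}^s\gamma_n \to \phi.$$
Fix an arbitrary model $\mathcal M$ with ${\mathcal M}\models{\mathbf\Gamma}$, an arbitrary assignment $g$, and an arbitrary world $w\in W_s$ (the assignment $g$ is only relevant in the setting of ${\mathcal H}_{\bf\Sigma}(@,\forall)$; for ${\mathcal H}_{\bf\Sigma}(@)$ it is suppressed). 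The goal is ${\mathcal M},g,w\mos{s}\phi$.

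The key lemma I would establish (or simply observe from the semantics) is: if ${\mathcal M}\mos{s_i}\gamma_i$, then for every sort $t$, every assignment $g$ and every $v\in W_t$, we have ${\mathcal M},g,v\mos{t} @_{j_i}^t\gamma_i$. This follows directly from the semantic clause for $@$: ${\mathcal M},g,v\mos{t} @_{j_i}^t\gamma_i$ iff ${\mathcal M},g,u\mos{s_i}\gamma_i$ where $V^{\nom}_{s_i}(j_i)=\{u\}$, and the latter holds since $\gamma_i$ is valid in $\mathcal M$ on the sort $s_i$. Applying this to the current $w\in W_s$, we get ${\mathcal M},g,w\mos{s} @_{j_i}^s\gamma_i$ for every $i\in[n]$, hence ${\mathcal M},g,w\mos{s} @_{j_1}^s\gamma_1\wedge\cdots\wedge @_{j_n}^s\gamma_n$.

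Second, I invoke soundness (Proposition \ref{prop:sound}): since \mbox{$\vds{s} @_{j_1}^s\gamma_1\wedge\cdots\wedge @_{j_n}^s\gamma_n \to \phi$}, this implication is valid in every model, in particular ${\mathcal M},g,w\mos{s} @_{j_1}^s\gamma_1\wedge\cdots\wedge @_{j_n}^s\gamma_n \to \phi$. Modus ponens at the semantic level yields ${\mathcal M},g,w\mos{s}\phi$. As $g$ and $w$ were arbitrary, ${\mathcal M}\mos{s}\phi$, as desired.

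I do not expect any serious obstacle here: the proof is essentially a bookkeeping argument that shows the satisfaction operator $@_{j_i}^s$ acts as a bridge lifting the global validity of $\gamma_i$ (on its native sort $s_i$) to a formula that is globally valid on sort $s$. The only mild subtlety is to be careful in the ${\mathcal H}_{\bf\Sigma}(@,\forall)$ setting that the assignment $g$ plays no role in the value of $@_{j_i}^s\gamma_i$ beyond what the induction in Lemma \ref{lem:agree} already handles, since nominals (not state variables) index the satisfaction operator in our language.
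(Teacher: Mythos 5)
Your proof is correct and follows essentially the same route as the paper's: both unpack the definition of $\vss{s}$, lift the model-validity of each $\gamma_i$ to model-validity of $@_{j_i}^s\gamma_i$ (the paper cites $(Gen@)$ for this step, while you verify it directly from the semantic clause for $@$), and then conclude by soundness of the local deduction. Your version is slightly more explicit about why the satisfaction operators preserve validity in a fixed model, but the argument is the same.
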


\begin{proof}
 Let $\mathcal M$ be a model and assume 
$\vds{s}@_{j_1}^s\gamma_1\wedge\cdots \wedge
 @_{j_n}^s\gamma_n\to \phi$ as above. If  ${\mathcal M}\models{\mathbf \Gamma}$ then, by ($Gen@$), ${\mathcal M}\mos{s}{\Gamma_s}\cup\{@_{j_1}^s\gamma_1,\ldots ,
 @_{j_n}^s\gamma_n\}$. Using the soundness of the local deduction, we get the desired conclusion. 
\end{proof}

\section{A SMC-like language and a Hoare-like logic for it}\label{sec:app}

To showcase the application of our logic into program verification, we have chosen to specify 
a state-machine, whose expressions have side effects and where Hoare-like semantics are known to
be hard to use. 

In Figure \ref{fig:lang}, we introduce the signature ${\bf\Sigma}=(S,\Sigma,\nom)$ of our logic as a context-free grammar (CFG) in a BNF-like form.
We make use of the established equivalence between CFGs and algebraic signatures (see, e.g., \cite{HHKR89}), by mapping non-terminals to sorts and CFG productions to operation symbols.
Note that, due to non-terminal renamings (e.g., \texttt{AExp ::= Nat}), it may seem that our syntax relies on subsorting. However, this is done for readability reasons only. The renaming of non-terminals in syntax can be thought of as syntactic sugar for defining injection functions. For example, \texttt{AExp ::= Nat} can be thought of as \texttt{AExp ::= nat2Exp(Nat)}, and all occurrences of an integer term in a context, in which an expression is expected, could be wrapped by the \texttt{nat2Exp} function.

Our language is inspired by the \textit{SMC machine}~\cite{plotkin} which consists of a set of transition rules defined between configurations of the form $\left\langle S, M, C \right\rangle$, where $S$ is the {\em value stack} of intermediate results, $M$ represents the {\em memory}, mapping program identifiers to values, and $C$ is a {\em control stack} of commands representing the control flow of the program.
Since our target is to extend the Propositional Dynamic Logic (PDL) \cite{dynamic}, we identify the control stack with the notion of {\em program} in dynamic logic, and use the ";" operator to denote stack composition. We define our formulas to stand for {\em configurations} of the form $\langle vs, mem\rangle$ comprising only of a value stack and a memory. Hence, the sorts ${CtrlStack}$ and ${Config}$ correspond to programs and formulas from PDL, respectively. 
 Inspired by  PDL, we use the dual modal operator
$ [\_]\_ : CtrlStack \times Config \to Config$
to assert that a configuration formula must hold after executing the commands in the control stack.
The axioms defining the dynamic logic semantics of the SMC machine are then formulas of the form 
$cfg \to [ctrl] cfg'$ 
saying that a configuration satisfying $cfg$ must change to one satisfying $cfg'$ after executing $ctrl$.
The usual operations of dynamic logic $;$, $\cup$, $^*$ are defined accordingly \cite[Chapter 5]{dynamic}. We depart from PDL with the definition of $?$ (test): in our setting, in order to take a decision, we test the top value of the value stack. Consequently,   the signature of the test operator is $?:Val\to CtrlStack$.

A deductive system, that allows us to accomplish our goal, is defined in Figure \ref{fig:lang}. In this way we define  an expansion of ${\cal H}(@,\forall)$.
Our definition is incomplete (e.g. we do not fully axiomatize the natural numbers), but one can see that,e.g. $\nom_{Bool}=\{\Strue,\Sfalse\}$. 
To simplify the presentation, we omit sort annotations in the sequel; these should be easily inferrable from the context.

\begin{figure}
\small{
{\bf Domains}

\vspace*{-0.15cm}

\begin{verbatim}
 Nat ::=  natural numbers
 Bool ::= true | false | Nat == Nat | Nat <= Nat
\end{verbatim}

\noindent \begin{minipage}[t]{.5\textwidth}
{\bf Syntax}

\vspace*{-0.15cm}

\begin{verbatim}
Var ::=  program variables
AExp ::= Nat | Var | AExp + AExp
        | ++ Var
BExp ::= AExp <= AExp
Stmt ::= x := AExp
        | if BExp 
          then Stmt 
          else Stmt
        | while BExp do Stmt
        | skip
        | Stmt ; Stmt
\end{verbatim}
\end{minipage}
\begin{minipage}[t]{.5\textwidth}
{\bf Semantics}

\vspace*{-0.15cm}

\begin{verbatim}
     Val ::= Nat | Bool
ValStack ::= nil 
             | Val . ValStack
     Mem ::= empty | set(Mem, x, n)    
CtrlStack ::= c(AExp)
             | c(BExp)  
             | c(Stmt)
             | asgn(x)   
             | plus   | leq    
             | Val ?
             | c1 ; c2
  Config ::=  < ValStack, Mem >

\end{verbatim}
\end{minipage}
}

\noindent{\bf Domains axioms (incomplete)}

\vspace*{0.15cm}

$\begin{array}{llll}
(B1) & \Strue \lra \neg\Sfalse & \hspace*{1cm}(I1) & @_{\Strue}^{Nat}(x == y) \to (x\lra y)\\
& \ldots & & \ldots
\end{array}$

\medskip
\noindent{\bf PDL-inspired axioms }

\vspace*{0.15cm}

$\begin{array}{llll}
(A\cup) & [\pi \cup \pi'] \gamma \leftrightarrow [\pi] \gamma \wedge [\pi'] \gamma  &
(A;) & [\pi;\pi'] \gamma \leftrightarrow [\pi][ \pi'] \gamma \\
(A?) &  \cf{v \cdot  vs, mem} \to [v ?] \cf{vs,mem}  &
(A\neg ?) &\cf{v \cdot  vs, mem} \wedge @_v(\neg v') \to [v' ?] \bot\\
(A^*) & [\pi^*] \gamma \leftrightarrow \gamma \wedge [\pi][\pi^*] \gamma & 
(A{\it Ind}) & \gamma \wedge [\pi^*](\gamma \to [\pi]\gamma) \to [\pi^*] \gamma
\end{array}$

\noindent Here,  $\pi$, $\pi'$ are formulas of sort 
${CtrlStack}$ ("programs"), $\gamma$ is a  formula of sort $Config$ (the analogue of "formulas" from PDL), $v$ and $v'$ are state variables of sort $Var$, $vs$ has the sort 
$ValStack$ and $mem$ has the sort $Mem$.

\medskip

\noindent{\bf SMC-inspired axioms}

\vspace*{0.15cm}
 
$\begin{array}{ll}
(CStmt)& c(s1;s2)\lra c(s1);c(s2)\\ 
(Aint) &   \cf{vs,mem} \to [c(n)] \cf{n \cdot vs,mem}    \mbox{ where } n \mbox{ is an integer}\\
(Aid)  &  \cf{vs, set(mem,x,n}) \to [c(x)] \cf{n \cdot vs,set(mem,x,n})\\
(A++)  &  \cf{vs, set(mem,x,n}) \to [c(++x)] \cf{n+1 \cdot vs,set(mem,x,n+1})\\
(Dplus) &    c(a1 + a2) \lra c(a1) ; c(a2) ; \Splus \\
(Aplus)  &  \cf{n2 \cdot n1 \cdot vs,mem} \to [\Splus] \cf{(n1+n2} \cdot vs,mem)\\
\end{array} $

$\begin{array}{ll}
(Dleq)   & c(a1 <= a2) \lra c(a2) ; c(a1) ; \Sleq \\
(Aleq)    &\cf{n1 \cdot n2 \cdot vs,mem} \to  [\Sleq] \cf{(n1\leq n2} \cdot vs,mem) \\

(Askip) &    \gamma \to  [c(\Sskip)]\gamma \\

(Dasgn) &   c(x := a) \lra c(a) ; \Sasgn(x)\\
(Aasgn)  &  \cf{n \cdot vs,mem} \to [\Sasgn(x)] \cf{vs,set(mem, x, n})\\

(Dif)  &  c(\Sif b \Sthen s1 \Selse s2) \lra c(b) ;
       ( (\Strue \, ?; c(s1))\cup (\Sfalse \, ?; c(s2)) ) \\

(Dwhile)  & c(\Swhile\,\, b\,\, do\,\, s) \lra c(b) ;(\Strue ? ; c(s); c(b))^*; \Sfalse ?
\end{array}$

\medskip

\noindent{\bf Memory consistency axioms}\nopagebreak 
 
 \vspace*{0.15cm}
 
$\begin{array}{ll}
(AMem1) & set(set(mem, x, n),y,m)\lra set(set(mem, y,m),x,n) \\
&\mbox{where } x \mbox{ and } y \mbox{ are distinct}\\
(AMem2) & set(set(mem, x, n),x,m)\to set(mem, x,m) \\

\end{array}$
\caption{\bf Axioms defining an SMC-like programming language}\label{fig:lang}
\end{figure}

\begin{remark}
Assume that  $\Lambda$ contains all the axioms from Figure  \label{fig:lang}
 and denote ${\bf{\cal L}}={\cal H}(@,\forall)+\Lambda$.
Then ${\bf{\cal L}}$ is a many-valued hybrid modal system associated to our language, and all results from Section \ref{sec:hybridgen} applies in this case.
\end{remark}


 We present below several  {\em Hoare-like rules of inference}.  Note that they are provable from the PDL and language axioms.

\begin{proposition}\label{lem:adhoare} The following rules are admissible :
\begin{itemize}
\item[1.] {\bf Rules of Consequence} \label{lemma:weak}
\begin{itemize}
    \item[]   If $\vdash \phi \to[\alpha] \psi$ and $\vdash\psi\to \chi$ then $\vdash\phi \to[\alpha] \chi$.

    \item[]     If $\vdash\phi \to[\alpha] \psi$ and $\vdash\chi\to \phi$ then $\vdash \chi \to[\alpha] \psi$.
\end{itemize}

\item[2.] {\bf Rule of Composition, iterated}\label{lemma:seq}
    \begin{itemize}
        \item[] If $\phi_0 \to[\alpha_1]\phi_1$, \ldots, $\phi_{n-1} \to[\alpha_n]\phi_n$, then $\phi_0 \to [\alpha_1 ; \ldots ; \alpha_n] \phi_n$.
    \end{itemize}

\item[3.] {\bf Rule of Conditional} \label{lemma:if}

 If  $B$ is a formula of sort $Bool$, and  $vs$, $mem$, $P$ are formulas of appropriate sorts such that 
 
 \begin{tabular}{ll}
 (h1) &        $\vdash \phi \to[c(b)] (\cf{B \cdot vs,mem}\wedge P) $,\\
  
 (h2) &        $\vdash \cf{vs,mem}\wedge P \wedge @_{\Strue}(B) \to [c(s1)]\chi$ \\
              
 (h3) &           $\vdash \cf{vs,mem}\wedge P \wedge @_{\Sfalse}(B) \to [c(s2)]\chi$\\
    
(h4) &  $\vdash  P   \to [\alpha]P$ for any $\alpha$ of sort $CtrlStack$,
    \end{tabular}
    
then   $\vdash \phi \to [c(\Sif b \Sthen s1 \Selse s2)]\chi$
\end{itemize}
\end{proposition}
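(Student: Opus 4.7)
I plan to address the three items in order of increasing difficulty, dispensing with the two Rules of Consequence first, then the iterated Composition by induction, and finally the Rule of Conditional, which is where the real hybrid-logical work lives.

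For the first Rule of Consequence, starting from $\vdash\psi\to\chi$, I will apply $(UG)$ to obtain $[\alpha](\psi\to\chi)$, distribute it via $(K_\sigma)$ as $[\alpha]\psi\to[\alpha]\chi$, and close by propositional transitivity with the assumed $\vdash\phi\to[\alpha]\psi$. The second Rule of Consequence is pure propositional reasoning. The iterated Composition then follows by induction on $n$: at step $i$ I rewrite $[\alpha_1;\dots;\alpha_{i-1};\alpha_i]$ as $[\alpha_1;\dots;\alpha_{i-1}][\alpha_i]$ using $(A;)$, and apply the first Rule of Consequence to embed the newly-given $\phi_{i-1}\to[\alpha_i]\phi_i$ inside the accumulated box.

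For the Rule of Conditional I will first rewrite with $(Dif)$ and decompose with $(A;)$ and $(A\cup)$ to obtain
$$[c(\Sif b \Sthen s1 \Selse s2)]\chi \lra [c(b)]\bigl([\Strue\,?][c(s1)]\chi\wedge[\Sfalse\,?][c(s2)]\chi\bigr).$$
Combining with (h1) via the first Rule of Consequence, it suffices to show, under the scope of $[c(b)]$, that $\cf{B\cdot vs, mem}\wedge P$ implies each of $[\Strue\,?][c(s1)]\chi$ and $[\Sfalse\,?][c(s2)]\chi$. Each conjunct I will handle by a case split on the Bool value $B$, using an (implicit) Bool domain axiom yielding $@_\Strue(B)\vee @_\Sfalse(B)$. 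In the matching case $@_\Strue(B)$ for $[\Strue\,?]$, I will invoke $(Bridge)$ from Lemma \ref{lemma:bridge} to substitute $B$ by $\Strue$ inside $\cf{B\cdot vs, mem}$, obtaining $\cf{\Strue\cdot vs, mem}$; axiom $(A?)$ then passes the test as $[\Strue\,?]\cf{vs, mem}$. Hypothesis (h4) supplies $[\Strue\,?]P$, and the rigidity of $@_\Strue(B)$ (derivable from $(Back)$ by contraposition, since $(Back)$ forbids a $\sigma$-successor from satisfying a new $@_j\psi$ unless the current world already does) supplies $[\Strue\,?]@_\Strue(B)$. Conjoining these three and applying (h2) inside the test box via the first Rule of Consequence yields $[\Strue\,?][c(s1)]\chi$. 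In the mismatched case $@_\Sfalse(B)$, axiom $(A\neg ?)$ with $v=B$, $v'=\Strue$ (using $@_B(\neg\Strue)$, which follows from $@_\Sfalse(B)$ together with the Bool axiom $\Strue\lra\neg\Sfalse$) gives $[\Strue\,?]\bot$, which implies anything. The $[\Sfalse\,?][c(s2)]\chi$ branch is symmetric.

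The main obstacle is precisely this bookkeeping through the test box: all three ingredients of the antecedent of (h2) (resp.\ (h3)) — the stripped configuration, the invariant $P$, and the rigid case marker $@_\Strue(B)$ (resp.\ $@_\Sfalse(B)$) — must survive the test step so that they can be recombined inside the box and fed to the corresponding hypothesis. Hypothesis (h4) is precisely what makes $P$ survive; the case marker's survival relies on a small rigidity lemma for $@_j$-formulas with constant nominals, derivable from $(Back)$ and $(Agree)$; and the substitution of $B$ by the constant nominal in the configuration is justified by Lemma \ref{lemma:bridge}. This is where the proof ceases to be pure modal manipulation and becomes genuinely hybrid-logical, and it is also where the non-standard hypothesis (h4) earns its keep.
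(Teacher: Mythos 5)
Your items~1 and~2 are correct and coincide with the paper's (very terse) argument, and your overall plan for the Rule of Conditional --- decompose via $(Dif)$, $(A;)$, $(A\cup)$, push everything through the test with $(A?)$ and $(A\neg?)$, carry $P$ across with (h4) and the case marker with a rigidity lemma $@_k\varphi\to[\alpha]@_k\varphi$, then feed (h2)/(h3) --- is also the paper's plan. Your derivation of that rigidity from the contrapositive of $(Back)$ together with $(SelfDual)$ is sound, and arguably cleaner than the paper's route through $[\alpha]\top$ and Proposition~\ref{lemma:nomConj}.

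There is, however, a genuine gap at the pivotal step where you pass from $\cf{B\cdot vs,mem}$ to $\cf{\Strue\cdot vs,mem}$. You appeal to $(Bridge)$, but $(Bridge)$ reads $\sigma(\ldots,j,\ldots)\wedge @_j\phi\to\sigma(\ldots,\phi,\ldots)$: it replaces a \emph{nominal} argument by a formula true at the world that nominal denotes. You need the converse --- replacing the formula $B$ by the constant nominal $\Strue$ --- and the converse is not valid: $\cf{B\cdot vs,mem}$ only asserts that \emph{some} successor of sort $\SVal$ satisfies $B$, and if $B$ happens to hold at both the $\Strue$-world and the $\Sfalse$-world (e.g.\ $B=\Strue\vee\Sfalse$), the witness may be the $\Sfalse$-world even though $@_{\Strue}B$ holds, so $\cf{\Strue\cdot vs,mem}$ can fail. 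For the same reason your outer case split on $@_{\Strue}(B)\vee @_{\Sfalse}(B)$ is drawn along the wrong axis (it splits on where $B$ holds rather than on which world the witness is, and on its own it is not even a theorem when $B$ is unsatisfiable); moreover $@_B(\neg\Strue)$ is not a well-formed formula, since $@$ is only defined for nominals. The paper's repair is to perform the true/false dichotomy \emph{at the witness world}: from $(B1)$ and $(Intro)$ one has $\vdash B\lra(\Strue\wedge @_{\Strue}B)\vee(\Sfalse\wedge @_{\Sfalse}B)$ on sort $\SBool$; pushing this through the operators by monotonicity and distribution of $\sigma$ over $\vee$, and then extracting the $@$-conjuncts with Proposition~\ref{lemma:nomConj}(i1), yields $\vdash\cf{B\cdot vs,mem}\to(\cf{\Strue\cdot vs,mem}\wedge @_{\Strue}B)\vee(\cf{\Sfalse\cdot vs,mem}\wedge @_{\Sfalse}B)$, a disjunction whose branches correctly identify the constant the witness denotes. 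With that formula in place of your $(Bridge)$ step, the remainder of your argument goes through essentially as you describe it.
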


\begin{proof}
In the sequel we shall mention the sort of a formula only when it is necessary. 
\begin{itemize}
\item[1.] Rule of Consequence follows easily by $(UG)$.
\item[2.] Rule of Composition follows easily by $(UG)$ and $(CStmt)$.
\item[3.] Rule of Conditional.  Since $B$ is a formula of sort $Bool$, using the axiom $(B1)$ and the completeness theorem,  one can easily infer that 

$\vdash B\lra (\Strue\wedge @_{\Strue}B)\vee (\Sfalse\wedge @_{\Sfalse}B) $ \\
Using the fact that any operator $\sigma\in \Sigma$ commutes with disjunctions, Lemma \ref{lemma:nomConj} we get 

($\ast$) $\vdash \cf{B \cdot vs,mem}\to(\cf{\Strue\cdot vs,mem} \wedge @_{\Strue}B)\vee$

\hspace*{3cm} $(\cf{\Sfalse\cdot vs,mem} \wedge @_{\Sfalse}B)$ 

\medskip

Now we prove that 

$\vdash \cf{\Strue \cdot vs,mem}\wedge @_{\Strue}B\to[(\Strue?;c(s1))\cup (\Sfalse;c(s2))]\chi$.

Note that $\vdash @_{\Strue}(\neg\Sfalse)$, so we use $(A?)$ and $(A\neg?)$ as follows: 

$\vdash \cf{\Strue\cdot vs,mem} \wedge @_{\Strue}B\to 
\cf{\Strue\cdot vs,mem} \wedge @_{\Strue}B\wedge @_{\Strue}(\neg\Sfalse)$

$\vdash \cf{\Strue\cdot vs,mem} \to [\Strue ?]\cf{vs,mem}$
 
$\vdash \cf{\Strue\cdot vs,mem} \wedge @_{\Strue}(\neg\Sfalse)\to [\Sfalse ?]\bot$

Next we prove that 

($@[]$) $\vdash @_k\varphi\to [\alpha]@_k\varphi$

for any  formulas $\alpha$, $\varphi$ and  nominal $k$ of appropriate sorts.  Note that $\vdash [\alpha]\top$ so,  using  Lemma \ref{lemma:nomConj}, we have the following chain of inferences:

$\vdash @_k\varphi\to @_k\varphi\wedge [\alpha]\top$

$\vdash @_k\varphi\wedge [\alpha]\top\to [\alpha]@_k\varphi$

and $(@[])$ easily follows.

Consequently 
$\vdash @_{\Strue}B\to [\Strue?]@_{\Strue}B$

Since dual operators $\sigma^{\mb}$ for $\sigma\in \Sigma$ commutes with conjunctions, using also  (h4)  we get 

$\vdash \cf{\Strue \cdot vs,mem}\wedge P\wedge @_{\Strue}B\to([\Strue?](\cf{vs,mem} \wedge P\wedge @_{\Strue}B)) \wedge [\Sfalse ?]\bot$

By (h2) and $(K)$ it follows that 

 $\vdash \cf{\Strue \cdot vs,mem}\wedge \wedge P @_{\Strue}B\to[\Strue?;c(s1)]\chi  \wedge [\Sfalse ?]\bot$ 
 
Since $\bot\to [c(s2)]\chi$, and using $(A\cup)$ we proved     
     
$\vdash \cf{\Strue \cdot vs,mem}\wedge P\wedge @_{\Strue}B\to[(\Strue?;c(s1))\cup (\Sfalse ?;c(s2))]\chi$.

In a similar way, we get 

$\vdash \cf{\Sfalse \cdot vs,mem}\wedge P\wedge @_{\Sfalse}B\to[(\Strue?;c(s1))\cup (\Sfalse ?;c(s2))]\chi$.

By ($\ast$)  we infer 

$\vdash \cf{B \cdot vs,mem}\to [(\Strue?;c(s1))\cup (\Sfalse ?;c(s2))]\chi$

Using $(K)$ and $(Dif)$ we get the conclusion. 
\end{itemize}
\vspace*{-0.4cm}
\end{proof}
\vspace*{-0.2cm}
Note that our Rule of Conditional requires two more hypotheses, (h1) and (h4) than the inspiring rule in Hoare-logic.
(h1) is needed because language expressions are no longer identical to formulas and need to be evaluated; in particular this allows for expressions to have side effects.
(h4) is useful to carry over extra conditions through the rule; note that (h4) holds for all $@_j\varphi$ formulas.

Similarly, the Rule of Iteration needs to take into account the evaluation steps required for
evaluating the condition. Moreover, since assignment is now handled by a forwards-going operational
rule, we require existential quantification over the invariant to account for the values of the
program variables in the memory, and work
with instances of the existentially quantified variables.

\begin{proposition}[Rule of Iteration] \label{lemma:while}
     Let
    $B$, $vs$, $mem$, and $P$ be formulas with variables over $\ora{x}$, where $\ora{x}$ is a set of state variables.
If there exist substitutions $\ora{x_{init}}$ and $\ora{x_{body}}$ for the variables of $\ora{x}$ such that:

(h1) $\vdash \phi \to [c(b)] (\cf{B\cdot vs, mem}\wedge P)[\ora{x_{init}} / \ora{x}]$, 

(h2) $\vdash \cf{vs, mem}\wedge P \wedge @_{\Strue}(B)\to [c(s); c(b)] (\cf{B\cdot vs, mem} \wedge P)[\ora{x_{body}} / \ora{x}]$

(h3) $\vdash P\to [\alpha] P$ for any formula $\alpha$ of sort $CtrlStack$

\noindent  then $\vdash \phi \to [c(\Swhile b \Sdo s)] \exists \ora{x} \cf{vs, mem}\wedge P \wedge @_{\Sfalse}(B)$. 
 \end{proposition}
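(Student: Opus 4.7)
The plan is to rewrite $c(\Swhile b \Sdo s)$ as $c(b); (\Strue?; c(s); c(b))^*; \Sfalse?$ via $(Dwhile)$ and then apply the iterated Rule of Composition from Proposition~\ref{lem:adhoare}(2) around the loop invariant $I := \exists \ora{x}\,(\cf{B\cdot vs, mem} \wedge P)$. The four intermediate goals to discharge are: (a) $\phi \to [c(b)] I$, (b) $I \to [\Strue?; c(s); c(b)] I$, (c) $I \to [(\Strue?; c(s); c(b))^*] I$, and (d) $I \to [\Sfalse?] \exists \ora{x}\,(\cf{vs, mem} \wedge P \wedge @_{\Sfalse}(B))$; chaining them gives the conclusion.

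Step (a) combines (h1) with the provable instance $\psi[\ora{x_{init}}/\ora{x}] \to \exists \ora{x}\,\psi$ (the contrapositive of $(Q2)$ via $(SelfDual)$), and propagates the implication across $[c(b)]$ by $(UG)$ and $(K)$. Step (d) uses the tautology $B \lra (\Strue \wedge @_{\Strue}B) \vee (\Sfalse \wedge @_{\Sfalse}B)$ provable from $(B1)$ together with Proposition~\ref{lemma:nomConj} to split $\cf{B\cdot vs, mem}$ into two boolean cases: in the $\Strue$-branch, $(A\neg?)$ yields $[\Sfalse?]\bot$ so the implication is vacuous; in the $\Sfalse$-branch, $(A?)$ plus (h3) and the auxiliary fact $(@[])$ already established in the proof of Proposition~\ref{lem:adhoare} carry $\cf{vs, mem}$, $P$ and $@_{\Sfalse}(B)$ across the test, after which existential introduction on the right and then on the left yields the desired implication.

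Step (b) is the heart of the argument and the main obstacle. After the same boolean split, the $\Sfalse$-branch collapses by $(A\neg?)$, while the $\Strue$-branch uses $(A?)$, $(@[])$ and (h3) to pass $[\Strue?]$, producing $\cf{vs, mem} \wedge P \wedge @_{\Strue}(B)$; (h2) then fires and gives $[c(s); c(b)](\cf{B\cdot vs, mem} \wedge P)[\ora{x_{body}}/\ora{x}]$. A further existential introduction on the right yields $[\Strue?; c(s); c(b)] I$, and then Proposition~\ref{lemma:nomConj}(i3) --- applicable because $\ora{x}$ does not occur in the concrete program $\Strue?; c(s); c(b)$ --- pushes the existential through the dual modality on the left, delivering $I \to [\Strue?; c(s); c(b)] I$. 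Step (c) is then immediate: $(UG)$ lifts this implication inside $[(\Strue?; c(s); c(b))^*](\cdot)$, and one application of $(A\textit{Ind})$ with $\gamma := I$ discharges the goal. Composing (a), (c), (d) via Proposition~\ref{lem:adhoare}(2) and rewriting through $(Dwhile)$ finishes the proof. The delicate point throughout is the bookkeeping of the existentially quantified state variables $\ora{x}$: they must be introduced on the right side of the box only after (h2) has fired, pushed through the modality only via Proposition~\ref{lemma:nomConj}(i3), and carried across the control primitives using (h3) which is the only hypothesis guaranteeing invariance of $P$ (itself possibly depending on $\ora{x}$) through pure control-flow steps.
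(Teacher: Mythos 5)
Your proof is correct and follows essentially the same route as the paper's: the same invariant $\exists \ora{x}(\cf{B\cdot vs,mem}\wedge P)$, the same boolean case split via $(B1)$ with $(A?)$/$(A\neg?)$, the same use of $(@[])$, (h3), (h2), the contrapositive of $(Q2)$, $(UG)$ and $(A{\it Ind})$, finishing through $(Dwhile)$. Your packaging into four composition steps (a)--(d) chained by the iterated Rule of Composition, and your use of Proposition~\ref{lemma:nomConj}(i3) in place of the $\forall/\exists$ distribution law to discharge the existential on the left of the body step, are only cosmetic reorganizations of the paper's argument.
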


\begin{proof}
Denote $\theta := \cf{B\cdot vs, mem}\wedge P$ and 
 $\theta_I := \exists \ora{x}\theta$. We think of $\theta_I$ as being the invariant of $\Swhile b \Sdo s$.
Note that, using the contraposition of $(Q2)$ and  (h1) we infer that 

(c1) $\vdash \phi\to [c(b)]\theta_I$

\noindent In the following we firstly prove that

(c2) $\vdash \theta_I\to [\alpha]\theta_I$,\\ 
where $\alpha = \Strue ?;c(s);c(b)$. Since 

$\vdash B\lra (\Strue\wedge @_{\Strue}B)\vee (\Sfalse\wedge @_{\Sfalse} B)$  

\noindent it follows that

 $\vdash \theta \to (\cf{\Strue \cdot vs,mem}\wedge P\wedge @_{\Strue} B)\vee  (\cf{\Sfalse\cdot vs,mem}\wedge P\wedge @_{\Sfalse} B)$
 
\noindent By $(A?)$, (h3) and ($@[]$)  (from the proof of Proposition \ref{lem:adhoare}) we infer
 
 $\vdash \cf{\Strue \cdot vs,mem}\wedge P\wedge @_{\Strue} B\to [true?](\cf{vs,mem}\wedge P\wedge @_{\Strue} B)$
 
 and, by (h2)
 
 $\vdash \cf{\Strue \cdot vs,mem}\wedge P\wedge @_{\Strue} B)\to [\alpha]\theta [\ora{x_{body}}/\ora{x}]$
 
 Since $\vdash @_{\Sfalse}(\neg\Strue)$, by  $(A\neg?)$ we get 

 $\vdash \cf{\Sfalse \cdot vs,mem}\wedge  @_{\Sfalse}(\neg\Strue)\to [\Strue?]\bot$, so 
 
  $\vdash \cf{\Sfalse \cdot vs,mem}\wedge P\wedge @_{\Sfalse} B)\to [\alpha]\theta [\ora{x_{body}}/\ora{x}]$
  
\noindent As consequence  $\vdash  \theta \to [\alpha]\theta [\ora{x_{body}}/\ora{x}]$ and, using the contraposition of $Q_2$,\\ we infer that $\theta\to [\alpha]\theta_I$. We use now the fact that 
 
$\vdash\forall x(\varphi(x)\to \psi)\to (\exists x \varphi(x)\to\psi)$ if $x$ does not appear in  $\psi$,

\noindent which leads us to $\vdash \theta_I\to[\alpha]\theta_I$. Using $(UG)$ we get $\vdash [c(b);\alpha^*] (\theta_I\to [\alpha]\theta_I)$. 

By (c1) it follows that  

$\vdash \phi\to ([c(b)]\theta_I \wedge ([c(b);\alpha^*] (\theta_I\to [\alpha]\theta_I)) $

Using the induction axiom, $(UG)$, $(K)$ and the fact that the dual operators commutes with conjunctions, we get 

$\vdash ([c(b)]\theta_I \wedge ([c(b);\alpha^*] (\theta_I\to [\alpha]\theta_I))\to [c(b);\alpha^*]\theta_I$

\noindent so $\vdash \phi\to [c(b);\alpha^*]\theta_I$, which proves the invariant property of $\Swhile b \Sdo s$.  

To conclude, so far we proved

$\vdash \phi\to [c(b);\alpha^*] \exists \ora{x}\theta$

We can safely assume that the state variables from $\ora{x}$ do not appear in $\phi$, $b$

Note that $c(\Swhile b \Sdo s)\lra c(b);\alpha^*; \Sfalse ?$
  
As before, 
  
$\vdash \theta \to (\cf{\Strue \cdot vs,mem}\wedge P\wedge @_{\Strue} B)\vee  (\cf{\Sfalse\cdot vs,mem}\wedge P\wedge @_{\Sfalse} B)$  

Using again $(A?)$ and $(A\neg?)$ we have that 

$\vdash \cf{\Sfalse\cdot vs,mem}\to [\Sfalse ?]\cf{vs,mem}$

$\vdash \cf{\Strue\cdot vs,mem}\wedge @_{\Strue}(\neg\Sfalse)\to [\Sfalse ?]\bot$

\noindent It follows that 

$\vdash \theta \to [\Sfalse ?](<vs, mem>\wedge P\wedge @_{\Sfalse}B)$ so, using the properties of the existential binder 


$\vdash \exists\ora{x}\theta \to \exists\ora{x}[\Sfalse ?](<vs, mem>\wedge P\wedge @_{\Sfalse}B)$

Since the state variables from $\ora{x}$ do not appear in $\Sfalse ?$,  by Lemma \ref{lemma:nomConj} it follows that

$\vdash \exists\ora{x}[\Sfalse ?](<vs, mem>\wedge P\wedge @_{\Sfalse}B)\to$

\hfill$ [\Sfalse ?] \exists\ora{x}(<vs, mem>\wedge P\wedge @_{\Sfalse}B)$

We can finally obtain the intended result:

$\vdash \phi\to [c(b);\alpha^*;\Sfalse?] \exists\ora{x}(<vs, mem>\wedge P\wedge @_{\Sfalse}B)$
\end{proof}

\paragraph{Proving a program correct. }
Let us now exhibit proving a program using the operational semantics and the Hoare-like rules above.  Consider the program:
\begin{lstlisting}[language=C]
s := 0; i := 0;
while ++ i <= n do s := s + i;
\end{lstlisting}

Let $\Spgm$ stand for the entire program. We want to prove that if the initial value of $n$ is any natural number, then the final value of $s$ is the sum of numbers from $1$ to $n$.
Formally,

$\cf{vs,\Sset(\Smem,n,vn}\to{}$

\hfill $[c(\Spgm)] \cf{vs, \Sset(\Sset(\Sset(\Smem,n,vn},s,vn*(vn+1)/2),i,vn+1))$

Let $\SB$ stand for $++i <= n$ and $\SC$ stand for $s := s + i$.
By applying the axioms above we can decompose $\Spgm$ as
\vspace*{-0.2cm}
\begin{center}
$c(pgm) \lra c(0) ; \Sasgn(s) ; c(0) ; \Sasgn(i) ; c(\Swhile \SB \Sdo \SC)$
\end{center}
\vspace*{-0.2cm}
Similarly, $c(\SB) \lra c(++i);c(n);\Sleq$ and $c(\SC) \lra c(s);c(i);\Splus;\Sasgn(s)$.

We have the following instantiations of the axioms:

\noindent $\cf{vs,\Sset(\Smem,n,vn)}\to [c(0)]\cf{0\cdot vs,\Sset(\Smem,n,vn)}$ 
 \hfill $Aint$

\noindent $\cf{0\cdot vs,\Sset(\Smem,n,vn)} \to [asgn(s)]\cf{vs,\Sset(\Sset(\Smem,n,vn), s,0)}$  
\hfill $Aasgn$

\noindent $\cf{vs,\Sset(\Sset(\Smem,n,vn),s,0)} \to [c(0)]\cf{0\cdot vs, \Sset(\Sset(\Smem,n,vn),s,0)}$
 \hfill $Aint$

\noindent $\cf{0\cdot vs, \Sset(\Sset(\Smem,n,vn}, s,0))$

\hfill ${}\to[asgn(i)]
\cf{vs, \Sset(\Sset(\Sset(\Smem,n,vn), s,0), i,0)}$
\hfill  $Aasgn$

And by applying the Rule of Composition we obtain:

\noindent(1) $\cf{vs,\Sset(\Smem,n,vn})$

\hfill $\to[c(0) ; \Sasgn(s) ; c(0) ; \Sasgn(i)]\cf{vs, \Sset(\Sset(\Sset(\Smem,n,vn), s,0), i,0)}$

We now want to apply the Rule of Iteration. First let us handle the condition.
Similarly to the ``stepping'' sequence above,
we can use instances of (A++), (Aid), (Aleq), and the Rule of Composition to chain them to obtain:

\noindent $\cf{vs, \Sset(\Sset(\Sset(\Smem,n,vn), s,0), i,0)}$

\hfill $\to [c(\SC)]\cf{(1\leq vn)\cdot vs, \Sset(\Sset(\Sset(\Smem, s,0), i,1),n,vn)}$

Let $\mathbf{x} = vi$,
$B = vi\leq vn$,
${\it vs} = vs$,
${\it mem} = \Sset(\Sset(\Sset(\Smem, s,(vi-1)*vi/2), i,vi),n,vn)$,
$P = @_{\Strue}(vi\leq vn+1)$.
For $\mathbf{x_{\bf init}} = 1$ we have that 
$B[1/vi] = 1\leq vn$,
${\it mem}[1/vi] = \Sset(\Sset(\Sset(\Smem, s,(1-1)*1/2), i,1),n,vn)$,
$P[1/vi] = @_{\Strue}(1\leq vn+1)$.
Using that $(1-1)*1/2 \lra 0$ and $1\leq vn+1$ we obtain

\noindent(2) $\cf{vs, \Sset(\Sset(\Sset(\Smem,n,vn), s,0), i,0)}\to [c(\SB)](\cf{B\cdot {\it vs}, {\it mem}}\wedge P)[1/vi]$


Now, we can again use instances of (Aid), (Aid), (Aplus), (Aasgn), (AMem), (A++), (AId), (Aleq),
and the Rule of Composition to derive

\noindent $\cf{vs, \Sset(\Sset(\Sset(\Smem, i,vi),n,vn), s,(vi-1)*vi/2)}\to[c(\SC);c(\SB)]$ 

\noindent \hfill $\cf{(vi+1 \leq vn)\cdot vs, \Sset(\Sset(\Sset(\Smem, s,vi*(vi+1)/2, i,vi+1),n,vn)}$

By applying equivalences between formulas on naturals, the above leads to

\noindent $\cf{vs, \Sset(\Sset(\Sset(\Smem, i,vi),n,vn), s,(vi-1)*vi/2)}$

\noindent \hfill $\to[c(\SC);c(\SB)]\cf{B\cdot vs, mem}[vi+1/vi]$

\noindent Using Proposition \ref{lemma:nomConj} $(i2)$ and the fact that $vi \leq vn\lra vi+1 \leq vn+1$, we obtain

\noindent(3)$\cf{B\cdot vs, mem} \wedge P \wedge @_{\Strue}(B)$

\noindent \hfill $\to[c(\SC);c(\SB)](\cf{B\cdot vs, mem} \wedge P)[vi+1/vi]$

\medskip

Now using the Rule of Iteration with (2) and (3) we derive that

\noindent $\cf{vs, \Sset(\Sset(\Sset(\Smem,n,vn), s,0), i,0)}$

\noindent \hfill $\to[c(\Swhile \SB \Sdo \SC)]\exists vi.\cf{B\cdot vs, mem} \wedge P \wedge @_{\Sfalse}(B)$

By arithmetic reasoning, $\vdash ({\Sfalse}\to vi\leq vn) \lra ({\Strue} \to vn+1\leq vi)$, hence $\vdash @_{\Sfalse}(vi\leq vn) \lra @_{\Strue}(vn+1\leq vi)$.
Moreover, $@_{\Strue}(vn+1\leq vi)\wedge @_{\Strue}(vi\leq vn+1) \lra @_{\Strue}(vn+1\leq vi\wedge vi\leq vn+1)$ which by arithmetic reasoning is equivalent to $@_{\Strue}(vi =_{\it Nat} vn+1)$, which by (I1) is equivalent to $vi \lra vn+1$ which allows us to substitute $vi$ by $vn+1$ and eliminate the quantification, leading to 
\vspace*{-0.2cm}
 \begin{center}
 $\exists vi.\cf{vs, mem} \wedge P \wedge @_{\Sfalse}(B) \lra \cf{vs, mem}[vn+1/vi],
\mbox{ hence,}$
 \end{center}
 \vspace*{-0.2cm}
\noindent(4) $\cf{vs, mem'}\to[c(\Swhile \SB \Sdo \SC)]\cf{ vs, mem''}$\\
where $mem'' = \Sset(\Sset(\Sset(\Smem, s,vn*(vn+1)/2),
 i,vn+1), n,vn)$,\\
\hspace*{1cm}$mem' = \Sset(\Sset(\Sset(\Smem,n,vn), s,0), i,0)$.

Using the Rule of Composition on  (1) and (4) we obtain our goal.

\section{Conclusions and related work}\label{sec:final}

We defined a general many-sorted hybrid polyadic modal logic that is sound and complete with respect to the usual modal semantics. From a theoretical point of view, we introduced nominal constants and we restricted the application of the satisfaction operators to nominals alone. We proved that the system is sound and complete and we also investigated the completeness of its pure axiomatic expansions. Given a concrete language with a concrete SMC-inspired operational semantics, we showed how to define a corresponding (sound and complete) logical system and we also proved (rather general) results that allow us to perform Hoare-style verification.  
Our approach was to define the weakest system that allows us to reach our goals. 

There is an abundance of research literature on hybrid modal logic, we refer to \cite{hand} for a comprehensive overview. Our work was mostly inspired by \cite{pureax,hybtemp,goranko,goranko2}, where a variety of hybrid modal logics are studied in a mono-sorted setting. We need to make a comment on our system's expressivity: the strongest  hybrid language employs both the existential binder and satisfaction operator  for state variables (i.e. $@_x$ with $x\in {\rm SVAR}$). Our systems seems to be weaker, but the exact relation will be analyzed elsewhere.

Concerning hybrid modal systems in many-sorted setting, we refer to  \cite{platzerhybrid,rosulics}. The system from \cite{platzerhybrid} is built upon differential dynamic logic, while the one from \cite{rosulics} is equationally developed, does not have nominals and satisfaction operators, the strong completeness being obtained  in the presence of a stronger operator called {\em definedness} (which is the modal global operator). Note that, when the satisfaction operator is defined on state variables, the global modality is definable in the presence of the universal binder. However, we only have the satisfaction operator defined on nominals, so, again, our system seems to be weaker.

There are many problems to be addressed in the future, both from theoretical and practical point of view. We should definitely analyze the standard translation and clarify the issues concerning expressivity; we should  study the Fischer-Ladner closure and analyze completeness w.r.t. standard models from the point of view of dynamic logic; of course we should analyze more  practical examples and even employ automatic techniques.

To conclude, the analysis of  hybrid modal logic in a many-sorted setting leads us to a general system, that is   theoretically   solid and practically flexible enough for our purpose. We were able to specify a programming language, to define its operational semantics and to perform Hoare-style verification, all within the same deductive system. Modal logic proved to be, once more, the right framework and  in the future we hope to take full advantage of its massive development.

\newpage

\end{document}